\definecolor{darkred}{rgb}{0.5, 0, 0}
\definecolor{darkgreen}{rgb}{0, 0.5, 0}
\definecolor{darkblue}{rgb}{0,0,0.5}
\theoremstyle{plain}
\newtheorem{theorem}{Theorem}
\newtheorem{claim}[theorem]{Claim}
\newtheorem{lemma}[theorem]{Lemma}
\newtheorem*{lemma*}{Lemma}
\newtheorem{corollary}[theorem]{Corollary}
\newtheorem{openquestion}{Open Question}
\newtheorem{openproblem}[openquestion]{Open Problem}
\newtheorem{definition}[theorem]{Definition}
\newtheorem{observation}[theorem]{Observation}
\newtheorem{remark}[theorem]{Remark}
\newcommand{\customthmname}{}
\newtheorem*{customthm*}{\customthmname}
\newcommand{\bfA}{\mathbf A}
\newcommand{\bfC}{\mathbf C}
\newcommand{\clsACC}{\bfA\bfC\bfC}
\newcommand{\FunctionName}[1]{\textup{\textsc{#1}}}
\newcommand{\ExactlyN}{\FunctionName{ExactlyN}}
\newcommand{\Equality}{\FunctionName{Equality}}
\newcommand{\AP}[1]{#1{\text -}\mathrm{AP}}
\newcommand{\threeAP}{\AP{3}}
\newcommand{\kAP}{\AP{k}}
\newcommand{\PP}[2]{#1{\text -}\mathrm{PP}_{#2}}
\newcommand{\PPcc}[3]{(#1, #2){\text -}\mathrm{PP}^{\mathrm{cc}}_{#3}}
\newcommand{\vecPP}[2]{#1\text{-}\mathrm{vecPP}_{#2}}
\newcommand{\vecPPcc}[3]{(#1, #2)\text{-}\mathrm{vecPP}^{\mathrm{cc}}_{#3}}
\newcommand{\kPP}[1]{\PP{k}{#1}}
\newcommand{\kPPcc}[2][N]{\PPcc{k}{[#1]}{#2}}
\newcommand{\kPPccspecial}[2]{\PPcc{k}{#1}{#2}}
\newcommand{\kvecPP}[1]{\vecPP{k}{#1}}
\newcommand{\kvecPPcc}[2][\qd]{\vecPPcc{k}{#1}{#2}}
\newcommand{\modulo}[1]{\;(\bmod\; #1)}
\newcommand{\base}{\mathrm{base}}
\DeclareRobustCommand{\eulerian}{\genfrac<>{0pt}{}}
\title{An improved protocol for $\ExactlyN$ with more than 3 players}
\author{Lianna Hambardzumyan \thanks{The Hebrew University of Jerusalem. \texttt{lianna.hambardzumyan@mail.huji.ac.il}. Research partially supported by ISF grant 921/22} 
\and Toniann Pitassi \thanks{Columbia University. \texttt{tonipitassi@gmail.com}. Supported by NSF AF:Medium 2212136. } 
\and Suhail Sherif \thanks{LASIGE, Faculdade de Ci{\^e}ncias, Universidade de Lisboa. \texttt{suhail.sherif@gmail.com}. Funded by the European Union (ERC, HOFGA, 101041696). Views and opinions expressed are however those of the author(s) only and do not necessarily reflect those of the European Union or the European Research Council. Neither the European Union nor the granting authority can be held responsible for them.
Also supported by FCT through the LASIGE Research Unit, ref. UIDB/00408/2020 and ref. UIDP/00408/2020. Most of the work was done while the author was at Vector Institute, Toronto.}
\and Morgan Shirley \thanks{ University of Toronto. \texttt{shirley@cs.toronto.edu}. Supported by an NSERC grant.} 
\and Adi Shraibman \thanks{The Academic College of Tel Aviv-Yaffo. \texttt{adish@mta.ac.il}} 
}
\date{}
\begin{document}

\maketitle
\begin{abstract}
The $\ExactlyN$ problem in the number-on-forehead (NOF) communication setting asks $k$ players, each of whom can see every input but their own, if the $k$ input numbers add up to $N$. Introduced by Chandra, Furst and Lipton in 1983, $\ExactlyN$ is important for its role in understanding the strength of randomness in communication complexity with many players. It is also tightly connected to the field of combinatorics: its $k$-party NOF communication complexity is related to the size of the largest corner-free subset in $[N]^{k-1}$. 

In 2021, Linial and Shraibman gave more efficient protocols for $\ExactlyN$ for 3 players. As an immediate consequence, this also gave a new construction of larger corner-free subsets in $[N]^2$. Later that year Green gave a further refinement to their argument. These results represent the first improvements to the highest-order term for $k=3$  since the famous work of Behrend in 1946. In this paper we give a corresponding improvement to the highest-order term for all $k>3$, the first since Rankin in 1961. That is, we give a more efficient protocol for $\ExactlyN$ as well as larger corner-free sets in higher dimensions.

Nearly all previous results in this line of research approached the problem from the combinatorics perspective, implicitly resulting in non-constructive protocols for $\ExactlyN$. Approaching the problem from the communication complexity point of view and constructing explicit protocols for $\ExactlyN$ was key to the improvements in the $k=3$ setting. As a further contribution we provide explicit protocols for $\ExactlyN$ for any number of players which serves as a base for our improvement.
\end{abstract}
\pagebreak
\floatname{algorithm}{Protocol} \crefname{algorithm}{protocol}{protocols}
\Crefname{algorithm}{Protocol}{Protocols}

\section{Introduction}

In this paper we continue a recent line of work that seeks to apply ideas from \emph{communication complexity} to the field of \emph{additive combinatorics}. Specifically, we study the following problems:

\begin{itemize}
    \item[(i)] {\bf $k$-AP Problem:} What is the maximum size of a subset of $[N]$ that contains no (nontrivial) $k$\emph{-term arithmetic progression} ($\kAP$ for short) -- a sequence
    $x,x+\delta, x+2\delta,\ldots,x+(k-1)\delta$ for some $\delta \neq 0$?
    \item[(ii)] {\bf Corners Problem:} What is the maximum size of a subset of $[N]^k$ that contains no $k$\emph{-dimensional corner} -- a set of  $k+1$ points of the form: \[{(x_1,x_2,\ldots,x_k)},{(x_1 +\delta,x_2,\ldots,x_k)},{(x_1,x_2+\delta,\ldots,x_k)}, \ldots, {(x_1,x_2,\ldots,x_k+\delta)}\] for some $\delta \neq 0$?
\end{itemize}

Our paper is inspired by a growing body of equivalences that have been discovered between problems in additive combinatorics and communication complexity. We build on recent work that exploits these equivalences to gain new perspectives on the two main problems above. 

\begin{itemize}
\item[(i)] The $k$-AP Problem is equivalent to
the deterministic \emph{number-in-hand} (NIH) $k$-player communication complexity of the following promise version of $\Equality$: Each of the $k$ players is given an input $x_i \in [N]$ and they want to decide if their inputs are all equal under the promise that they form a $k$-term arithmetic progression.

\item[(ii)] The Corners Problem is
equivalent to the $(k+1)$-player \emph{number-on-forehead} (NOF) communication complexity of $\ExactlyN$: There are $k+1$ inputs, $x_1,\ldots,x_{k+1} \in [N]$, where Player $i$ sees all inputs except for $x_i$, and they want to decide whether or not the sum of their inputs is equal to $N$.
\end{itemize}

The main contribution of this paper is a new protocol for the $\ExactlyN$ problem that is more efficient than previously-known protocols when there are more than three players. This in turn gives a new method for constructing corner-free subsets of $[N]^k$ which improves on previous constructions for all $k > 2$.

\subsection{Background}

Computational complexity and additive/extremal combinatorics have enjoyed a rich interaction in the last fifty years.
On one side, extremal combinatorics has been critical for proving complexity lower bounds. For example, the Sunflower Lemma underlies Razborov's superpolynomial monotone circuit lower bound \cite{razborov-clique} as well as recent query-to-communication lifting theorems \cite{LovettMMPZ22}, and 
Ramsey's Theorem underlies many complexity lower bounds 
 \cite{pudlak-ramsey}. On the other
side, tools from complexity theory have been used
to resolve problems in additive/extremal combinatorics.
For example,  the recent breakthrough on the Sunflower conjecture \cite{alweiss2020improved} uses ideas behind the Switching Lemma, and the resolution of the Kakeya conjecture \cite{dvir-kakeya} and the Cap-Set Conjecture \cite{clp:capset,eg:capset} use the polynomial method from circuit complexity.
Moreover, some of the main achievements in theoretical computer science -- advances in error correcting codes, the PCP theorem, and pseudorandomness/extractors -- have rich and deep connections
with additive combinatorics~\cite{Lovett-additive}.

In this paper we continue in this tradition by studying two fundamental problems that are well-studied from both the lenses of additive combinatorics and communication complexity. We give a brief discussion  of their importance and motivations from these respective fields.

\paragraph{Additive combinatorics.}
A basic question in number theory and additive combinatorics is understanding the existence of additive structure in the natural numbers, and understanding how much of this structure is algebraic or combinatorial in nature. A remarkable early theorem from 1927 due to Van der Waerden states that for every $r$ and $k$, there exists $N$ such that any $r$-coloring of the numbers in $[N]$ contains a monochromatic $k$-term arithmetic progression. Later it was famously shown that in fact any dense enough subset of the natural numbers contains an arbitrarily large arithmetic progression. Subsequently, many generalizations and quantative versions have received a lot of attention in Ramsey theory, with Szemeredi's Theorem and the Multidimensional Szemeredi's Theorem proving that the density of $\kAP$ free sets and corner-free sets must be sub-constant. This has led to a lot of interest both in improving the density upper bounds and in finding large $\kAP$ and corner-free sets. We refer the reader to the excellent books by Tao and Vu~\cite{tao2006additive} and by Zhao~\cite{yufei2023graph} for a comprehensive treatment.

\paragraph{Communication complexity.} The additive combinatorics problems we study here, viewed through the lens of communication complexity, are essentially questions about derandomization.  The $k$-AP problem, reformulated as a communication problem, is a restriction of the $\Equality$ function, which in the NIH model is easy for randomized protocols but maximally hard for deterministic protocols. The restricted version here asks how  the deterministic complexity changes under the assumption that the inputs have an additive structure.

$\ExactlyN$ (the Corner's problem) has also been studied for the purpose of showing a separation between randomized and deterministic NOF communication complexity. Although a strong non-constructive separation is known even for $k = n^{\epsilon}$ many players~\cite{beameSeparatingDeterministicRandomized2010}, it was only recently that the first constructive separation was shown~\cite{kelley-lovett-meka}, and even then it has only been proven for $k = 3$ players.

 Even though a constructive separation is now known, $\ExactlyN$ continues to be of central importance in this line of research. This is because $\ExactlyN$ is a ``graph function'', and the strong non-constructive separation mentioned above~\cite{beameSeparatingDeterministicRandomized2010} is witnessed by most graph functions. The separating function of~\cite{kelley-lovett-meka} is surprisingly not a graph function, and their lower bound technique is not known to apply to $\ExactlyN$. New techniques developed for lower bounding the complexity of $\ExactlyN$ would then be promising to provide lower bounds when $k>3$. This would be of much interest since NOF lower bounds when $k > \log n$ would imply breakthrough $\clsACC$ circuit lower bounds \cite{bt-acc,yaoACCThresholdCircuits1990}. On the other hand, it is entirely possible that there are efficient protocols for $\ExactlyN$ that are waiting to be discovered.

\subsection{Previous bounds}
The current state-of-the-art reveals a significant difference in our
understanding of the $k$-AP problem and the Corners problem.

\paragraph{The $\kAP$ Problem.} A construction by Behrend from 1946 yields a $3$-AP-free subset of $[N]$ of size at least
${N \cdot 2^{-2 \sqrt{2} \sqrt{\log N} + o(\sqrt{\log N})}}$~\cite{behrendSetsIntegersWhich1946}.\footnote{All logarithms in this paper are base 2.}
The recent breakthrough result of Kelley and Meka \cite{kelley-meka} shows that the exponent is tight to within polynomial factors for $k=3$.

Behrend's result was extended to all $k>3$ by Rankin \cite{rankinSetsIntegersContaining1961} who obtained the following subset size lower bound: 
\[N \cdot 2^{-t 2^{(t-1)/2} \cdot {(\log N)}^{1/t} + o((\log N)^{1/t}) },\]
for $t = \lceil \log k \rceil$. Note that this matches Behrend's result (and, indeed, is the same construction) when $k=3$. The best size upper bound for $k=4$ is $N \cdot 1/(\log N)^{\Omega(1)}$~\cite{greenNewBoundsSzemeredi2017} and for $k>4$ is $N \cdot 1/(\log \log N)^{\eta}$, where $\eta = 2^{-2^{k+9}}$~\cite{gowersNewProofSzemeredi2001}.

\paragraph{The Corners Problem.} Until fairly recently, the best corner-free set construction was via a direct reduction to the $\kAP$ Problem. Ajtai and Szemer{\'e}di first gave this reduction for $k=3$~\cite{ajtaiSetsLatticePoints1974}; their proof easily generalizes to $k>3$. The reduction is very clean and yields the same \emph{density} lower bounds for the $(k-1)$-dimensional Corners Problem as for the $\kAP$ Problem -- if $[N]$ has a $\kAP$-free subset of size $N \cdot \delta$, then $[N]^{k-1}$ has a corner-free subset of size $N^{k-1} \cdot \delta$. In particular, the estimates of Behrend and Rankin can be directly applied to the Corners Problem.

Unlike the $\kAP$ problem, where for $k=3$ relatively tight bounds are known, there is a large gap between upper and lower bounds for the 2-dimensional Corners Problem. The best known upper bound is $N^2 \cdot 1/(\log \log N)^c$ for some constant $c$ by Shkredov~\cite{shkredovGeneralizationSzemerediTheorem2006}. For $k \geq 3$ the best upper bound is just $N^{k} \cdot \omega(1)$~\cite{gowers2007hypergraph}.

Recent works have improved the Ajtai-Szemer{\'e}di reduction, yielding better lower bounds for the Corners Problem, by examining the problem through a communication complexity lens. 

\paragraph{Communication complexity and improved bounds for the Corners Problem.} In 1983, Chandra, Furst, and Lipton defined the NOF model of communication and showed the equivalence between the $k$-party NOF complexity of $\ExactlyN$ and the $(k-1)$-dimensional Corners Problem~\cite{MultipartyPchandrarotocols1983}. 

Specifically,  the minimal cost of protocols for these problems is (up to a constant factor) the logarithm of the optimal solution for the closely-related \emph{coloring} version of the additive combinatorics problems in question:

\begin{itemize}
    \item[(i)] {\bf $k$-AP Problem (Coloring Version):} What is the minimum number of colors to color $[N]$ such that each color class is free of $k$-APs?
    \item[(ii)] {\bf Corners Problem (Coloring Version):} What is the minimum number of colors to color $[N]^k$ such that each color class is free of $k$-dimensional corners?
\end{itemize}

By a standard probabilistic tiling argument the coloring and subset size formulations of these problems are roughly equivalent. A $\kAP$-free subset with size $N/\delta$ implies a $\kAP$-free coloring with $\delta \cdot O(\log N)$ colors, and a similar connection holds for the corners problem. Therefore, a \emph{lower} bound on the size of a $\kAP$-free subset (resp.\ corner-free subset) is the same as an \emph{upper} bound on the $\kAP$-free coloring number (resp.\ corner-free coloring number) and consequently on the NIH complexity of $\Equality$ with a $\kAP$ promise (resp.\ the NOF complexity of $\ExactlyN$).

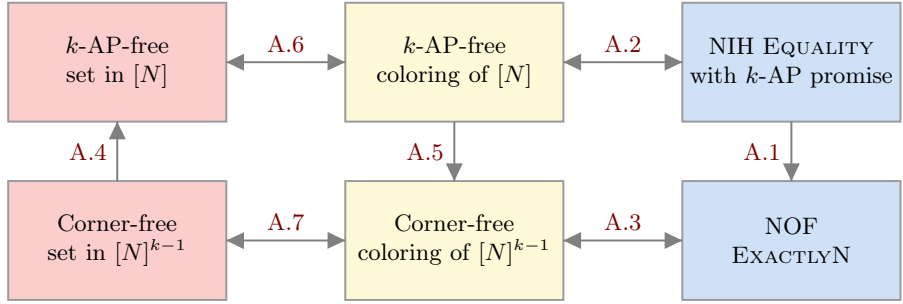
\begin{figure}[t]
    \centering

\tikzset{every picture/.style={line width=0.75pt}} 

\begin{tikzpicture}[x=0.75pt,y=0.75pt,yscale=-1,xscale=1]
\draw  [color={rgb, 255:red, 155; green, 155; blue, 155 }  ,draw opacity=1 ][fill={rgb, 255:red, 246; green, 68; blue, 68 }, fill opacity=0.27 ][line width=0.75]  (70,100) rectangle (180,160) node[midway,align=center,color=black,opacity=1,font=\footnotesize] {$\kAP$-free\\set in $[N]$};
\draw  [color={rgb, 255:red, 155; green, 155; blue, 155 }  ,draw opacity=1 ][fill={rgb, 255:red, 255; green, 236; blue, 112 }  ,fill opacity=0.27 ] (240,100) rectangle (350,160) node[midway,align=center,color=black,opacity=1,font=\footnotesize] {$\kAP$-free\\coloring of $[N]$};
\draw  [color={rgb, 255:red, 155; green, 155; blue, 155 }  ,draw opacity=1 ][fill={rgb, 255:red, 74; green, 144; blue, 226 }  ,fill opacity=0.27 ] (410,100) rectangle (520,160) node[midway,align=center,color=black,opacity=1,font=\footnotesize] {NIH $\Equality$\\with $\kAP$ promise};
\draw  [color={rgb, 255:red, 155; green, 155; blue, 155 }  ,draw opacity=1 ][fill={rgb, 255:red, 246; green, 68; blue, 68 }  ,fill opacity=0.27 ][line width=0.75]  (70,190) rectangle (180,250) node[midway,align=center,color=black,opacity=1,font=\footnotesize] {Corner-free\\set in $[N]^{k-1}$};
\draw  [color={rgb, 255:red, 155; green, 155; blue, 155 }  ,draw opacity=1 ][fill={rgb, 255:red, 255; green, 236; blue, 112 }  ,fill opacity=0.27 ] (240,190) rectangle (350,250) node[midway,align=center,color=black,opacity=1,font=\footnotesize] {Corner-free\\coloring of $[N]^{k-1}$};
\draw  [color={rgb, 255:red, 155; green, 155; blue, 155 }  ,draw opacity=1 ][fill={rgb, 255:red, 74; green, 144; blue, 226 }  ,fill opacity=0.27 ][line width=0.75]  (410,190) rectangle (520,250) node[midway,align=center,color=black,opacity=1,font=\footnotesize] {NOF\\$\ExactlyN$};

\draw [color={rgb, 255:red, 128; green, 128; blue, 128 }  ,draw opacity=1 ]   (125,190) -- node[left, font=\footnotesize] {\ref{sec:ap_set_to_corner_set}} (125,160);
\draw [shift={(125,160)}, rotate = 90] [fill={rgb, 255:red, 128; green, 128; blue, 128 }  ,fill opacity=1 ][line width=0.08]  [draw opacity=0] (8.93,-4.29) -- (0,0) -- (8.93,4.29) -- cycle;
\draw [color={rgb, 255:red, 128; green, 128; blue, 128 }  ,draw opacity=1 ]   (465,190) -- node[left, font=\footnotesize] {\ref{sec:nih_to_ap}} (465,160) ;
\draw [shift={(465,190)}, rotate = 270] [fill={rgb, 255:red, 128; green, 128; blue, 128 }  ,fill opacity=1 ][line width=0.08]  [draw opacity=0] (8.93,-4.29) -- (0,0) -- (8.93,4.29) -- cycle    ;
\draw [color={rgb, 255:red, 128; green, 128; blue, 128 }  ,draw opacity=1 ]   (180,130) -- node[above, font=\footnotesize] {\ref{sec:ap_coloring_to_ap_set}} (240,130) ;
\draw [shift={(240,130)}, rotate = 180.25] [fill={rgb, 255:red, 128; green, 128; blue, 128 }  ,fill opacity=1 ][line width=0.08]  [draw opacity=0] (8.93,-4.29) -- (0,0) -- (8.93,4.29) -- cycle    ;
\draw [shift={(180,130)}, rotate = 0.25] [fill={rgb, 255:red, 128; green, 128; blue, 128 }  ,fill opacity=1 ][line width=0.08]  [draw opacity=0] (8.93,-4.29) -- (0,0) -- (8.93,4.29) -- cycle    ;
\draw [color={rgb, 255:red, 128; green, 128; blue, 128 }  ,draw opacity=1 ]   (180,220) -- node[above, font=\footnotesize] {\ref{sec:corner_color_to_corner_set}} (240,220) ;
\draw [shift={(240,220)}, rotate = 180.5] [fill={rgb, 255:red, 128; green, 128; blue, 128 }  ,fill opacity=1 ][line width=0.08]  [draw opacity=0] (8.93,-4.29) -- (0,0) -- (8.93,4.29) -- cycle    ;
\draw [shift={(180,220)}, rotate = 0.5] [fill={rgb, 255:red, 128; green, 128; blue, 128 }  ,fill opacity=1 ][line width=0.08]  [draw opacity=0] (8.93,-4.29) -- (0,0) -- (8.93,4.29) -- cycle    ;
\draw [color={rgb, 255:red, 128; green, 128; blue, 128 }  ,draw opacity=1 ]   (410,220) -- node[above, font=\footnotesize] {\ref{sec:nof_equiv_corner_coloring}} (350,220) ;
\draw [shift={(350,220)}, rotate = 360] [fill={rgb, 255:red, 128; green, 128; blue, 128 }  ,fill opacity=1 ][line width=0.08]  [draw opacity=0] (8.93,-4.29) -- (0,0) -- (8.93,4.29) -- cycle    ;
\draw [shift={(410,220)}, rotate = 180] [fill={rgb, 255:red, 128; green, 128; blue, 128 }  ,fill opacity=1 ][line width=0.08]  [draw opacity=0] (8.93,-4.29) -- (0,0) -- (8.93,4.29) -- cycle    ;
\draw [color={rgb, 255:red, 128; green, 128; blue, 128 }  ,draw opacity=1 ]   (410,130) -- node[above, font=\footnotesize] {\ref{sec:nih_to_coloring}} (350,130) ;
\draw [shift={ (350,130)}, rotate = 360] [fill={rgb, 255:red, 128; green, 128; blue, 128 }  ,fill opacity=1 ][line width=0.08]  [draw opacity=0] (8.93,-4.29) -- (0,0) -- (8.93,4.29) -- cycle    ;
\draw [shift={(410,130)}, rotate = 180] [fill={rgb, 255:red, 128; green, 128; blue, 128 }  ,fill opacity=1 ][line width=0.08]  [draw opacity=0] (8.93,-4.29) -- (0,0) -- (8.93,4.29) -- cycle    ;
\draw [color={rgb, 255:red, 128; green, 128; blue, 128 }  ,draw opacity=1 ]   (295,190) -- node[left, font=\footnotesize] {\ref{sec:ap_coloring_corner_coloring}} (295,160) ;
\draw [shift={(295,190)}, rotate = 270] [fill={rgb, 255:red, 128; green, 128; blue, 128 }  ,fill opacity=1 ][line width=0.08]  [draw opacity=0] (8.93,-4.29) -- (0,0) -- (8.93,4.29) -- cycle    ;

\end{tikzpicture}
\caption{
The figure shows how the additive combinatorics problems are related to each other and to their communication complexity equivalents. For problems $A$ and $B$, $A \to B$ denotes $c(B) = O(c(A))$, where $c(\cdot)$ measures the problem's complexity in our context. 
} \label{fig:cc-combinatorics}

\end{figure} 
\cref{fig:cc-combinatorics} summarizes the relationships between the problems in additive combinatorics and their communication complexity reformulations. For convenience, we include the proofs of these equivalences in \cref{sec:reductions}. 

The Chandra-Furst-Lipton equivalence, combined with the Ajtai-Szemer\'edi reduction to the $\kAP$ Problem, shows that the NOF communication complexity of $\ExactlyN$ for $k=3$ is at most $2\sqrt{2} \sqrt{\log N} + o(\sqrt{\log N})$ by Behrend's construction, and for $k>3$ is at most $t 2^{(t-1)/2} (\log N)^{1/t} + o\left((\log N)^{1/t}\right)$ for $t = \lceil \log k \rceil$ by Rankin's construction.

The protocols yielded by the above equivalence are \emph{non-explicit}: we have an upper bound on their complexity but the underlying algorithms are non-constructive. This lack of explicitness comes from two places. First, the AP-free subsets of Behrend and Rankin are chosen using a generalized pigeonhole argument. Second, converting the subset size lower bounds into coloring upper bounds requires a probabilistic tiling argument. The problem for us is that such non-explicit protocols are difficult to analyze and therefore difficult to improve. Linial, Pitassi, and Shraibman remedied this situation by giving an explicit protocol for $\ExactlyN$ when $k=3$~\cite{linialCommunicationComplexityHighDimensional2019}.

Recently, Linial and Shraibman gave the first protocol that improves the highest-order term for the $\ExactlyN$ problem for $k=3$ since Behrend's original proof from 1946, yielding also an improved subset size lower bound for the 2-dimensional Corners Problem. Specifically, the constant of $2 \sqrt{2} \approx 2.828$ is improved to $2 \sqrt{\log e} \approx 2.402$~\cite{linialImprovedProtocolExactlyN2021}. This protocol was found by closely examining the explicit protocol of Linial, Pitassi, and Shraibman. Linial and Shraibman's result was further improved by Green, who lowered the constant to $2 \sqrt{2 \log(4/3)} \approx 1.822$~\cite{greenLowerBoundsCornerfree2021}. 

\subsection{Main result}
In this paper, we begin by giving an explicit protocol for $\ExactlyN$ with cost that matches the construction of Rankin. Then we identify an optimization of this protocol which we exploit to give the first improvement in the highest-order term for every constant $k$:

\begin{theorem}\label{thm:main_result}
The number-on-forehead communication complexity of $\ExactlyN$ with $k$ players is at most  
$$\left(1-\frac{c_k}{t}\right)t2^{(t-1)/2}(\log N)^{1/t} + o((\log N)^{1/t}),$$
where $t=\lceil \log k \rceil$ and $c_k$ is a constant depending on $k$.
\end{theorem}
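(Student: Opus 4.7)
The plan is to first construct an explicit protocol for $\ExactlyN$ whose cost matches Rankin's bound $t \cdot 2^{(t-1)/2}(\log N)^{1/t}$, and then exploit the explicit structure of this protocol to save a factor of $(1 - c_k/t)$ in the leading constant. The paper already signals that the explicit baseline is the new contribution enabling the optimization.

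For the explicit baseline, I would represent each player's input $x_i \in [N]$ as a $d$-digit base-$b$ number, with $d \log b \approx \log N$, obtaining a digit vector in $[b]^d$. Generalizing the Linial--Pitassi--Shraibman construction for $k = 3$, I would design the protocol so that accepting is equivalent to certifying that the $k$ digit vectors lie on a common ``Rankin sphere'': the simultaneous level set of a family of $s = 2^{t-1}$ power-sum polynomials, where $t = \lceil \log k \rceil$. Rankin's lemma guarantees that such a sphere is free of non-trivial $k$-APs, which by the standard reduction through corners gives a valid $\ExactlyN$ protocol. The protocol runs in $s$ rounds: in each round, every player announces the missing partial moment sum, which they can compute because they see all digit vectors other than their own. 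Choosing $b \approx 2^{2^{(t-1)/2}(\log N)^{1/t}}$ and optimizing $d$ accordingly recovers the Rankin cost.

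For the improvement, the key observation is that the $s$ rounds are not equally expensive: the higher-degree power sums live on ranges much larger than the lower-degree ones and dominate the communication, while the lower-degree constraints have slack. Following the strategy used by Linial--Shraibman and Green for $k = 3$, I would replace the most expensive round by a cheaper substitute --- for example a coarser quantization of the top moment, or a direct test exploiting that one player can already infer most of the answer from the other foreheads. A careful accounting should then show that this saves roughly a $1/t$ fraction of the leading cost: there are $t$ balanced terms controlling the exponent in Rankin's bound, and eliminating (or contracting) one of them re-optimizes to a prefactor $(1 - c_k/t)$, with $c_k$ quantifying how much of the slack specific to $k$ (in particular the gap between $k$ and the next power of two determining $t$) can be captured.

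The hard part will be verifying that the relaxed protocol remains correct, i.e.\ that the modified family of constraints still rules out non-trivial $k$-APs. Rankin's AP-freeness argument uses all $s$ moment constraints in a tight way, and weakening or replacing one of them calls for a fresh combinatorial argument --- most likely a pigeonhole inside the coarser fiber, or a direct analysis of the interaction between the retained moments and the compressed one, showing that any $k$-AP consistent with the compressed constraint must already be forced to be trivial by the retained ones. A secondary but more routine obstacle is the joint optimization of the base $b$, the dimension $d$, and the compression parameter so that the saving appears exactly as $c_k/t$ in the leading coefficient and all slack is absorbed into the $o((\log N)^{1/t})$ error.
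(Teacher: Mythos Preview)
Your proposal misidentifies where the saving comes from. In the paper, the improvement over Rankin is \emph{not} obtained by relaxing or compressing any of the moment/sphere constraints in the Rankin-style recursion; that recursion (Protocols~1 and~2) is used exactly as in the baseline, with no round removed or coarsened. The $(1-c_k/t)$ factor comes entirely from the very first step: the reduction from NOF $\ExactlyN$ to the vector problem $\kvecPP{1}^{cc}$. In the baseline this step costs $d$ bits (Player~1 broadcasts a shift in $\{0,1\}^d$ to avoid carries), and the paper replaces it by a genuinely NOF step costing only $(1-c_k)d$ bits (Protocol~3).

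The mechanism is this: rewrite the task as $\sum_{i<k} x_i = N - x_k$; Player~$k$ sees all of $x_1,\dots,x_{k-1}$ and can compute the full carry string of the base-$q$ addition on the left. By \cref{clm:reconstructcarry}, every other player can reconstruct the carry string from its \emph{parities} alone, so Player~$k$ need only arrange (by a shift $\delta$) that all parities are $0$. Using the \emph{centered} base-$q$ representation, the event ``all $d$ carry parities are $0$'' has probability $\bigl(\tfrac12 + \tfrac{E_{k-1}}{2(k-1)!}\bigr)^d > 2^{-d}$ over uniform inputs (this is the content of \cref{claim:realprobability} and the Euler-zigzag analysis in \cref{sec:euler}), so a standard covering argument gives a set of shifts of size $2^{(1-c_k)d+o(d)}$, where $c_k = 1 - \log\bigl(\tfrac12 + \tfrac{E_{k-1}}{2(k-1)!}\bigr)^{-1}$. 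Plugging $(1-c_k)d$ in place of $d$ into the recursion of \cref{thm:nihcomplexities} and re-optimizing $d,q$ yields the leading term $t2^{(t-1)/2}\sqrt[t]{(1-c_k)\log N} \le (1-c_k/t)\,t2^{(t-1)/2}(\log N)^{1/t}$.

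Your plan to drop or quantize one of the moment constraints runs into exactly the obstacle you flag: Rankin's AP-freeness argument is tight in the degree, and there is no known way to weaken it. More importantly, such a modification would be an NIH-level change and would improve the $\kAP$-free bound itself, which is not what the paper achieves (nor claims). The actual saving is an NOF phenomenon---it exploits that players share $k-2$ of the $k-1$ summands---and does not touch the combinatorial core of Rankin at all.
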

\begin{corollary}
The improved protocol from \cref{thm:main_result} yields a corner-free subset of $[N]^{k-1}$ of size:
\[N^{k-1} \cdot 2^{-\left(1-\frac{c_{k}}{t}\right)t2^{(t-1)/2}{(\log N)}^{1/t} + o((\log N)^{1/t})}\]
for $t=\lceil\log k \rceil$.\footnote{To get a corner-free set of $[N]^{k-1}$ we need to consider the $\ExactlyN$ problem where the inputs of $k$ player are from $[(k-1)N]$  and add up to $(k-1)N$ (see~\cref{sec:nof_equiv_corner_coloring}). This results in extra terms depending on $k$ which can be pushed to the lower order term.}
\end{corollary}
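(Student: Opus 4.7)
The plan is to combine \cref{thm:main_result} with the equivalence between the NOF complexity of $\ExactlyN$ and the corner-free coloring number of a grid, as summarized in \cref{fig:cc-combinatorics}, and then convert a coloring into a large corner-free set by pigeonhole. In more detail, I would proceed as follows.

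First, per the footnote, to recover corners in $[N]^{k-1}$ rather than in an arbitrary grid, I would set $M := (k-1)N$ and consider the $k$-player $\ExactlyN$ problem whose inputs lie in $[M]$ and whose target sum is $M$. Applying \cref{thm:main_result} with $M$ in place of $N$ yields a deterministic NOF protocol of cost
\[
c \;\leq\; \left(1-\frac{c_k}{t}\right) t\, 2^{(t-1)/2} (\log M)^{1/t} + o\bigl((\log M)^{1/t}\bigr),
\]
where $t = \lceil \log k\rceil$.

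Next, I would invoke the Chandra--Furst--Lipton-style equivalence between $\ExactlyN$ and corner-free colorings (the right-hand arrow in \cref{fig:cc-combinatorics}, proved in \cref{sec:nof_equiv_corner_coloring}). This converts the protocol into a coloring of $[M]^{k-1}$ using at most $2^c$ colors such that no color class contains a $(k-1)$-dimensional corner whose coordinates sum (together with the implicit $k$-th coordinate) to $M$. A standard pigeonhole step then produces a monochromatic subset of size at least $M^{k-1}/2^c$ that is corner-free in $[M]^{k-1} \supseteq [N]^{k-1}$ (after translating/restricting, using that the forbidden corners are translation-invariant, so a corner-free color class restricted to $[N]^{k-1}$ is still corner-free).

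Finally, I would absorb the $k$-dependent overhead into the lower-order term. Since $k$ is a constant, $\log M = \log N + \log(k-1)$ and therefore $(\log M)^{1/t} = (\log N)^{1/t} + o((\log N)^{1/t})$; similarly, $M^{k-1}/(k-1)^{k-1} = N^{k-1}$ and the factor $(k-1)^{-(k-1)}$ contributes only an $O(1)$ term in the exponent, which is negligible compared to $(\log N)^{1/t}$. Putting these estimates together gives a corner-free subset of $[N]^{k-1}$ of size
\[
N^{k-1} \cdot 2^{-\left(1-\frac{c_k}{t}\right) t\, 2^{(t-1)/2} (\log N)^{1/t} + o((\log N)^{1/t})},
\]
as claimed. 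The only genuinely delicate step is the first conversion: making sure that the target sum $M$ and the ambient grid $[M]^{k-1}$ are chosen so that the equivalence of \cref{sec:nof_equiv_corner_coloring} produces exactly the combinatorial object we want (corner-free in the standard sense on $[N]^{k-1}$), rather than a slightly different promise variant; once this is set up correctly, the rest is bookkeeping.
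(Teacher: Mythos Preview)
Your approach is exactly the one the paper has in mind---the corollary is stated without a separate proof, and the footnote together with the equivalences in \cref{sec:reductions} is the entire argument. However, you misstate the output of the equivalence in \cref{sec:nof_equiv_corner_coloring}: a cost-$c$ protocol for $k$-party $\ExactlyN$ with inputs in $[M]$ yields a corner-free coloring of $[M/(k-1)]^{k-1}$, not of $[M]^{k-1}$ (see the lower-bound direction of \cref{thm:nof_equiv_corner_coloring}). With $M=(k-1)N$ this is already a coloring of $[N]^{k-1}$, so no restriction or translation is needed, and pigeonhole immediately gives a corner-free subset of $[N]^{k-1}$ of size at least $N^{k-1}/2^c$. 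Your detour through $[M]^{k-1}$ is both unnecessary and not quite justified as written: pigeonhole on $[M]^{k-1}$ followed by naive restriction to $[N]^{k-1}$ could in principle lose almost the entire set. The averaging-over-translates idea you gesture at would repair this, but it is simpler (and is what the footnote intends) to quote the equivalence correctly in the first place.
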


This is the first improvement in the higher-order term since Rankin's 1961 construction. (Rankin's construction gives the above bound but where $c_k=0$ for all $k$.)
Similar to the recent breakthrough due to Linial and Shraibman \cite{linialImprovedProtocolExactlyN2021} and Green \cite{greenLowerBoundsCornerfree2021}, our protocol achieves a constant factor improvement, and for $k=3$ we match Green's bound.  

\begin{remark} \label{remark:annuli}
    In this paper, we are focused on improving the highest-order term in the bounds. However, we would like to highlight the work that has been done on improving the lower-order term as well. Elkin improved the lower-order term in Behrend's construction~\cite{elkinImprovedConstructionProgressionfree2011} (see also the note of Green and Wolf~\cite{greenNoteElkinImprovement2010}) and Elkin's ideas were translated to Rankin's construction by O'Bryant~\cite{obryantSetsIntegersThat2008}. Hunter~\cite{hunterCornerfreeSetsTorus2022} used similar techniques to improve the lower-order term of Green's construction. We leave applying these ideas to our new construction as an open problem (see \cref{sec:open_problems}).
\end{remark}

\paragraph{Outline of Paper.} In \cref{sec:history}, we give a history of the $\ExactlyN$ problem, including an outline of previous results based on Behrend and Rankin, which we hope helps the reader gain an intuition for the remainder of the paper. At the end of \cref{sec:history}, we give an overview of our improved upper bound. 
In \cref{sec:protocol_from_rankin} we give an explicit protocol for $\ExactlyN$ for all $k$, building heavily on Rankin's construction. 
In \cref{sec:nof_protocol_for_exactlyN}, we give our improved protocol, proving \cref{thm:main_result}.
We conclude with some open problems in \cref{sec:open_problems}. \cref{sec:reductions} contains the proofs of the equivalences given in \cref{fig:cc-combinatorics} and \cref{sec:euler} explains how to calculate the value of the constant $c_k$ in \cref{thm:main_result}. 
 \section{Overview of protocols for NOF \texorpdfstring{$\ExactlyN$}{ExactlyN}}\label{sec:history} 
The history of the $\ExactlyN$ problem begins with the paper of Chandra, Furst, and Lipton that defines the NOF communication model~\cite{MultipartyPchandrarotocols1983}. 
By establishing a connection to the Corners Problem they obtained a non-constructive protocol for $\ExactlyN$ with cost $O(\sqrt{\log N})$, beating the cost of the trivial protocol. As mentioned in the introduction, an essential step in this protocol is a reduction to a \emph{promise} instance of the $\Equality$ function in the NIH model. The reduction is outlined in detail in~\cref{sec:nih_to_ap} and is summarized below.

\paragraph{NOF \texorpdfstring{$\ExactlyN$}{ExactlyN} to \texorpdfstring{$\kAP$}{k-AP}-free coloring.} First, the players each perform a reduction that yields the values $X_1, \ldots, X_k$ where $X_i$ is known only to Player~$i$. These values are promised to be a $\kAP$ and are equal if and only if the original instance of $\ExactlyN$ evaluates to $1$. Then Player 1 announces the color of $X_1$ according to some agreed-upon $\kAP$-free coloring of $[kN]$: this is a coloring where no monochromatic subset of $[kN]$ has elements which form a non-trivial $\kAP$. Each other player then sends a single bit for whether or not the color of $X_i$ agrees with the color that Player 1 sent. They all agree if and only if $X_1, \ldots, X_k$ are all equal, as the $\kAP$ promise implies that the colors can not be the same unless $X_1, \ldots, X_k$ are a trivial $\kAP$.

As discussed in the introduction the $\ExactlyN$ problem and the Corners problem in combinatorics are equivalent. Thus the Chandra-Furst-Lipton reduction can be seen as a reduction from the Corners problem to the problem of finding $\kAP$-free colorings (see  \cref{sec:ap_coloring_corner_coloring}). This latter reduction was already known before Chandra-Furst-Lipton  connected these concepts to communication complexity (see~\cite{ajtaiSetsLatticePoints1974} for the case of $k=3$). 

\paragraph{\texorpdfstring{$\kAP$}{k-AP}-free coloring to \texorpdfstring{$\kAP$}{k-AP}-free set.}\label{par:coloring_to_set} The reduction step of the protocol described above is conceptually simple. The technical part is finding a $\kAP$-free coloring of of $[N]$ where the number of colors is minimized.\footnote{The range of integers is $[N]$, instead of $[kN]$ as in the protocol; if we assume that $k$ is a constant this will not affect much.} This number can be estimated by the density version of the coloring problem: find the  largest $\kAP$-free subset of $[N]$.

 By a standard argument these problems are equivalent: a $\kAP$-free subset with size $N/\delta$ implies a $\kAP$-free coloring with $\delta \cdot O(\log N)$ colors (for details, see \cref{sec:ap_coloring_to_ap_set}) and therefore gives a protocol with cost $\log \delta + O(\log \log N)$.  Every known subset construction requires $\delta$ to be superlogarithmic in $N$, in which case the $O(\log \log N)$ term is negligible. Indeed, for $k=3$ we know that superlogarithmic $\delta$ is necessary~\cite{kelley-meka}.
 
 In the rest of the paper we will switch freely between the coloring problems and their subset-size versions.

\subsection{\texorpdfstring{$\ExactlyN$}{ExactlyN} with 3 players} \label{subsec:overview-three-players}
By the Chandra-Furst-Lipton reduction outlined above, a construction of a $\threeAP$-free subset of $[N]$ will result in a protocol for 3-player $\ExactlyN$. Here we summarize the construction of a $\threeAP$-free subset due to Behrend~\cite{behrendSetsIntegersWhich1946}. All of the best known constructions of $\kAP$-free sets are essentially modifications of Behrend's basic framework.

Following prior work of Salem and Spencer~\cite{salemSetsIntegersWhich1942}, Behrend represents numbers in $[N]$ as vectors in $[q]^d$, where $q$ and $d$ are parameters to be chosen later subject to $q^d \geq N$. These vectors are the base-$q$ representations of numbers in $[N]$:
$$\base_{q, d}(x) := (x_0, \ldots, x_{d-1}) \in [q]^d \mbox{ such that } x = \sum_{i = 0}^{d - 1} q^i x_i.$$

The idea behind Behrend's construction is that no three vectors in $[q]^d$ that form a line can lie on the same sphere. Suppose we had the following property: if three numbers $x, y, z \in [N]$ form a $\threeAP$, then their corresponding vectors $\base_{q, d}(x), \base_{q, d}(y), \base_{q, d}(z)$ are in a line. Then one could choose the preimage of any sphere in $[q]^d$ to be the $\threeAP$-free set -- no three distinct vectors in this sphere could be in a line, and so no three distinct numbers in the preimage could form a non-trivial $\threeAP$.

Unfortunately, a $\threeAP$ in $[N]$ does not always correspond to a line in $[q]^d$. This is because of the possibility of carries: as a simple example, 9, 12, and 15 are a $\threeAP$ but the vectors $(0, 9), (1, 2), (1, 5) \in [10]^2$ are not in a line. The strategy that Behrend takes is to avoid carries by limiting the $\ell_\infty$ norm of the vectors. Under this restriction there can never be any carries and so the desired property holds!

We now outline the complete argument. For $\ell \in [dq^2]$, define $A_\ell$ as the set of $x \in [N]$ such that each coordinate of $\base_{q, d}(x)$ has value less than $q/2$ and $\|\base_{q, d}(x)\|_2^2 = \ell$. Then $A_\ell$ is $\threeAP$-free. Furthermore, $\sum_\ell |A_\ell| = (q/2)^d$, so, by pigeonhole principle, for some value of $\ell$ we must have $|A_\ell| \geq \frac{(q/2)^d}{dq^2}$. To optimize this expression we set $d = \sqrt{2 \log N}$ and $q = N^{1/d}$. This gives us a $\threeAP$-free set of size at least ${N \cdot 2^{-2 \sqrt{2} \sqrt{\log N} + o(\sqrt{\log N})}}$, which via the Chandra-Furst-Lipton reduction results in an $\ExactlyN$ protocol of cost $2 \sqrt{2} \sqrt{\log N} + o(\sqrt{\log N})$. 

\paragraph{Explicit and improved protocols.}
From Behrend's construction, the Chandra-Furst-Lipton reduction 
shows the existence of better-than-trivial protocols for $\ExactlyN$. We would like to give a more \emph{explicit} protocol, as an analysis of the details of the protocol may lead to new insights to construct better protocols (and corner-free sets). This motivaton led to the better 3-player $\ExactlyN$ protocols of Linial, Pitassi, and Shraibman~\cite{linialCommunicationComplexityHighDimensional2019}, which was followed by Linial and Shraibman~\cite{linialImprovedProtocolExactlyN2021} and Green~\cite{greenLowerBoundsCornerfree2021}.\footnote{Green's improvement is not phrased as a communication protocol, but was developed after further analyzing the Linial-Shraibman protocol.}

The first explicit protocol of \cite{linialCommunicationComplexityHighDimensional2019} had the general idea to go through the Chandra-Furst-Lipton reduction, yielding values $X_1, X_2, X_3$; player 1 will communicate the (squared) length of $\base_{q, d}(X_1)$, and the other players should agree with this length if and only if $X_1 = X_2 = X_3$. Of course, this runs up against the same carry problem as in Behrend's construction, and here we do not have the liberty of excluding some vectors, as we want this protocol to work for every possible input. Linial, Pitassi, and Shraibman remedy this by having the players explicitly communicate information about the carry. Importantly, their protocol relies on the fact that each input can be seen by two players. The cost of the Linial-Pitassi-Shraibman protocol matches the cost of the non-constructive protocol from Chandra-Furst-Lipton.

Linial and Shraibman~\cite{linialImprovedProtocolExactlyN2021} observed that with the knowledge of two of the inputs, certain carries in the base-$q$ sum of the inputs are more likely than others.  In particular, the entropy of the carry (conditioned on the information shared by certain players) is less than $d$. Linial and Shraibman give a  small-cost protocol that only works for the inputs that have the most likely carry. Then, they show how to translate the inputs on which their protocol does not work to those that do. This process uses communication equal to the entropy of the carry. The total cost of this $\ExactlyN$ protocol is $2 \sqrt{\log e} \sqrt{\log N} + o(\sqrt{\log N})$. Subsequent work of Green refined the argument of Linial and Shraibman and yields a protocol with cost $2 \sqrt{2\log \frac{4}{3}} \sqrt{\log N} + o(\sqrt{\log N})$~\cite{greenLowerBoundsCornerfree2021}.

\subsection{\texorpdfstring{$\ExactlyN$}{ExactlyN} with more than 3 players}

Ideas from Behrend's construction can be used to build a larger $\kAP$-free set for $k > 3$. Rankin was the first to give such a construction~\cite{rankinSetsIntegersContaining1961}; see also the independent rediscovery of this result by {\L}aba and Lacey for a different presentation of the proof~\cite{labaSetsIntegersNot2001}.

The key to Rankin's construction is that the line on which the three vectors fall in the intuition to Behrend's construction can be replaced with a higher-degree object as long as the number of vectors is sufficiently high. This motivates the definition of \emph{polynomial progressions}.
\begin{definition}\label{def:kpp}
A tuple of integers $(x_1,\dots,x_k) \in \mathbb{Z}^k$ is a $k$-term degree-$m$ polynomial progression (denoted $\kPP{m}$) if there is a degree-$m$ polynomial $p$ such that $\forall i \in [k],\,x_i = p(i)$.
\end{definition}
\begin{definition}\label{def:kvecpp}
A tuple of vectors over the integers $(v_1,\dots,v_k) \in (\mathbb{Z}^d)^k$ is a $k$-term degree-$m$ vector polynomial progression (denoted $\kvecPP{m}$) if there are degree-$m$ polynomials $p_j$ for each dimension $j \in [d]$ such that $\forall i \in [k],\,v_i = (p_1(i),...,p_d(i))$.

This definition can be rephrased to say that these are tuples of vectors where each dimension is a $\kPP{m}$.
\end{definition}
Note that a $\kPP{1}$ is just a $\kAP$ and a $\kvecPP{1}$ is just a sequence of vectors equally spaced on a line. Now we can update our intuition of Behrend's construction to include higher-degree progressions, and make an additional observation that will allow us to exploit this fact.
\begin{itemize}
    \item Behrend relies on the fact that no three distinct vectors on a line in $\mathbb{R}^d$ can all be on a sphere. This is the special case of a more general fact: no $2m+1$ vectors that form a $\kvecPP{m}$ are all on a sphere.
    \item If a sequence of vectors form a $\kvecPP{m}$, their squared lengths form a $\kPP{2m}$.
\end{itemize}

We begin by using the first observation to find a $\kPP{m}$-free set where $m$ is a power of two and satisfies $2m+1 \geq k$. This is done in a similar fashion to Behrend's construction: using a pigeonhole argument, choose the preimage of a large set of vectors with the same length. 

Now we can use this $\kPP{m}$-free set (call this set $S$) to find a larger $\kPP{m/2}$-free set. For each $s \in S$, add all of the vectors of squared length $s$ to our new set. The fact that this is $\kPP{m/2}$-free follows from the second observation above: any $\kPP{m/2}$ here would correspond to a $\kPP{m}$ in $S$. We repeat this process, halving the degree at each step, until we have a set with no $\kPP{1}$, i.e.\ a $\kAP$-free set.

In this outline we have omitted many details. In particular, just as in Behrend's construction vectors must be excluded from consideration based on their $\ell_\infty$ norm to avoid carries. Indeed, this exclusion is much stronger than in Behrend's construction: at the step for degree $m$, the set of allowed vectors has density exponentially small in $m$. Fortunately this deficiency is more than compensated for by the fact that vectors of many lengths, instead of simply one length, are included in the sets after the first step. 

If we set the parameters correctly at every step, Rankin's construction gives a $\kAP$-free set of size at least $N \cdot 2^{-t 2^{(t-1)/2} (\log N)^{1/t} + o((\log N)^{1/t})}$ where $t = \lceil \log k \rceil$. For $k = 3$ and $k = 4$, this matches Behrend's construction, which is expected as the construction is exactly the same. For $k \geq 5$, though, there is an improvement in the exponent of the $\log N$ term. Consequently, the cost of the protocol for $\ExactlyN$ from this construction is $t2^{(t-1)/2}{(\log N)}^{1/t} + o((\log N)^{1/t})$ for $t= \lceil \log k \rceil$.

 \subsection{Our results} \label{subsec:our_results}
Our first result gives an explicit protocol for $\ExactlyN$ with any number of players which matches the cost of the non-explicit protocol implied by Rankin. Our second result is an improved protocol for $\ExactlyN$ for more than 3 players that takes advantage of information shared by the players to improve the reduction to the NIH promise $\Equality$ problem. 
  
\paragraph{Sketch of explicit protocol} (For full details, see \Cref{sec:protocol_from_rankin}.)  The idea of this protocol is depicted in~\cref{fig:reductions}.
 As in the previous protocols, the players first locally perform the reduction to NIH $\Equality$ problem with the promise that the new values $X_1, \ldots X_k$ form a $\kAP$. Then each player computes the base-$q$ representation vector of their inputs and the problem reduces to checking vector-$\Equality$ ($\Equality$ over vectors) with the promise that the input vectors form a $\kvecPP{1}$. Next, they compute the squared length of these vectors and reduce to $\Equality$ with  $\kPP{2}$ promise. Although this promise is not as strong as the promise of being a $\kAP$, the reduction is helpful since their new inputs are much smaller than their initial inputs. The players continue by converting their new inputs into base-$q$ representation vectors again, and then computing the lengths of those vectors and so on. Thus, they keep reducing $\Equality$ with  $\kPP{m}$ promise to  vector-$\Equality$ with $\kvecPP{m}$ promise and vector-$\Equality$ with $\kvecPP{m}$ promise to $\Equality$ with  $\kPP{2m}$ promise. When reducing the vector-$\Equality$ to $\Equality$ the degree of polynomial progression in the promise doubles, but the input size decreases in each reduction. When reducing  $\Equality$ to vector-$\Equality$ the degree as well as the input size stays the same, and the input is now a vector polynomial progression which allows us to continue with the reductions.

 This process can repeat at most $\lceil \log k \rceil$ times, as when the degree $m\geq k-1$, the promise $\kPP{m}$ is trivially satisfied. At this point, the players are left to solve the $\Equality$ problem on their current inputs. So one of the players communicates the final length, and all the other players verify whether they have the same length.

 To avoid carries during the process, every time the players reduce $\Equality$ to vector-$\Equality$, they need to make sure that all the obtained vectors are small. If they are not small, one of the players computes and announces a translation which will make her vector small, referred to in this paper as the \emph{shift}. If other players need different shifts, then the vectors are not equal, and we can terminate. Otherwise, all the players shift their vectors by the same amount before computing the lengths of the vectors again. 

\paragraph{Sketch of improved protocol.} (For full details see \Cref{sec:nof_protocol_for_exactlyN}.) Recall that the goal of the players is to figure out whether $\sum_{i \in [k]} x_i = N$. The protocols that arise from previous constructions of corner-free sets involve computing the values $\base_{q, d}(x_i)$, the base-$q$ representations of the players' inputs, thus creating a vector variant of the task in $d$-dimensional space. Unfortunately, just as in the explicit protocol above, there is the possibility of \emph{carries}. Therefore, it is not necessarily the case that $\sum_{i \in [k]} \base_{q, d}(x_i)$ is equal to $\base_{q, d}\left(\sum_{i \in [k]} x_i\right)$. 

Previous protocols~\cite{linialCommunicationComplexityHighDimensional2019,linialImprovedProtocolExactlyN2021} have leveraged the NOF setting to have the players reason about the exact form of the carries. Specifically, these protocols have the players communicate information about the \emph{carry string}: the length-$d$ string representing the carries performed in the summation. We take the same approach. 

Let us rephrase the objective as figuring out whether $\sum_{i \in [k-1]} x_i = N - x_k$. Player $k$ can then look at the base-$q$ representations of the $x_i$s that they see and compute the carries required in the summation on the left-hand side of the expression. They can then convey the carry string to the other players. By adjusting the inputs accordingly, the players can end up with vectors $v_1$ to $v_{k-1}$ that actually do add up to the base-$q$ representation of the left-hand side of the expression as desired. With this strategy each entry of the carry string takes a value between 0 and $k-2$, so $d \log (k-1)$ bits of communication are required. 

We can use the information shared by the players to lower the cost of this even further: we have not yet exploited the fact that each of the first $k-1$ players know $k-2$ of the inputs in the sum. Indeed, in the view of any of the first $k-1$ players there are only two values that each coordinate of the carry string can take, and these values are consecutive. Therefore, if the $k$th player simply communicates the \emph{parity} of each coordinate of the carry string, each other player will have enough information to reconstruct the full carry string. This improves the communication to $d$ bits. 

Note that using $d$ bits to communicate the carry matches the cost of just directly reducing it to an NIH problem and then switching to base-$q$ representations in the NIH model as in the explicit protocol above (see \cref{fig:reductions}); we need one final trick to find an advantage. Let us first consider the case where $k$ is even (so we are adding an odd number of vectors). In this situation it is more likely for the parities of entries in the carry string to take value $0$, where probability is over the uniform distribution on the inputs. The idea is to use a protocol that assumes that the input is ``nice'': one where the parity-of-carry string takes the most likely value of $0$ in every coordinate. If the input is indeed nice, the players simply proceed as if the $k$th player had communicated the all-$0$ string. Otherwise, we use communication to shift the inputs so that they fulfill the assumption.

The cost of this protocol is $d(1 - \Omega(1))$ bits. The reason this is more efficient is that a larger-than-$2^{-d}$ fraction of inputs are nice, and hence (using a set-covering argument) fewer than $2^d$ possible shifts are required.

When $k$ is odd (so we are adding an even number of vectors), the fraction of nice inputs is $2^{-d}$. So the protocol as described above is more efficient only when $k$ is even. This can be rectified by considering the centered base-$q$ representations, where instead of using the digits $0,\dots,q-1$ we use the digits $\lceil -(q-1)/2 \rceil,\dots,\lfloor q/2 \rfloor$. This representation results in a larger-than-$2^{-d}$ fraction of nice inputs both when $k$ is even and when $k$ is odd.

 \section{Explicit NIH protocol for Rankin}\label{sec:protocol_from_rankin}

In this section we give an explicit protocol for the number-in-hand $\Equality$ problem with the promise that the inputs form a $\kAP$ that matches the cost of the non-explicit protocol guaranteed by Rankin's construction. As mentioned in the previous section, the general strategy of our protocol is to convert the $\kAP$ to a higher-degree polynomial progression by converting the integers into vectors, finding the squared length of those vectors (which leaves the parties again with integers), and repeating the process. Converting integers to vectors requires some care, and sidestepping potential problems in this step is the main technical contribution of this section.

Recall the definitions of $\kPP{m}$ and $\kvecPP{m}$ (\cref{def:kpp,def:kvecpp}). We define related communication tasks below. We define the following communication tasks, which are versions of the $\Equality$ problem with the promise that the inputs form either a $\kPP{m}$ or $\kvecPP{m}$.

\begin{definition}
The communication task $\kPPcc{m}$ is defined as follows. 
\begin{itemize}
    \item The input $(x_1,\dots,x_k) \in [N]^k$ is promised to be a $\kPP{m}$.
    \item The output is $1$ if $x_1 = \cdots = x_k$ (referred to as a trivial $\kPP{m}$) and $0$ otherwise.
\end{itemize} 
\end{definition}
\begin{definition}\label{def:kvecppcc}
The communication task $\kvecPPcc{m}$ is defined as follows. 
\begin{itemize}
    \item The input $(v_1,\dots,v_k) \in ([q]^d)^k$ is promised to be a $\kvecPP{m}$.
    \item The output is $1$ if $v_1 = \cdots = v_k$ (referred to as a trivial $\kvecPP{m}$) and $0$ otherwise.
\end{itemize} 
\end{definition}

We make the following observations about these tasks.

\begin{observation} \label{obs:vec-as-and}
    $\kvecPPcc{m}$ is equivalent to $\FunctionName{AND}_d \circ \kPPcc[q]{m}$. That is, $(v_1,\dots,v_k)$ is a valid input for $\kvecPPcc{m}$ if and only if for each $i \in [d]$, $(v_{1,i},\dots,v_{k,i})$ is a valid input to $\kPPcc[q]{m}$. Furthermore, the output on $(v_1,\dots,v_k)$ is $1$ if and only if the output  of $\kPPcc[q]{m}$ on each $(v_{1,i},\dots,v_{k,i})$ is $1$.
\end{observation}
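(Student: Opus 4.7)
The plan is to unpack the definitions of $\kvecPP{m}$ (\cref{def:kvecpp}) and the communication task $\kvecPPcc{m}$ (\cref{def:kvecppcc}) and verify both halves of the claim (promise and output) in order, using only the elementary fact that two vectors in $[q]^d$ are equal if and only if they agree in every coordinate.

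For the promise part, I would argue that, by \cref{def:kvecpp}, a tuple $(v_1,\dots,v_k) \in ([q]^d)^k$ is a valid input to $\kvecPPcc{m}$ precisely when there exist degree-$m$ polynomials $p_1,\dots,p_d$ with $v_i = (p_1(i),\dots,p_d(i))$ for every $i \in [k]$. Reading this condition one coordinate at a time, it is equivalent to saying that for each $j \in [d]$ the tuple $(v_{1,j},\dots,v_{k,j}) = (p_j(1),\dots,p_j(k))$ is itself a $\kPP{m}$, i.e.\ a valid input to $\kPPcc[q]{m}$. Since each direction of the biconditional is immediate (forward: restrict to a single coordinate; backward: bundle the $d$ polynomials $p_j$ into one vector-valued polynomial), the equivalence of the promise sets follows.

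For the output part, given a valid input, I would note that by \cref{def:kvecppcc} the output is $1$ iff $v_1 = \cdots = v_k$, and since vectors in $[q]^d$ are equal iff they agree coordinate-wise, this holds iff $v_{1,j} = \cdots = v_{k,j}$ for every $j \in [d]$. By \cref{def:kvecppcc} applied to $\kPPcc[q]{m}$, the latter is exactly the statement that $\kPPcc[q]{m}$ outputs $1$ on each coordinate tuple $(v_{1,j},\dots,v_{k,j})$, i.e.\ that the $\FunctionName{AND}_d$ of these $d$ outputs is $1$.

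The main obstacle is essentially non-existent: this observation is a pure unpacking of the two definitions, and the argument is a couple of sentences of bookkeeping. The only thing I would be careful about is to handle the promise and the output claims separately, since \cref{def:kvecppcc} is only meaningful on valid inputs, so one must first confirm that the promise on the vector side matches the conjunction of the coordinate-wise promises before speaking about outputs.
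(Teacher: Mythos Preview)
Your proposal is correct and is precisely the intended reading of the observation; the paper does not give a separate proof of this statement (it is stated as an observation immediately following the definitions and left to the reader), so there is nothing further to compare. The only minor slip is that the output condition for the scalar task $\kPPcc[q]{m}$ comes from its own definition rather than from \cref{def:kvecppcc}, but this does not affect the argument.
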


\begin{observation} \label{obs:trivial-when-m-large}
    When the degree $m$ is large enough, the promise in these tasks becomes trivially fulfilled. When $m \geq k-1$, any $(x_1,\dots,x_k) \in [N]^k$ is a valid input to $\kPPcc{m}$. This is because you can find a degree $k-1$ polynomial $p$ such that $p(i) = x_i$ for all $i \in [k]$. Hence for $m \geq k-1$, $\kPPcc{m}$ is equivalent to the \FunctionName{Equality} function. Similarly for $m \geq k-1$, $\kvecPPcc{m}$ is also equivalent to the \FunctionName{Equality} function.
\end{observation}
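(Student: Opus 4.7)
The plan is to prove this observation in two parts. First I would handle the scalar case $\kPPcc{m}$ directly via Lagrange interpolation, and then deduce the vector case $\kvecPPcc{m}$ from the scalar case together with the preceding observation.

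For the scalar case, fix any tuple $(x_1,\dots,x_k) \in [N]^k$. By Lagrange interpolation, the polynomial
\[
p(y) \;=\; \sum_{i=1}^{k} x_i \prod_{\substack{j \in [k] \\ j \neq i}} \frac{y-j}{i-j}
\]
has degree at most $k-1$ and satisfies $p(i) = x_i$ for every $i \in [k]$. Since by hypothesis $m \geq k-1$, this $p$ witnesses $(x_1,\dots,x_k)$ as a $\kPP{m}$, so the promise of $\kPPcc{m}$ is trivially satisfied on every input in $[N]^k$. Thus $\kPPcc{m}$ is a total function whose value is $1$ iff $x_1 = \cdots = x_k$, which is precisely $\FunctionName{Equality}$.

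For the vector case, I would apply \cref{obs:vec-as-and}: a tuple $(v_1,\dots,v_k) \in ([q]^d)^k$ is a valid input to $\kvecPPcc{m}$ exactly when each coordinate tuple $(v_{1,j},\dots,v_{k,j})$ is a valid input to $\kPPcc[q]{m}$, and the output is the AND of the coordinate outputs. By the scalar case just handled (with $N$ replaced by $q$), every coordinate tuple satisfies the promise whenever $m \geq k-1$, so $\kvecPPcc{m}$ is total on $([q]^d)^k$ and evaluates to $1$ iff $v_{1,j} = \cdots = v_{k,j}$ for all $j \in [d]$, i.e.\ iff $v_1 = \cdots = v_k$. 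This is exactly $\FunctionName{Equality}$ over $[q]^d$.

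There is no real obstacle here; the only point worth flagging is that Lagrange interpolation produces a polynomial with rational coefficients in general, but since the definition of $\kPP{m}$ only constrains $p$ at the $k$ integer points $i \in [k]$ and places no integrality requirement on the coefficients of $p$, this causes no issue.
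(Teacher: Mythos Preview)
Your proposal is correct and follows essentially the same approach as the paper: the observation itself already contains its full justification (``you can find a degree $k-1$ polynomial $p$ such that $p(i)=x_i$''), and you have simply spelled out the interpolation explicitly and invoked \cref{obs:vec-as-and} to pass to the vector case. Your closing remark about rational coefficients is also apt, since the paper's definition of $\kPP{m}$ places no integrality constraint on the coefficients of $p$.
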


In this section we show explicit protocols exhibiting the following upper bound for the communication tasks.

\begin{theorem}\label{thm:nihcomplexities}
    Let $m \leq k-1$ and $t = \lceil \log(k/m) \rceil$. Then the number-in-hand communication complexity of computing $\kPPcc{m}$ is at most
    \[ t 2^{(t-1)/2}\sqrt[t]{m^{t-1}\log N} + O(tk^2 \log \log N). \]
    For $m \leq (k-1)/2$, the number-in-hand communication complexity of $\kvecPPcc{m}$ is at most
    \[ (t-1) 2^{(t-2)/2}\sqrt[t-1]{(2m)^{t-2}\log (q^2d)} + O(tk^2 \log \log (q^2d)). \]
\end{theorem}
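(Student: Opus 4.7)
The plan is to build the protocol inductively on $t = \lceil \log(k/m) \rceil$ using two alternating reductions that mirror Rankin's combinatorial construction. \textbf{Reduction A} converts a $\kPPcc{m}$ instance on $[N]$ to a $\kvecPPcc{m}$ instance on $[q]^d$ (for chosen $q,d$ with $q^d \ge N$) via base-$q$ representation; because the carries in base-$q$ arithmetic destroy the polynomial structure of a $\kPP{m}$ coordinate-wise, Player~1 computes and broadcasts a shift $s \in \mathbb{Z}^d$ that rectifies her vector into a canonical form lying in a ``safe'' range chosen so that any $\kPP{m}$ anchored there uniquely lifts to a $\kvecPP{m}$ without wrap-around. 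Each other player tests whether the same $s$ works for her own vector; if not, the inputs cannot be equal and the protocol outputs $0$. \textbf{Reduction B} converts $\kvecPPcc{m}$ on $[q]^d$ to $\kPPcc{2m}$ on $[O(q^2 d)]$ with zero communication: each player locally computes $\|v_i\|_2^2$, and if the $v_i$ form a $\kvecPP{m}$ with coordinate polynomials $p_j$ of degree $m$, then $\|v_i\|_2^2 = \sum_j p_j(i)^2$ is a polynomial in $i$ of degree $2m$, giving a valid $\kPPcc{2m}$ instance.

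Alternating A and B produces a sequence $(m_i, N_i)$ with $m_{i+1} = 2 m_i$ and $N_{i+1} = O(d_i q_i^2)$. After $t$ iterations $2^t m \ge k-1$, and by \cref{obs:trivial-when-m-large} the promise trivializes, so the remaining $\Equality$ problem is solved by one player broadcasting her current value. The parameters $(q_i, d_i)$ are chosen to balance the shift cost across iterations subject to $q_i^{d_i} \ge N_i$; with the ansatz that each iteration contributes roughly $2^{(t-1)/2} m^{(t-1)/t} (\log N)^{1/t}$ bits, unfolding the recursion
\[
T(m,N) \;\le\; \mathrm{shift}(q,d,m) \;+\; T\bigl(2m,\,O(q^2 d)\bigr)
\]
produces the total bound $t \cdot 2^{(t-1)/2}\cdot m^{(t-1)/t}\cdot (\log N)^{1/t}$, with the additive $O(tk^2 \log\log N)$ absorbing the final broadcast, the per-coordinate consistency-check bits, and rounding to integer parameters.

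Part~2 then follows at once from Part~1: given a $\kvecPPcc{m}$ instance on $[q]^d$ with $m \le (k-1)/2$, apply Reduction~B once (free) to obtain $\kPPcc{2m}$ on $[O(q^2 d)]$, then invoke Part~1 with degree $2m$ and range $q^2 d$. The new iteration count is $t' = \lceil \log(k/(2m)) \rceil = t-1$, and substituting into the Part~1 bound yields exactly the claimed expression $(t-1)\cdot 2^{(t-2)/2}\sqrt[t-1]{(2m)^{t-2}\log(q^2 d)} + O(tk^2\log\log(q^2 d))$.

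The hard part will be the analysis of Reduction~A: precisely identifying the safe range, showing that Player~1 can compute the shift from her input alone, bounding the shift's magnitude so that its broadcast fits within the allotted budget per iteration, and ensuring that the other players' consistency checks correctly reject exactly when the inputs are unequal. Generalizing the carry-handling from the $m=1$ case (used in Behrend and in the three-player explicit protocol of Linial--Pitassi--Shraibman) to arbitrary polynomial-progression degree $m$ is the technical heart of the section; once it is settled, the parameter optimization is essentially the calculation underlying Rankin's original density bound.
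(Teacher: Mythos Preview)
Your proposal is correct and follows essentially the same approach as the paper: the paper's Protocols~1 and~2 are exactly your Reductions~B and~A, the induction on $t$ and the parameter balancing (the paper's Claim~\ref{clm:minimizenih}) match your recursion, and Part~2 is indeed obtained from Part~1 by one free application of Reduction~B. The one detail you should pin down explicitly is that the ``safe range'' is $\{0,\dots,q/2^m-1\}^d$ (so that the finite-difference operator $L$ with coefficient sum $2^m$ cannot overflow a coordinate), which makes the shift $s$ an element of $\{0,\dots,2^m-1\}^d$ and its broadcast cost exactly $md$ bits---this is the quantity that gets balanced against the inductive term.
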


 As a special case (setting $m = 1$) this yields the desired protocol for NIH $\Equality$ with $\kAP$ promise. See \cref{fig:reductions} for an illustration. The figure also shows where our improvement for NOF $\ExactlyN$ comes into play; this is described in detail in \cref{sec:nof_protocol_for_exactlyN}.

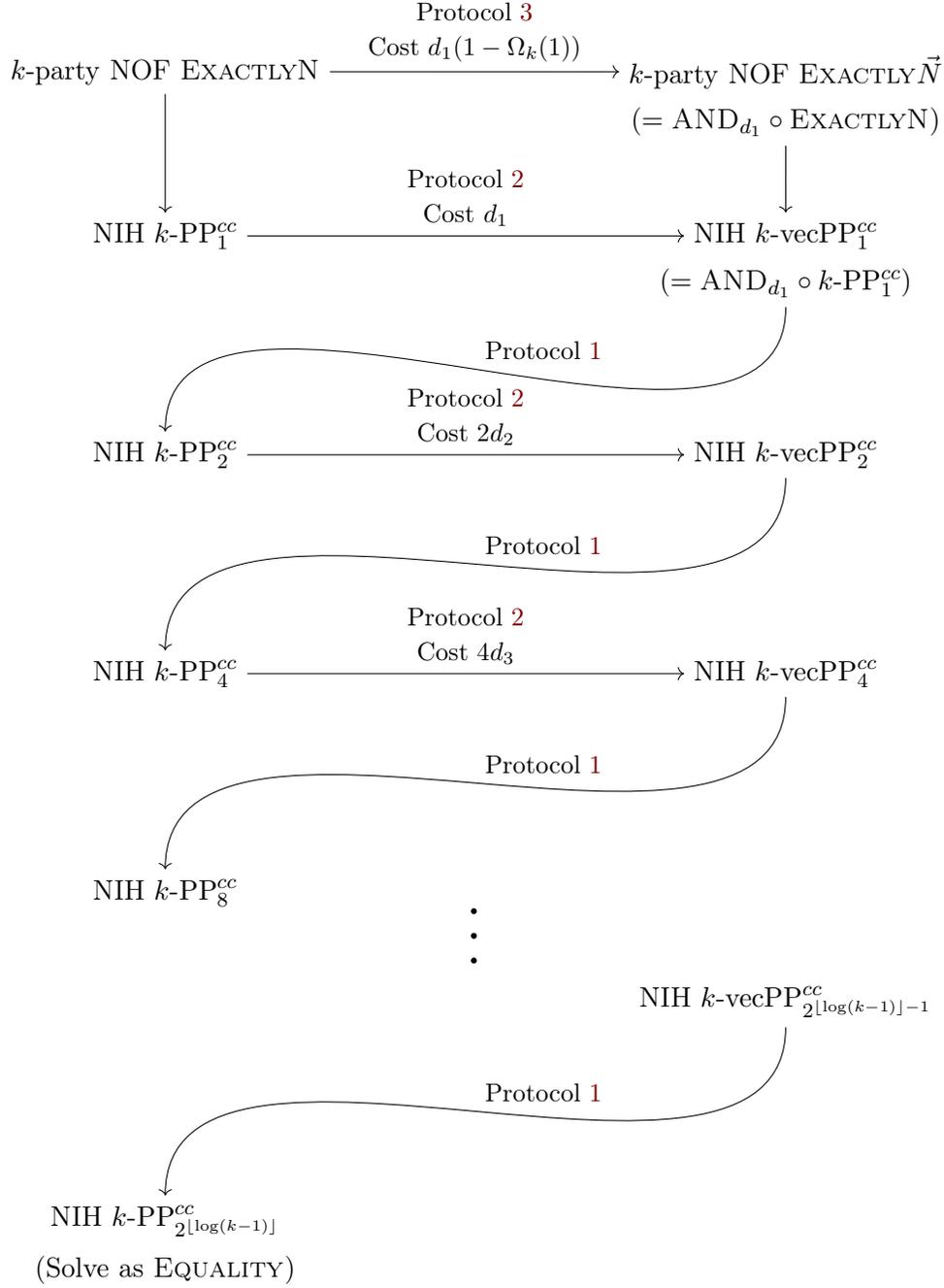
\begin{figure}
\centering
\begin{tikzpicture}
    [xscale=1.7,yscale=1.5,
    old/.style = {}]\node (exac) at (0,11.5) {$k$-party NOF $\ExactlyN$};
    \node (vexac) at (5,11.5) {$k$-party NOF $\FunctionName{Exactly}\vec{N}$};
    \node[below=0pt of vexac] (vexacexplain) {($= \FunctionName{AND}_{d_1} \circ \ExactlyN$)};
    \node (pp1) at (0,10) {NIH $\kPP{1}^{cc}$};
    \node (vpp1) at (5,10) {NIH $\kvecPP{1}^{cc}$};
    \node[below=0pt of vpp1] (vpp1explain) {($= \FunctionName{AND}_{d_1} \circ \kPP{1}^{cc}$)};
    \node (pp2) at (0,8) {NIH $\kPP{2}^{cc}$};
    \node (vpp2) at (5,8) {NIH $\kvecPP{2}^{cc}$};
    \node (pp4) at (0,6) {NIH $\kPP{4}^{cc}$};
    \node (vpp4) at (5,6) {NIH $\kvecPP{4}^{cc}$};
    \node (pp8) at (0,4) {NIH $\kPP{8}^{cc}$};
    \node[scale=2,rotate=90] (dots) at (2.5,3.65) {$\cdots$};
    \node (vppkby2) at (5,3) {NIH $\kvecPP{2^{\lfloor \log(k-1) \rfloor-1}}^{cc}$};
    \node (ppk) at (0,1) {NIH $\kPP{2^{\lfloor \log(k-1) \rfloor}}^{cc}$};
    \node[below=0pt of ppk] {(Solve as $\Equality$)};

    \draw[->,old] (exac)  -- (pp1);
    \draw[->,old] (pp1) -- node[auto,align=center] {\small Protocol~\ref{alg:reductionkpptokvpp}\\\small Cost $d_1$} (vpp1);
    \draw[->,old] (vpp1explain) to[out=270,in=90] node[auto,swap,align=center] {\small Protocol~\ref{alg:reductionkvpptokpp}} (pp2);
    \draw[->,old] (pp2) -- node[auto,align=center] {\small Protocol~\ref{alg:reductionkpptokvpp}\\\small Cost $2d_2$} (vpp2);
    \draw[->,old] (pp4) -- node[auto,align=center] {\small Protocol~\ref{alg:reductionkpptokvpp}\\\small Cost $4d_3$} (vpp4);
    \draw[->,old] (vpp2) to[out=270,in=90] node[auto,swap,align=center] {\small Protocol~\ref{alg:reductionkvpptokpp}} (pp4);
    \draw[->,old] (vpp4) to[out=270,in=90] node[auto,swap,align=center] {\small Protocol~\ref{alg:reductionkvpptokpp}} (pp8);

    \draw[->,old] (vppkby2) to[out=270,in=90] node[auto,align=center,swap] {\small Protocol~\ref{alg:reductionkvpptokpp}} (ppk);

\draw[->] (exac) to node[auto,align=center] {\small Protocol~\ref{alg:reductionexactlyntokvexac}\\\small Cost $d_1(1-\Omega_k(1))$} (vexac);
    \draw[->] (vexacexplain) to (vpp1);
    
\end{tikzpicture} 
\caption{The list of reductions used in protocols for $k$-party NOF \ExactlyN. Reductions that do not mention a cost are 0-cost reductions.} \label{fig:reductions}
\end{figure} 
The proof of \cref{thm:nihcomplexities} is given in~\cref{subsec:proofofnihtheorem}. It uses as subroutines two protocols that we present and analyze below.

\begin{itemize}
    \item \Cref{alg:reductionkvpptokpp} gives us a way to reduce the vector polynomial progression task $\kvecPPcc{m}$ to the integer polynomial progression task $\kPPcc[q^2d]{2m}$ as long as $k > 2m$. Note that we have made the problem harder by moving from degree $m$ to degree $2m$ but we have also decreased the input size from $d \log q$ bits per input to $2 \log q + \log d$ bits per input.
    \item \Cref{alg:reductionkpptokvpp} gives us a way to reduce the integer polynomial progression task $\kPPcc{m}$ to the vector polynomial progression task $\kvecPPcc{m}$. This protocol uses $md$ bits of communication and requires that $q$ is a multiple of $2^m$, $q^d \geq N$ and $k \geq m+2$.
\end{itemize}

\newlength{\reqwidth}
\settowidth{\reqwidth}{\textbf{Promise:}}
\algrenewcommand{\algorithmicrequire}{\makebox[\reqwidth][l]{\textbf{Input:}}}
\algrenewcommand{\algorithmicensure}{\makebox[\reqwidth][l]{\textbf{Output:}}}
\algnewcommand{\algorithmicpromise}{\makebox[\reqwidth][l]{\textbf{Promise:}}}
\algnewcommand\Promise{\item[\algorithmicpromise]}

\begin{algorithm}[ht]
\caption{A reduction from $\kvecPPcc{m}$ to $\kPPcc[q^2d]{2m}$}\label{alg:reductionkvpptokpp}
\begin{algorithmic}[1]
\Require    \begin{varwidth}[t]{0.9\linewidth}
                  $v_1,v_2,\dots,v_k \in [q]^d$ distributed among $k$ players in the NIH model
            \end{varwidth}
\Promise    \begin{varwidth}[t]{0.9\linewidth}
                $v_1,v_2,\dots,v_k$ form a $\kvecPP{m}$ with $k > 2m$
            \end{varwidth}
\Ensure     \begin{varwidth}[t]{0.9\linewidth}
                $x_1,x_2,\dots,x_k \in [q^2d]^k$ distributed among the $k$ players in the NIH model such that $x_1,x_2,\dots,x_k$ form a $\kPP{2m}$, trivial if and only if $(v_1,\dots,v_k)$ is trivial 
            \end{varwidth}
\State For each $i \in [k]$, Player $i$ computes $x_i := \|v_i\|^2$.
\end{algorithmic}
\end{algorithm}
\begin{algorithm}[ht]
\caption{A reduction from $\kPPcc{m}$ to $\kvecPPcc{m}$}\label{alg:reductionkpptokvpp}
\begin{algorithmic}[1]
\Require    \begin{varwidth}[t]{0.9\linewidth}
                  $x_1,x_2,\dots,x_k \in [N]$ distributed among $k$ players in the NIH model \par
                  (Assume $2^m | q, q^d \geq N$)
            \end{varwidth}
\Promise    \begin{varwidth}[t]{0.9\linewidth}
                $x_1,x_2,\dots,x_k$ form a $\kPP{m}$ with $k \geq m+2$
            \end{varwidth}
\Ensure     \begin{varwidth}[t]{0.9\linewidth}
                $v_1,v_2,\dots,v_k \in ([q]^d)^k$ distributed among the $k$ players in the NIH model such that either\par
                (a) $v_1,v_2,\dots,v_k$ form a $\kvecPP{m}$, trivial if and only if $(x_1,\dots,x_k)$ is trivial, or\par
                (b) $x_1,x_2,\dots,x_k$ was a non-trivial $\kPP{m}$ and at least one of the players knows this.
            \end{varwidth}
\State For each $i \in [k]$, Player $i$ computes $w_i \gets \base_{q,d}(x_i)$.
\State $c \gets q/2^m$
\State For each $i \in [k]$, Player $i$ computes two vectors:
\begin{itemize}
    \item $s_i = (\lfloor w_{i,1}/c \rfloor, \dots, \lfloor w_{i,d}/c \rfloor)$ and
    \item $v_i = (w_{i,1} \modulo{c}, \dots, w_{i,d} \modulo{c})$.
\end{itemize}\label{line:computeshift}
\State Player $1$ broadcasts $s_1$.\label{line:player1commshift}
\State For each $i \in [k]$, Player $i$ checks if $s_i = s_1$. If they are not equal, player $i$ notes that the input was a non-trivial $\kPP{m}$.\label{line:playernotes}
\end{algorithmic}
\end{algorithm}

\subsection{Analysis of \texorpdfstring{\Cref{alg:reductionkvpptokpp}}{Protocol 1}}

The input $(v_1,\dots,v_k)$ is promised to be a $\kvecPP{m}$. Let $p_1, \dots,p_d$ be the degree-$m$ polynomials associated with them, in the sense that $v_i = (p_1(i),\dots,p_d(i))$. Define the degree-$2m$ polynomial $p' := \sum_{j \in [d]} p_j^2$. Note that the $x_i$ computed in the protocol is merely $p'(i)$. Hence $(x_1,\dots,x_k)$ is a $\kPP{2m}$. If the original $\kvecPP{m}$ was trivial, then the computed $\kPP{2m}$ is also trivial. On the other hand if any $p_j$ is non-constant, then $p'$ is also non-constant (any monomial of maximal degree among the $p_j$s will get squared and hence not get cancelled in $p'$). Assuming $k > 2m$, the non-constant polynomial $p'$ cannot take the same value on $k$ different points and so the $\kPP{2m}$ is non-trivial.

The cost of this protocol is $0$ since there is no communication during the protocol.

\subsection{Analysis of \texorpdfstring{\Cref{alg:reductionkpptokvpp}}{Protocol 2}}

We start with a useful statement about polynomials. Define the function $L$ as follows:

\[L(a_0,\dots,a_{m+1}) = \sum_{i=0}^{m+1} (-1)^i \binom{m+1}{i} a_i.\]

\begin{claim}[folklore]
    Let $k \geq m+2$. The sequence $(x_1,\dots,x_k)$ forms a $\kPP{m}$ if and only if $$L(x_1,\dots,x_{m+2}) = \dots = L(x_{k-m-1},\dots,x_k) = 0.$$
\end{claim}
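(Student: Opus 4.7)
The plan is to recognize that $L(a_0,\dots,a_{m+1}) = \sum_{i=0}^{m+1}(-1)^i\binom{m+1}{i}a_i$ is, up to the sign $(-1)^{m+1}$, the $(m+1)$-st forward finite difference $\Delta^{m+1}$ applied to the sequence, and then to invoke the classical characterization of degree-$\leq m$ polynomials as exactly the sequences annihilated by $\Delta^{m+1}$. I will prove both directions in this language.

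For the forward direction, suppose $(x_1,\dots,x_k)$ is a $\kPP{m}$, so there is a polynomial $p$ of degree at most $m$ with $p(i)=x_i$ for all $i\in[k]$. For any $j$ with $1 \le j \le k-m-1$, one has
\[ L(x_j,\dots,x_{j+m+1}) \;=\; \sum_{i=0}^{m+1}(-1)^i \binom{m+1}{i} p(j+i) \;=\; (-1)^{m+1}\Delta^{m+1}p(j), \]
and $\Delta^{m+1}p \equiv 0$ since $\Delta$ strictly lowers the degree of a nonconstant polynomial and annihilates constants; applying it $m+1$ times to a polynomial of degree at most $m$ gives zero. This handles the ``only if'' direction.

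For the backward direction, assume that every window $L(x_j,\dots,x_{j+m+1})$ vanishes for $j=1,\dots,k-m-1$. Let $p$ be the unique polynomial of degree at most $m$ interpolating the first $m+1$ values, i.e.\ $p(i)=x_i$ for $i=1,\dots,m+1$ (this exists and is unique by standard Lagrange interpolation, using the fact that $k\geq m+2$ ensures we have at least $m+1$ data points). I will show by induction on $j\geq m+1$ that $p(i)=x_i$ for all $i\le j$. The base case $j=m+1$ is the definition of $p$. For the inductive step, suppose the claim holds for some $j$ with $m+1\le j \le k-1$. Apply the hypothesis to the window starting at $j-m$: the equation $L(x_{j-m},\dots,x_{j+1})=0$ expresses
\[ (-1)^{m+1} x_{j+1} \;=\; -\sum_{i=0}^{m}(-1)^i\binom{m+1}{i} x_{j-m+i}. \]
By the inductive hypothesis each $x_{j-m+i}$ on the right equals $p(j-m+i)$, and by the forward direction already established, $L(p(j-m),\dots,p(j+1))=0$ gives exactly the same linear combination expression for $p(j+1)$. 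Hence $x_{j+1}=p(j+1)$, completing the induction and showing $(x_1,\dots,x_k)$ agrees with the polynomial $p$ of degree $\le m$.

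There is no real obstacle here, since this is a standard finite-difference fact; the only thing to be careful about is index bookkeeping (the window $L(x_j,\dots,x_{j+m+1})$ uses $m+2$ consecutive terms, so the requirement $k\ge m+2$ is exactly what guarantees at least one window exists, and the induction step uses the window ending at position $j+1$, which is valid as long as $j+1\le k$).
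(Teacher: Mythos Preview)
Your proof is correct and takes essentially the same approach as the paper: both recognize $L$ as $(-1)^{m+1}\Delta^{m+1}$ and use the classical fact that a sequence is a degree-$\le m$ polynomial progression iff its $(m+1)$-st finite difference vanishes. The only minor difference is in the converse direction: the paper argues by repeatedly ``inverting $\Delta$'' (each anti-difference raising the degree by one, starting from the zero sequence), whereas you interpolate the first $m+1$ values and then propagate agreement via the recurrence; both are standard and equally valid routes to the same conclusion.
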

\begin{proof}
    This proof follows from properties of a ``difference operator'' $\Delta$ defined on sequences of real numbers. The output of this operator is also a sequence of real numbers containing the differences of the consecutive elements of the input sequence. For instance if $f$ is a sequence $(a_1,a_2,a_3,a_4)$, $\Delta f = (a_2-a_1,a_3-a_2,a_4-a_3)$. In the following observations it will be useful to represent polynomials of degree at most $m$ as linear combinations of the basis polynomials $\binom{x}{0},\dots,\binom{x}{m}$.
    \begin{itemize}
        \item Since $\binom{x+1}{d}-\binom{x}{d} = \binom{x}{d-1}$, if $p$ is a degree-$m$ polynomial then $p(x+1)-p(x)$ would be a degree-$(m-1)$ polynomial. (If $m=0$, it would be the zero polynomial.) Hence if we have a sequence $f = (p(1),p(2),\dots,p(k))$ for some degree-$m$ polynomial $p$, the sequence $\Delta f$ would be of the form $(q(1),q(2),\dots,q(k-1))$ for some degree-$(m-1)$ polynomial $q$.
        \item Conversely, ``inverting $\Delta$'' increases the degree by 1: if a sequence $(q(0),q(1),\dots,q(k-1))$ is obtained by applying $\Delta$ to a sequence $f$, then $f$ must be of the form $a,a+q(0),a+q(0)+q(1),\dots,a+\sum_{i=0}^{k-1}q(i)$ for some real $a$. We know that $\binom{0}{d} + \binom{1}{d} + \dots + \binom{x}{d} = \binom{x+1}{d+1}$. Hence if $q$ is a polynomial of degree $m$ then $\sum_{i=0}^{x} q(x) = p(x)$ for some polynomial $p$ of degree $m+1$. So the sequence $f$ must be of the form $(p(0),p(1),\dots,p(k))$ for some polynomial $p$ of degree $m+1$.
        \item Lastly, we analyze the iterated operator $\Delta^{m+1}$. If $f$ is a sequence $(a_1,a_2,...)$ with $\Delta^{m+1} f = (b_1,b_2,...)$, then by induction $b_i = \sum_{j=0}^{m+1} (-1)^j \binom{m+1}{j} a_{i+j}$. Hence the statement \[L(x_1,\dots,x_{m+2}) = \dots = L(x_{k-m-1},\dots,x_k) = 0\] merely states that $\Delta^{m+1}$ applied to the sequence $(x_1,\dots,x_k)$ results in the all-zero sequence of length $k-m-1$.
    \end{itemize}    

    Now we can complete the proof. Suppose $(x_1,\dots,x_k)$ forms a $\kPP{m}$, which means that $x_i = p(i)$ for some polynomial $p$ of degree $m$. Then applying $\Delta^{m+1}$ to the sequence would give the values of the zero polynomial. Hence $L(x_1,\dots,x_{m+2}) = \dots = L(x_{k-m-1},\dots,x_k) = 0$.

    For the converse, the sequence $(0,0,\dots,0)$ of length $k-m-1$ is a sequence of values of the zero polynomial. Since this sequence is derived from applying $\Delta^{m+1}$ to the sequence $(x_1,\dots,x_k)$, we know that the sequence $(x_1,\dots,x_k)$ must be $(p(1),p(2),\dots,p(k))$ for some polynomial $p$ of degree at most $m$.
\end{proof}

From \cref{obs:vec-as-and} it immediately follows that vectors $(v_1,\dots,v_k)$ form a $\kvecPP{m}$ if and only if $L(v_1,\dots,v_{m+2}) = \dots = L(v_{k-m-1},\dots,v_k) = \vec{0}$.

Now we can analyze the correctness of \Cref{alg:reductionkpptokvpp}. Recall that parameters $q$ and $d$ are set such that $q^d \geq N$ and $q$ is a multiple of $2^m$. Let $S$ be the set of numbers in $[N]$ whose base-$q$ representations only have entries less than $q/2^m$.
\begin{claim}\label{clm:smallentryreduction}
Let $(x_1,\dots,x_k)$ be a $\kPP{m}$ with each $x_i \in S$. Then their base-$q$ representations $(v_1,\dots,v_k)$ form a $\kvecPP{m}$, trivial if and only if the $\kPP{m}$ was trivial.
\end{claim}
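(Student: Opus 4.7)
The plan is to exploit the characterization given just above the claim: a sequence $(x_1, \dots, x_k)$ is a $\kPP{m}$ exactly when $L(x_j, \dots, x_{j+m+1}) = 0$ for every valid starting index $j$, and by \cref{obs:vec-as-and} a sequence of vectors $(v_1, \dots, v_k)$ is a $\kvecPP{m}$ exactly when the same identity holds coordinate-wise. So I would prove the claim by showing that, under the smallness assumption $x_i \in S$, the operator $L$ applied to $(x_1, \dots, x_k)$ produces $0$ if and only if $L$ applied coordinate-wise to $(v_1, \dots, v_k)$ produces the zero vector.

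The heart of the argument is a no-carry estimate. Since each $x_i$ lies in $S$, every coordinate of $v_i = \base_{q,d}(x_i)$ lies in $[0, q/2^m)$. The coefficients $(-1)^i \binom{m+1}{i}$ of $L$ split into positive and negative parts whose absolute values each sum to $2^m$, which follows from $\sum_i \binom{m+1}{i} = 2^{m+1}$ together with $\sum_i (-1)^i \binom{m+1}{i} = 0$. Hence, for every window index $j$ and coordinate $\ell$, the integer
\[ c_{j,\ell} := L(v_{j,\ell}, \dots, v_{j+m+1,\ell}) \]
satisfies $|c_{j,\ell}| < 2^m \cdot q/2^m = q$.

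Next I would combine this with the linearity of $L$ and the base-$q$ decomposition $x_i = \sum_\ell q^\ell v_{i,\ell}$ to obtain the integer identity
\[ L(x_j, \dots, x_{j+m+1}) = \sum_{\ell=0}^{d-1} q^\ell \cdot c_{j,\ell}. \]
Because the $x_i$ form a $\kPP{m}$, the left-hand side vanishes for every $j$. Reading the right-hand side modulo $q$ forces $c_{j,0} \equiv 0 \pmod q$, and combined with $|c_{j,0}| < q$ this gives $c_{j,0} = 0$; dividing by $q$ and iterating this argument on $\ell$ then forces $c_{j,\ell} = 0$ for all $\ell$. Running this over every $j \in \{1, \dots, k-m-1\}$ shows that each coordinate sequence of the $v_i$'s is a $\kPP{m}$, so $(v_1, \dots, v_k)$ is a $\kvecPP{m}$ by \cref{obs:vec-as-and}.

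The triviality statement is then immediate: $\base_{q,d}$ is a bijection from $[0, q^d)$ onto $[q]^d$, so $x_1 = \dots = x_k$ iff $v_1 = \dots = v_k$. The only place the hypothesis $x_i \in S$ is used is the coordinate bound that blocks carries; everything else is linearity of $L$ and the polynomial-progression characterization. There is no real obstacle — the smallness of entries was engineered (via the parameter $c = q/2^m$ on Line~\ref{line:computeshift} of Protocol~\ref{alg:reductionkpptokvpp}) precisely to make this carry bound tight enough to conclude.
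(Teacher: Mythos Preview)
Your proposal is correct and follows essentially the same approach as the paper's proof: bound each coordinate of $L(v_j,\dots,v_{j+m+1})$ by $q$ in absolute value using the fact that the positive and negative coefficients of $L$ each have total magnitude $2^m$, then use linearity of $L$ and the base-$q$ expansion to conclude that a vanishing integer $\sum_\ell q^\ell c_{j,\ell}$ forces all $c_{j,\ell}=0$, and finish triviality via the bijectivity of $\base_{q,d}$. The only cosmetic difference is that the paper phrases the last step as ``the first nonzero entry would have to be a multiple of $q$'' rather than your explicit mod-$q$-and-divide induction.
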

\begin{proof}
Define the vector $w$ as $w := L(v_1,\dots,v_{m+2})$. The sum of the positive coefficients in the map $L$ is $\sum_{i \in [m+1],\,i \text{ even}} \binom{m+1}{i} = 2^m$, so each entry in $w$ is less than $q/2^m \cdot 2^m = q$. Similarly we can see that each entry is larger than $-q$. Rearranging the summations in the definition of $w$, we obtain
\[
\begin{aligned}
\sum_{j \in [d]} w_j q^{j-1} &= \sum_{j \in [d]} L(v_{1,j},\dots,v_{m+2,j}) q^{j-1}\\
&= L\left(\sum_{j \in [d]} v_{1,j}q^{j-1},\dots,\sum_{j \in [d]} v_{m+2,j}q^{j-1}\right)\\
&= L(x_1,\dots,x_{m+2}) = 0.
\end{aligned}
\]
The first non-zero entry of $w$, say $w_i$, must be a multiple of $q$, otherwise $\sum w_j q^{j-1} \mod q^i \neq 0$. Since each entry of $w$ is larger than $-q$ and smaller than $q$, $w$ must be equal to $\vec{0}$. The same argument works to show that $L(v_2,\dots,v_{m+3}) = \dots = L(v_{k-m-1},\dots,v_k) = \vec{0}$. So we can conclude that $(v_1,\dots,v_k)$ form a $\kvecPP{m}$. Since the operation of taking the base-$n$ representation is a bijection, $x_1 = \cdots = x_k$ if and only if $v_1 = \dots = v_k$.
\end{proof}

Clearly if in line~\ref{line:playernotes} a player notes that $s_i \neq s_1$, that player's input is different from the input of Player 1, and so the $\kPP{m}$ held by the players must have been non-trivial. We now prove that if no player has $s_i \neq s_1$, then the vectors they compute at the end form a $\kvecPP{m}$. Note that the $v_i$ computed in line~\ref{line:computeshift} can equivalently be written as $v_i := w_i - cs_i$.  Since we are now analyzing the case when the locally-computed $s_i$s are all equal, the vector $v_i$ can be written as $v_i = w_i - cs_1$. Since it lies in $\{0,1,\dots,c-1\}^d$, it is the base-$q$ representation of an integer $T(v_i) : = \sum_j v_{i,j} q^{j-1}$.

Since $T: (a_1,\dots,a_d) \mapsto \sum_j a_j q^{j-1}$
 is a linear transform, $T(v_i) = T(w_i) - T(cs_1)$. We know $T(w_i) = x_i$, so $T(v_i) = x_i - T(cs_1)$. Hence $T(v_1),\dots,T(v_k)$ are just $x_1,\dots,x_k$ shifted by the integer $T(cs_1)$. Hence $T(v_1),\dots,T(v_k)$ also form a $\kPP{m}$. Since every entry of their base-$q$ representation is at most $c-1 < q/2^m$, we can use \Cref{clm:smallentryreduction} to conclude that $v_1,\dots,v_m$ are a $\kvecPP{m}$, trivial if and only if the $x_i$s were. This proves the correctness of the protocol.

The cost of this protocol is $md$ since the only communication that occurs is in Line~\ref{line:player1commshift} where Player 1 broadcasts an element of $\{0,\dots,2^m-1\}^d$.

\subsection{Combining \texorpdfstring{\Cref{alg:reductionkvpptokpp,alg:reductionkpptokvpp}}{Protocols 1 and 2}}\label{subsec:proofofnihtheorem}

Our protocol for $\kPPcc{m}$ uses \Cref{alg:reductionkvpptokpp,alg:reductionkpptokvpp} to repeatedly reduce the problem until it becomes an instance of the form $\kPPcc[N']{m'}$ with $m' \geq k/2$. At this point they can no longer reduce the input size through these reductions, and so they solve this problem as an \FunctionName{Equality} problem: Player $1$ reveals their input and all the other players communicate $0$ if their input differs or if at any point in the reductions via Protocol~\ref{alg:reductionkpptokvpp} they noted that the input was a non-trivial $\kPP{}$~(see line \ref{line:playernotes}). They communicate $1$ otherwise. The output of the protocol is $1$ if all the players communicate $1$. The correctness of this protocol is easy to verify. The cost of the protocol depends on the parameters chosen during the reductions, and we analyze this in the proof.

\begin{proof}[Proof of \cref{thm:nihcomplexities}]
We prove the claim by induction on $t = \lceil \log(k/m) \rceil$.

The base case is when $t=1$, corresponding to $k/2 \leq m \leq k-1$. Since $\kPPcc{m}$ is a promise version of $\Equality$ on $\log N$ bits it can be solved by player $1$ broadcasting their input and the other players using $1$ bit each to convey whether their inputs match that of player $1$. This protocol requires $\log N + k$ bits and works for all $m$.

For the inductive step, let $\lceil \log(k/m) \rceil = i+1$. Since $i+1$ is at least $2$, we have $k > 2m$. This means we can use \Cref{alg:reductionkpptokvpp} to reduce it to $\kvecPPcc{m}$ and then \Cref{alg:reductionkvpptokpp} to reduce that to $\kPPcc[q^2d]{2m}$. Since $\lceil \log(k/2m) \rceil = i$, by our induction hypothesis we already have an upper bound on the communication complexity of $\kPPcc[q^2d]{2m}$.

Going via this reduction we get a protocol of cost \[ md + i 2^{(i-1)/2}\sqrt[i]{(2m)^{i-1} \log q^2d} + O(ik^2 \log \log q^2d), \] assuming $q^d \geq N$ and $q$ is a multiple of $2^m$ (this condition is required for us to run \Cref{alg:reductionkpptokvpp} with cost $md$). We can easily find the minimum of a closely related quantity that captures the essence of the minimization task.

\begin{claim}\label{clm:minimizenih}
\[ \min_{q',d' \in \mathbb{R}_+, q'^{d'}=N} md' + i 2^{(i-1)/2}\sqrt[i]{(2m)^{i-1} \log q'^2} = (i+1) 2^{i/2} \sqrt[i+1]{m^i \log N}, \]
achieved when $md' = 2^{(i-1)/2}\sqrt[i]{(2m)^{i-1} 2 \log q'} = 2^{i/2} \sqrt[i+1]{m^i \log N}$
\end{claim}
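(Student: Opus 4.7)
The plan is to eliminate the constraint, rewrite the objective as a sum of $i+1$ positive quantities whose product is constant, and then apply the AM--GM inequality. The equality case of AM--GM will immediately give the minimizer and confirm the claimed common value.

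First, I would rewrite the constraint $q'^{d'} = N$ as $d' \log q' = \log N$, and simplify the second term of the objective. Since
\[ i \cdot 2^{(i-1)/2}\sqrt[i]{(2m)^{i-1}\log q'^2} \;=\; i \cdot 2^{(i-1)/2}(2m)^{(i-1)/i}(2\log q')^{1/i} \;=\; i \cdot 2^{(i+1)/2}\, m^{(i-1)/i}(\log q')^{1/i}, \]
the objective becomes
\[ F(d',q') \;:=\; md' \;+\; i \cdot 2^{(i+1)/2}\, m^{(i-1)/i}(\log q')^{1/i}. \]
Write the second summand as $i$ copies of $X := 2^{(i+1)/2}\, m^{(i-1)/i}(\log q')^{1/i}$; so $F$ is the sum of the $i+1$ positive quantities $md', X, X, \ldots, X$.

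Next I would apply AM--GM to these $i+1$ terms. Their product is
\[ md' \cdot X^i \;=\; md' \cdot 2^{i(i+1)/2}\, m^{i-1}\log q' \;=\; 2^{i(i+1)/2}\, m^{i}\, (d'\log q') \;=\; 2^{i(i+1)/2}\, m^{i}\log N, \]
which is independent of how the constraint is realized. AM--GM therefore gives
\[ F(d',q') \;\geq\; (i+1)\bigl(md' \cdot X^i\bigr)^{1/(i+1)} \;=\; (i+1)\cdot 2^{i/2}\, \sqrt[i+1]{m^i\log N}, \]
which is exactly the claimed lower bound.

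Finally, AM--GM achieves equality iff $md' = X$, i.e. $md' = 2^{(i+1)/2}\, m^{(i-1)/i}(\log q')^{1/i}$, which is the first equality in the stated achieving condition (and rewriting $2^{(i+1)/2} = 2^{(i-1)/2}\cdot 2^{(i-1)/i+1/i}$ recovers the $\sqrt[i]{(2m)^{i-1}\cdot 2\log q'}$ form). At this equality point all $i+1$ terms are equal, so each equals $\tfrac{1}{i+1}F = 2^{i/2}\sqrt[i+1]{m^i\log N}$, giving the second equality. To ensure a valid minimizer exists over $\mathbb{R}_+$, I would solve the two equations $d'\log q' = \log N$ and $md' = 2^{(i+1)/2}m^{(i-1)/i}(\log q')^{1/i}$; eliminating $d'$ yields $\log q' = 2^{-i/2} m^{1/(i+1)}(\log N)^{i/(i+1)} > 0$, with the corresponding $d' > 0$, so the infimum is attained. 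The only thing to be slightly careful about is the algebraic bookkeeping of the exponents of $2$ and $m$, but there is no real obstacle beyond that.
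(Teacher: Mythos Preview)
Your proposal is correct and takes essentially the same approach as the paper: both write the objective as a sum of the $i+1$ positive terms $md'$ and $i$ copies of $2^{(i-1)/2}\sqrt[i]{(2m)^{i-1}\cdot 2\log q'}$, observe that their product equals the constant $2^{i(i+1)/2}m^i\log N$ via the constraint $d'\log q'=\log N$, and apply AM--GM. Your write-up is slightly more thorough in that you explicitly verify attainability of the minimizer, but the argument is the same.
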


\begin{proof}
Since $d' (\sqrt[i]{\log q'})^{i} = \log N$, we have \[(md')\left( 2^{(i-1)/2}\sqrt[i]{(2m)^{i-1} 2 \log q'}\right)^{i} = 2^{i(i+1)/2}m^{i}\log N.\] This is the product of $i+1$ terms: one term is $md'$ and the other $i$ terms are $2^{(i-1)/2}\sqrt[i]{(2m)^{i-1} 2 \log q'}$. The quantity we want to minimize is exactly the sum of these terms. This sum is minimized when each of the terms are the same, and hence equal to the $i+1$th root of the product.
\end{proof}

In our actual minimization problem we want to ensure that $q$ is a multiple of $2^m$ and $d$ is a natural number, and we also are minimizing a larger quantity. In the rest of the proof we show that accounting for these only adds to the lower order term. Let $q',d'$ be the optimal values in \Cref{clm:minimizenih}. We can always find a $q \in [q',q'+2^m)$ and $d \in [d',d'+1)$ that satisfy our conditions. Plugging these in to our original minimization task, we get an upper bound of
\[ m(d'+1) + i2^{(i-1)/2}\sqrt[i]{(2m)^{i-1} \log ((q'+2^m)^2(d'+1))} + O(ik^2 \log \log q^2(d'+1)).\]

Using $\sqrt[i]{a+b} \leq \sqrt[i]{a} + \sqrt[i]{b}$ and $\log(a+b) \leq \log a + \log b$ for $a,b \geq 2$, this is in turn upper bounded by 
\begin{align*}
&md' + i2^{(i-1)/2}\sqrt[i]{(2m)^{i-1} \log q'^2}\\
&+ m + i2^{(i-1)/2}\sqrt[i]{(2m)^{i-1}} (\sqrt[i]{2 \log 2^m} + \sqrt[i]{\log d})
+ O(ik^2 \log \log q^2d).
\end{align*}

We know the first two terms add up to $(i+1) 2^{i/2} \sqrt[i+1]{m^i \log N}$. We analyze the other terms using the fact that $2^{i+1} \leq k/m$.
\begin{itemize}
    \item $i 2^{(i-1)/2}\sqrt[i]{(2m)^{i-1} \log d}$: Since we choose a value of $d$ that is at most $k \sqrt[i+1]{\log N}+1$, this term is at most $\log \log N+1$ when $i=1$ and $o(k^2 \log \log N)$ otherwise.
    \item $i 2^{(i-1)/2}\sqrt[i]{(2m)^{i-1} 2 \log 2^m}$: This is just $i 2^{(i-1)/2} 2m$, which is at most $k$.
    \item $i k^2 \log \log q^2d$: This is at most $i k^2 \log \log N$ since $q^2d \ll q'^{d'} = N$.
    \item $m$ is at most $k$.
\end{itemize}

Hence our final bound is
\[ (i+1) 2^{i/2}\sqrt[i+1]{m^i\log N} + O((i+1)k^2 \log \log N). \qedhere \]
\end{proof} \section{Improved NOF protocol for ExactlyN}\label{sec:nof_protocol_for_exactlyN}
In this section we will show how to use information shared by the players to improve the reduction to the NIH promise $\Equality$ problem.

Recall that the goal of the players is to figure out whether $\sum_{i \in [k-1]} x_i = N - x_k$. We will use the high-level ideas described in~\cref{subsec:our_results}. 
 We now formally define the centered base-$q$ representation and carry-related notions, and then present the protocol.

\subsection{Centered base-\texorpdfstring{$q$}{q} representations, carry strings and carry vectors}

For simplicity, assume $q$ is odd. For an integer $x \in \{-(q^d-1)/2,\dots,(q^d-1)/2\}$, the centered base-$q$ representation of $x$ is a vector $\base^{\pm}_{q,d}(x)$ defined as the unique $v \in \{-(q-1)/2,\dots,(q-1)/2\}^d$ such that $x = \sum_{j \in [d]} v_j q^{j-1}$.

When adding together numbers $x_1$ through $x_t$ which have centered base-$q$ representations $v_1$ through $v_t$, we can get the centered base-$q$ representation of the sum by adding $v_1$ through $v_t$ but then modifying the result to take care of the carries. This is captured by the following process. (We require here that $t<q$, and this will be the case whenever we use this.)
\begin{itemize}
    \item Let $w = \sum_{i \in [t]} v_i$. 
    \item Define a \emph{carry string} $s \in \mathbb{Z}^d$ as follows
    \begin{itemize}
        \item $s_1$ is the unique integer such that $w_{1} \in \{s_1q - (q-1)/2,\dots,s_1q + (q-1)/2\}$.
        \item For $j \in \{2,\dots,d\}$, $s_j$ is the unique integer such that $w_{j} + s_{j-1} \in \{s_jq - (q-1)/2,\dots,s_jq + (q-1)/2\}$.
    \end{itemize}
    \item Define a \emph{carry vector} $v_s \in \mathbb{Z}^{d+1}$ as $\sum_{j \in [d]} s_j(e_{j+1} - qe_j)$.
    \item Then $w + v_s = \base^{\pm}_{q,d+1}\left(\sum_{i \in [t]} x_i\right)$. (Here $w$ is viewed as a $(d+1)$-dimensional vector with $w_{d+1}=0$.)
\end{itemize}

The following claim will be useful for communicating the carry to players in the NOF model.

\begin{claim}\label{clm:reconstructcarry}
Let $v_1,\dots,v_t \in \{-(q-1)/2,\dots,(q-1)/2\}^d$ and $s$ be the carry string of $\sum_{i \in [t]} v_i$. Given only $\{s_j \;(\bmod\; 2)\}_{j \in [d]}$ and $v_2,\dots,v_t$, one can reconstruct $s$ entirely.
\end{claim}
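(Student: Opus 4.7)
The plan is to reconstruct the carry string $s$ coordinate by coordinate, inductively from $j=1$ up to $j=d$. The key observation is that once the contribution of the known vectors $v_2,\dots,v_t$ and (for $j \geq 2$) the previously reconstructed $s_{j-1}$ are factored out, the only remaining unknown affecting the $j$-th digit of the sum is $v_{1,j}$, which takes values in an interval of length $q-1$. Any such interval meets at most two of the disjoint width-$q$ ``buckets'' $\{sq-(q-1)/2,\dots,sq+(q-1)/2\}$ that partition $\mathbb{Z}$, so $s_j$ has at most two candidate values, and those candidates are consecutive integers. Their parities therefore differ, so the given bit $s_j \pmod{2}$ pins down $s_j$ uniquely.

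Concretely, I would write $w'_j := \sum_{i=2}^{t} v_{i,j}$, which is computable from the given data. For the base case $j=1$, the definition of the carry says $s_1$ is the unique integer with $w'_1+v_{1,1}\in\{s_1 q-(q-1)/2,\dots,s_1 q+(q-1)/2\}$; since $v_{1,1}$ ranges over an interval of length $q-1$, the sum $w'_1+v_{1,1}$ lies in an interval of length $q-1$ as well, so the observation above applies and $s_1 \pmod{2}$ determines $s_1$. For the inductive step, assuming $s_1,\dots,s_{j-1}$ have been reconstructed, the defining condition for $s_j$ rewrites as $(w'_j+s_{j-1})+v_{1,j}\in\{s_j q-(q-1)/2,\dots,s_j q+(q-1)/2\}$; the first two summands are now known, and the same length-$(q-1)$ interval argument combined with the parity bit selects $s_j$ uniquely.

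The only delicate point — and what I would spell out carefully — is the ``at most two consecutive buckets'' claim. A short sanity check: if a length-$(q-1)$ integer interval met both the $(s-1)$-th and the $(s+1)$-th bucket, it would have to contain points separated by at least $(s+1)q - (q-1)/2 - ((s-1)q + (q-1)/2) = q+1$, contradicting its length being $q-1$. I expect this elementary estimate to be the main (essentially the only) thing worth verifying in detail; beyond that, the reconstruction is a straightforward induction on $j$ that propagates uniquely through all $d$ coordinates using exactly the $d$ parity bits provided.
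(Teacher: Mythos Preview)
Your proposal is correct and follows essentially the same approach as the paper: an induction on $j$ in which the known partial sum $w'_j$ (the paper calls it $v_{x,j}$), together with the previously reconstructed carry, leaves only the two consecutive candidate values for $s_j$, which are then disambiguated by the parity bit. The paper phrases the ``two consecutive buckets'' step by fixing the bucket $\alpha$ containing the known partial sum and case-splitting on whether the addition of $v_{1,j}$ can move the sum left or right, whereas you give the cleaner interval-length argument; both are the same elementary observation.
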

\begin{proof}
We prove this by induction. The base case is that we can reconstruct $s_1$, and the inductive step shows that given $s_{j-1}$ and the information provided to us we can reconstruct $s_j$. Let $v_x = \sum_{i \in \{2,\dots,t\}} v_i$. We can compute $v_x$ with the information provided. Although we do not know $v_1$, we know that each entry of $v_1$ lies in $\{-(q-1)/2,\dots,(q-1)/2\}$.

For the base case, let $\alpha$ be the unique integer such that $v_{x,1} \in \{\alpha q - (q-1)/2,\dots,\alpha q + (q-1)/2\}$. If $v_{x,1} = \alpha q$, then with the addition of $v_{1,1}$ it will still remain in this interval and so $s_1 = \alpha$. If $v_{x,1} < \alpha q$, then with the addition of $v_{1,1}$ it will either remain in the same interval or move to the interval corresponding to $\alpha-1$. So $s_1 \in \{\alpha-1,\alpha\}$. Similarly if $v_{x,1} > \alpha q$, we know $s_1 \in \{\alpha,\alpha+1\}$. In any of these cases finding out $s_1 \modulo{2}$ will specify $s_1$ exactly.

The inductive step is similar. Assume we know $s_{j-1}$. By definition $s_j$ is defined by which interval $v_{x,j} + v_{1,j} + s_{j-1}$ lies in. We know the value of $v_{x,j} + s_{j-1}$ and so again $s_j$ depends on where the addition of $v_{1,j}$ can move it. With the same reasoning as before, finding out $s_j \;(\bmod\; 2)$ will specify $s_j$ exactly.
\end{proof}

\subsection{A reduction to a vector variant}

\Cref{alg:reductionexactlyntokvexac} is a reduction from $\ExactlyN$ to a vector variant that we term $\FunctionName{Exactly}\vec{N}$. In this protocol, players have as inputs (in the NOF model) $x_1,\dots,x_k$. Player $k$ then broadcasts a shift so that all the players can compute new inputs $a_1$ to $a_k$ (still in the NOF model) such that $\sum_{i \in [k]} x_i = N \iff \sum_{i \in [k-1]} a_i = a_k$. These new inputs are also designed to have the property that if you take the base-$q$ representations of these inputs (called $w_1,\dots,w_k$ in the protocol), and you look at the carry string obtained by adding $w_1$ through $w_{k-1}$, all of its entries are even. From \Cref{clm:reconstructcarry}, this will allow all of the players to know the exact carry string $w_s$ and for them to shift the vector $w_k$ by it in order to ensure that $\sum_{i \in [k]} x_i = N \iff \sum_{i \in [k-1]} w_i = w_k - w_s$.

This vector variant of $\ExactlyN$ is then used to create a protocol for $\ExactlyN$ in \Cref{sec:protocolforexactlyn}

\begin{algorithm}[ht]
\caption{A reduction from NOF $\ExactlyN$ to NOF $\FunctionName{Exactly}\vec{N}$}\label{alg:reductionexactlyntokvexac}
\begin{algorithmic}[1]
\Require    \begin{varwidth}[t]{0.9\linewidth}
                  $x_1,x_2,\dots,x_k \in [N]$ are distributed among the $k$ players in the NOF model \par
                  $q^d \geq N$
            \end{varwidth}
\Ensure     \begin{varwidth}[t]{0.9\linewidth}
                $v_1,v_2,\dots,v_k \in \{-kq,\dots,kq\}^{d+1}$ are distributed among the $k$ players in the NOF model, with $\sum_{i \in [k]} v_i = \vec{0}$ if and only if $\sum_{i \in [k]} x_i = N$.
            \end{varwidth}
\State Player $k$ broadcasts a $\delta \in \mathbb{Z}^{k-1}$ such that\begin{enumerate}
    \item[(a)] for each $i \in [k-1]$, $x_i + \delta_i \in \{-(q^d-1)/2,\dots,(q^d-1)/2\}$, and \item[(b)] the assertion in Line~\ref{line:correctcarriespm} holds.\end{enumerate}\label{line:playerkcomm}
\State For $i \in [k-1]$, $a_i \gets x_i + \delta_i$, $a_k \gets N - x_k + \sum_{i \in [k-1]} \delta_i$.
\State For $i \in [k-1]$, let $w_i \gets \base^{\pm}_{q,d}(a_i)$ and let $w_k \gets \base^{\pm}_{q,d+1}(a_k)$.
\State Player $k$ computes $s \in \{-kq,\dots,kq\}^{d}$, the carry string of $\sum_{i \in [k-1]} w_i$.
\State \textbf{Assert:} For each $j \in [d]$, $s_j \modulo{2} = 0$. \label{line:correctcarriespm}
\State For each $i \in [k]$, Player $i$ computes $s$ and the carry vector $w_s$.
\State For each $i \in [k-1]$, $v_i := w_i$ and $v_k := - w_k + w_s$.
\end{algorithmic}
\end{algorithm}

\subsubsection{Correctness of the reduction}

Let us first note that Line~\ref{line:playerkcomm} is always achievable. That is, that there is always a $\delta$ that player $k$ can compute such that the assertion in Line~\ref{line:correctcarriespm} holds. One such $\delta$ is $(-x_1,\dots,-x_{k-1})$, which player $k$ can compute. With this $\delta$, each $a_i$ is $0$ for $i \in [k-1]$. The corresponding $w_i$s would also be $0$ vectors and the carry string of $\sum_{i \in [k-1]} w_i$ would also be a string of $0$s. This carry string satisfies the assertion that for each $j \in [d]$, $s_j \modulo{2}=0$.

Now we prove the correctness of the protocol assuming only that the assertion in Line~\ref{line:correctcarriespm} holds.

\begin{itemize}
    \item We start by showing that $(v_1,\dots,v_k)$ are indeed known to the players in the NOF model. The vector $w_i$ depends only on $x_i$ and $\delta_i$, which are known to all players except player $i$. Since the assertion in Line~\ref{line:correctcarriespm} holds, every player knows that each entry of $s$ is even. Along with the fact that every player misses at most one of the summands in $\sum_{i \in [k-1]} w_i$, from Claim~\ref{clm:reconstructcarry} we see that every player does in fact know the string $s$. The carry vector $w_s$ is a function of $s$, and hence they know $w_s$ as well. The vector $v_i$ depends only on $w_i$ and $w_s$, so all the players other than player $i$ can compute $v_i$.
    \item We finish by showing that $\sum_{i \in [k]} v_i = \vec{0}$ if and only if $\sum_{i \in [k]} x_i = N$.
    \begin{align*}
        \sum_{i \in [k]} x_i = N &\iff \sum_{i \in [k-1]} a_i = a_k \tag{definition of $a_i$'s} \\
        &\iff \sum_{i \in [k-1]} w_i + w_s = w_k \tag{definition of $w_i$'s and the carry vector} \\
        &\iff \sum_{i \in [k]} v_i = \vec{0}. \tag{definition of $v_i$'s}
    \end{align*}
\end{itemize}
It is easy to see that for each $i \in [k-1]$ $v_i \in \{-(q-1)/2,\dots,(q-1)/2\}^d$, (which we will be viewing as a $d+1$-dimensional vector with $v_{i,d+1}=0$). Since $v_k$ has a carry vector added to it, with the carries being as large as $(k-1)q$, $v_k \in \{-kq,\dots,kq\}^{d+1}$.

\subsubsection{Cost of the reduction} \label{subsec:cost-of-protocol}

The communication in the protocol is entirely in Line~\ref{line:playerkcomm}. The cost of this line depends on the size of the smallest set $\Delta \subset \mathbb{Z}^{k-1}$ such that for any $x_1,\dots,x_{k-1} \in [N]$ there exists $\delta \in \Delta$ which satisfies the requirements in Line~\ref{line:playerkcomm}. The communication cost is then merely $\lceil \log |\Delta| \rceil$ since Player $k$ only needs to send the index of an element of $\Delta$.

The size of $\Delta$ is related to the size of the set
\begin{align*}
    S := \{&(a_1,\dots,a_{k-1}) \in \{-(q^d-1)/2,\dots,(q^d-1)/2\}^{k-1} \mid\\ &\text{ the carry string of } \sum_{i \in [k-1]} \base^{\pm}_{q,d}(a_i) \text{ has only even entries}\}.
\end{align*}
$\Delta$ is the smallest set of shifts of $S$ that covers $[N]^{k-1}$. We can show the following bounds on $|\Delta|$. \[ N^{k-1}/|S| \leq |\Delta| \leq ((2q^d)^{k-1}/|S|) \cdot k\log N. \]

The lower bound on $|\Delta|$ is straightforward. For the upper bound we use the probabilistic method. Choose shifts $\delta^{(1)},\dots,\delta^{(t)}$ uniformly at random from $\{-N-(q^d-1)/2,\dots,(q^d-1)/2\}^{k-1}$. For any $\overline{x} = (x_1,\dots,x_{k-1})$, there are exactly $|S|$ different shifts that would land $\overline{x}$ in $S$. Hence the probability that a uniformly random shift is good for $\overline{x}$ is $|S|/(q^d+N)^{k-1} \geq |S|/(2q^d)^{k-1}$. The probability that none of the $t$ shifts are good for $\overline{x}$ is at most $(1-|S|/(2q^d)^{k-1})^t$. Setting $t = ((2q^d)^{k-1}/|S|) \cdot k \log N$, this probability is at most $e^{-k \log N} \leq 1/N^k$. Hence by a union bound over all $N^{k-1}$ possible values of $\overline{x}$, there is a positive probability that (and hence there exists a set of $t$ shifts such that) each $\overline{x}$ has a shift that is good for it.

The cost of the protocol is hence at most $k-1 + \log(q^{d(k-1)}/|S|) + \log k + \log \log N + 1$.

So how large is $S$? Note that the integers from $-(q^d-1)/2$ to $(q^d-1)/2$ have centered base-$q$ representations ranging over all vectors in $\{-(q-1)/2,\dots,(q-1)/2\}^d$. Hence
\begin{align*}
    \frac{|S|}{q^{d(k-1)}} &= \Pr_{x_1,\dots,x_{k-1} \in \{-(q^d-1)/2, \dots, (q^d-1)/2\}} [(x_1,\dots,x_{k-1}) \in S]\\
    &= \Pr_{v_1,\dots,v_{k-1} \in \{-(q-1)/2,\dots,(q-1)/2\}^d}[\text{the carry string of }\sum_{i \in [k-1]} v_i \text{ has only even entries}].
\end{align*}

We now use the following claim which we prove in \Cref{sec:euler}.

\begin{claim}\label{claim:realprobability}
Let $r_1,\dots,r_{k-1}$ be real numbers uniformly sampled from $[-1/2,1/2)$.
\[ \Pr_{r_1,\dots,r_{k-1}} \left[\sum_{i \in [k-1]} r_i \modulo{2} \in [-1/2,1/2)\right] = \frac{1}{2} + \frac{E_{k-1}}{2(k-1)!}, \]
where $E_n$ is the $n$th Euler zigzag number.\footnote{See \Cref{sec:euler} or entry \href{https://oeis.org/A000111}{A000111} in The On-Line Encyclopedia of Integer Sequences (starts at $E_0$) for more details.}
\end{claim}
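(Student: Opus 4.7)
Let $n := k-1$ and $S := \sum_{i=1}^{n} r_i$. The plan is a Fourier-analytic (characteristic-function) computation. First I would observe that $\mathbf{1}[S \bmod 2 \in [-1/2, 1/2)] = \tfrac{1}{2}(1 + g(S))$, where $g(x) := \mathrm{sgn}(\cos(\pi x))$ is the period-$2$ square wave, equal to $+1$ on $[-1/2, 1/2)$ and $-1$ on $[1/2, 3/2)$. Hence the probability in the claim equals $\tfrac{1}{2} + \tfrac{1}{2}\,\mathbb{E}[g(S)]$, reducing the task to showing $\mathbb{E}[g(S)] = E_n/n!$.

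Next I would expand $g$ in its standard Fourier series $g(x) = \sum_{m \geq 0} \tfrac{4(-1)^m}{(2m+1)\pi}\cos((2m+1)\pi x)$. The characteristic function of a single $r_i$ is $\mathbb{E}[e^{itr_i}] = \tfrac{2\sin(t/2)}{t}$, which at $t = (2m+1)\pi$ is the real value $\tfrac{2(-1)^m}{(2m+1)\pi}$. By independence of the $r_i$ this gives $\mathbb{E}[\cos((2m+1)\pi S)] = \bigl(\tfrac{2(-1)^m}{(2m+1)\pi}\bigr)^n$, and substituting term by term yields
\[ \mathbb{E}[g(S)] \;=\; \frac{4 \cdot 2^n}{\pi^{n+1}}\sum_{m \geq 0}\frac{(-1)^{m(n+1)}}{(2m+1)^{n+1}}. \]

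The final step is to evaluate the inner series. Euler's classical identity for the Dirichlet beta function at odd arguments handles the case $n$ even, and the case $n$ odd follows from the standard evaluation of $\sum 1/(2m+1)^{2j}$ via $\zeta(2j)$ combined with the tangent-number identity $|B_{2j}| = 2j\,E_{2j-1}/[(2^{2j}-1)2^{2j}]$. The two cases merge into the single closed form, valid for all $n \geq 0$,
\[ \sum_{m \geq 0}\frac{(-1)^{m(n+1)}}{(2m+1)^{n+1}} \;=\; \frac{E_n\,\pi^{n+1}}{2^{n+2}\,n!}, \]
after which all the prefactors in the previous display collapse exactly to $E_n/n!$, completing the proof. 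Conceptually the argument is a routine characteristic-function computation; the main (and essentially only) obstacle is invoking this Euler-series closed form uniformly in the parity of $n+1$, which one may alternatively prefer to derive directly from the exponential generating function $\sec x + \tan x = \sum_n E_n x^n/n!$ together with the Mittag-Leffler partial-fraction expansion evaluated at $x = \pi/2$.
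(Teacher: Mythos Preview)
Your argument is correct and takes a genuinely different route from the paper's proof. The paper proceeds combinatorially: after shifting each $r_i$ by $1/2$ to get an Irwin--Hall sum, it expresses the target probability as an alternating sum of increments $F_n(\ell+1)-F_n(\ell)$ (or the half-integer analogue), identifies these increments with Eulerian numbers $\eulerian{n}{\ell}$ (respectively the type-$B$ Eulerian numbers $\eulerian{B_n}{\ell}$), and then evaluates the alternating sums $A_n(-1)$ and $B_n(-1)$ via the exponential generating functions of the Eulerian polynomials to recover the tangent and secant numbers. This requires a four-way case split on $k \bmod 4$.

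Your Fourier/characteristic-function approach sidesteps the Eulerian-number machinery entirely: writing the indicator as $\tfrac12(1+g(S))$ with $g$ a square wave, expanding $g$ and using $\mathbb{E}[e^{i(2m+1)\pi S}]=\bigl(\tfrac{2(-1)^m}{(2m+1)\pi}\bigr)^n$ collapses everything to the single series $\tfrac{2^{n+2}}{\pi^{n+1}}\sum_{m\ge0}\tfrac{(-1)^{m(n+1)}}{(2m+1)^{n+1}}$, whose closed form in terms of $E_n$ is classical (Dirichlet $\beta$ at odd arguments for $n$ even; $(1-2^{-(n+1)})\zeta(n+1)$ together with the Bernoulli--tangent relation for $n$ odd). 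The trade-off is that your proof is shorter and more uniform (only a parity split, not a mod-$4$ split), but it imports the nontrivial evaluations of $\beta(2j{+}1)$ and $\zeta(2j)$ as black boxes, whereas the paper's argument is more self-contained, deriving the needed identities from the generating functions $A(t,u)$ and $B(t,u)$ directly. One small point worth stating explicitly in your write-up: the term-by-term interchange of $\mathbb{E}$ with the Fourier series is justified for $n\ge1$ by absolute convergence of $\sum(2m+1)^{-(n+1)}$.
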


Observe that the above quantity represents the limiting behaviour, as $q \to \infty$, of a specific entry of the carry string being even. The rest of the proof will show that the probability that a specific entry (say, the $i$th entry) of the carry string is even is within an additive $3k/2q$ of the probability in \Cref{claim:realprobability}, regardless of what we fix the entries of $v_1$ to $v_{k-1}$ to be outside of their $i$th entries.

\begin{itemize}
    \item The probability that $s_1$ is even is the probability that $k-1$ random numbers $a_1,\dots,a_{k-1}$ chosen from $\{-(q-1)/2,\dots,(q-1)/2\}$ add up to give an even carry. Note that the carry is even if and only if the sum modulo $2q$ lies in $\{-(q-1)/2,\dots,(q-1)/2\}$. We approximate this by a probability arising from the following real-valued experiment. Take $k-1$ real numbers $r_1,\dots,r_{k-1}$ from the interval $[-1/2,1/2)$. Find the probability that their sum modulo $2$ lies in $[-1/2,1/2)$. The two processes are related as follows.
    
    Let the set $B = \{-(q-1)/2,\dots,(3q-1)/2\}$ represent the set of integers modulo $2q$. Divide $[-1/2,3/2)$ into $2q$ intervals of size $1/2q$ each. Let $i_1,\dots,i_{k-1}$ be the index of the intervals that $r_1,\dots,r_{k-1}$ lie in. Each $i$ is a uniformly random number from $1$ to $q$, and so $a_j$ is distributed as the $i_j$th element of $B$. Let $i_s$ be the interval that the sum $\sum_j r_j \modulo{2}$ lies in. Then $\sum_j a_j$ modulo $2q$ lies within the $i_s$ through $i_{s+k-2}$th elements of $B$.
    
    So either we have $\sum_j r_j \modulo{1} \in [1/2 - k/2q,1/2)$, or else it must be the case that $\sum_j r_j \modulo{2} \in [-1/2,1/2) \iff \sum_j a_j \modulo{2q} \in \{-(q-1)/2,\dots,(q-1)/2\}$. Hence the difference in probabilities of the experiments is at most $\Pr[\sum_j r_j \modulo{1} \in [1/2 - k/2q,1/2)]$. This is $k/2q$, since the addition of a uniformly random number between $[0,1]$ to any random variable makes its distribution modulo 1 the uniform distribution.
    \item For other coordinates of the carry string another complication arises. Since the sum in a coordinate is the sum of $k-1$ random numbers plus the carry from the previous coordinate, that adds another change in the experiment. However, the carry from the previous coordinate is always within $\{-k+1,\dots,k-1\}$ so it adds an uncertainty of $\pm k/2q$ to the sum in the real-valued experiment. Hence we can use the same real-valued experiment, except this time we bound the difference in probabilities as $\Pr[\sum_j r_j \modulo{1} \in [1/2 - k/q,1/2 + k/2q)] = 3k/2q$.
\end{itemize}

Hence the probability that all entries of the carry string are even is at least $(1/2 + E_{k-1}/2(k-1)! - 3k/2q)^d$. The cost of the protocol is at most
\[ d \log\left(\frac{1}{1/2 + E_{k-1}/2(k-1)! -3k/2q}\right) + k + \log k + \log \log N. \]

Since $k/q \ll 1$ and $\frac{d}{dt}\log\left(\frac{1}{1/2 + t}\right) = -\frac{2}{\ln 2} > -3$ at $t=0$, this quantity is at most $$d\log\left(\frac{1}{1/2 + E_{k-1}/2(k-1)!}\right) + d \cdot \frac{9k}{2q} + O(k + \log \log N),$$ with $9dk/2q$ being $o(1)$ if $d \leq \log N/\log \log N$. In our usage we will have $d \leq \sqrt{\log N}$.

To simplify this expression, define
\begin{equation}\label{eqn:ck}
c_k \triangleq 1- \log\left(\frac{1}{1/2 + E_{k-1}/2(k-1)!}\right).
\end{equation}
As $k$ grows, $c_k \to \frac{2}{\ln 2} \left(\frac{2}{\pi}\right)^k$. \Cref{alg:reductionexactlyntokvexac} uses $(1-c_k)d + O(k + \log \log N)$ bits of communication.

\subsection{Putting everything together}\label{sec:protocolforexactlyn}

Our protocol starts by running \Cref{alg:reductionexactlyntokvexac} with parameters $q,d$ such that $q^d \geq N$. The players end up with vectors $v_1,\dots,v_k$, each in $\{-kq,\dots,kq\}^{d+1}$, (in the NOF setting) and they want to know whether $\sum_{i \in [k]} v_i = \vec{0}$. Note that this sum is equal to $\vec{0}$ if and only if for each $j \in [d+1]$, $\sum_{i \in [k]} v_{i,j} = 0$. Each of these is an instance of $\ExactlyN$ with the inputs coming from $\{-kq,\dots,kq\}$.

At this point, they can make a cost-$0$ reduction to $\kvecPPcc[\{-k^3q,\dots,k^3q\}^{d+1}]{1}$ in the NIH setting. This is because each instance of $\ExactlyN$ has a cost-$0$ reduction to $\kPPccspecial{\{-k^3q,\dots,k^3q\}}{1}$ (as described in \Cref{sec:nih_to_ap}) and because $\kvecPPcc[\{-k^3q,\dots,k^3q\}^{d+1}]{1}$ is equivalent to $\FunctionName{AND}_{d+1} \circ \kPPccspecial{\{-k^3q,\dots,k^3q\}}{1}$ (see Observation~\ref{obs:vec-as-and}).
One should note here that the reduction in \Cref{sec:nih_to_ap} works even when the input is allowed to include negative numbers. This is also true of \Cref{alg:reductionkvpptokpp}, which is the first step in the NIH protocol for $\kvecPPcc{1}$ and which outputs a nonnegative $\kPP{2}$.

We can now use the NIH protocol for $\kvecPPcc[\{-k^3q,\dots,k^3q\}^{d+1}]{1}$ (\cref{thm:nihcomplexities}) to complete the protocol. Let $t = \lceil \log k \rceil$. The cost of the NIH protocol is 
$(t-1) 2^{(t-2)/2}\sqrt[t-1]{2^{t-2}\log (k^6q^2(d+1))} + O(tk^2 \log \log (k^6q^2(d+1)))$.

The total cost of the protocol is then \[ (1-c_k)d + k + \log k + \log \log N + (t-1) 2^{(t-2)/2}\sqrt[t-1]{2^{t-2}\log (k^6q^2(d+1))} + O(tk^2 \log \log (k^6q^2(d+1))). \]
As done in the proof of \cref{thm:nihcomplexities} we can optimize the values of $d$ and $q$ and end up with a complexity of
\begin{align*}
&\,t 2^{(t-1)/2} \sqrt[t]{(1-c_k) \log N} + O(tk^2 \log \log N) \\\leq &\left(1-\frac{c_k}{t}\right)t2^{(t-1)/2}\sqrt[t]{\log N} + O(tk^2 \log \log N).
\end{align*} \section{Open problems}\label{sec:open_problems}

In this paper we give the first explicit protocol for $\ExactlyN$ that matches the performance of Rankin's construction. We then use the details of this explicit protocol to find an improvement that relies on knowledge shared by the parties.

However, this improvement itself relies on an existential argument: there is a probabilistic argument in \cref{subsec:cost-of-protocol}. Therefore our final improved protocol has a non-constructive part.

\begin{openproblem}
	Give a completely explicit protocol that matches the performance of the NOF protocol from~\cref{thm:main_result}.
\end{openproblem} 

The constructions of Behrend and Rankin use a pigeonhole argument over spheres in some vector space. As mentioned in \cref{remark:annuli}, there is a line of work that improves the lower-order terms of these constructions~\cite{elkinImprovedConstructionProgressionfree2011,greenNoteElkinImprovement2010,obryantSetsIntegersThat2008,hunterCornerfreeSetsTorus2022}. The general strategy is to replace the spheres with thin \emph{annuli}. We have not attempted to use annuli in our construction, but it seems to us that this might lead to an improvement in lower order terms in our case too.

\begin{openproblem}
	Improve the lower-order terms of our corner-free set construction by replacing spheres with annuli.
\end{openproblem}

Beyond this, any further improvements in upper or lower bounds for any of the problems discussed in this paper would be important advances on their own terms. We wish to highlight a few directions here that are of particular interest to us.

Our protocol exploits the shared information between the players in the NOF setting. As the number of parties increases the amount of shared information also increases. One might think that this would lead to a corresponding increase in the magnitude of the improvement in the NOF setting over the protocol described in \cref{sec:protocol_from_rankin}, which makes no use of the shared information. However, this is not what we see: the factor of $\left(1 - c_k/t\right)$ from \cref{thm:main_result} actually grows as $k$ increases. 

    \begin{openproblem}
    Give a corner-free set construction whose advantage over Rankin's construction improves as $k$ grows.
\end{openproblem}

The structure of Rankin's protocol seems to necessitate a lack of smoothness in the parameters of the construction. Namely, the best-known $\kAP$-free set construction when $k$ is not of the form $2^t + 1$ (for an integer $t$) is to round down to the nearest such value and proceed with the corresponding construction. Is it possible to obtain a bound that depends on $\log k$ instead of $\lceil \log k \rceil$? This would be exciting as it would require a different argument than the degree-doubling method used by Rankin.

\begin{openproblem}
	Give a $\kAP$-free set construction that improves for each increase of the value $k$.
\end{openproblem}

Finally, an important open problem is to improve the large gap between the upper and lower bounds on the size of corner-free sets, where progress has been stuck for more than 15 years. 
We feel that it may be possible to  substantially improve the NOF communication complexity of $\ExactlyN$, by further exploiting the shared information in the NOF model.  
On the other hand, if substantial improvements are not possible for $\ExactlyN$, strong lower bounds for $\ExactlyN$ would give a  breakthrough separation of deterministic from randomized NOF protocols for an explicit and well-studied function. As mentioned in the introduction, the recent breakthrough result of Kelley and Meka proved an upper bound for $\threeAP$-free sets~\cite{kelley-meka}, nearly matching Behrend's construction.
 
However, corners appear to be a much more complicated combinatorial object, and upper bounds on corner-free sets have historically lagged behind those for $\threeAP$-free sets.
Thus narrowing this gap is an important problem in additive combinatorics as well.

\begin{openproblem}
	Narrow the gap between the best known upper and lower bounds on the NOF complexity of $\ExactlyN$.
\end{openproblem}

 \appendix
\section{Relations between combinatorial and communication problems}\label{sec:reductions}

In this appendix we use the following shorthand notation.

\begin{itemize}
    \item $r_k(N)$ is the maximum size of a subset of $[N]$ that does not contain a $\kAP$.
    \item $c_k(N)$ is the minimum number of colors needed to color $[N]$ such that no $\kAP$ is monochromatic.
    \item $r^{\angle}_{k}(N)$ is the maximum size of a subset of $[N]^k$ that does not contain a $k$-dimensional corner.
    \item $c^{\angle}_{k}(N)$ is the minimum number of colors needed to color $[N]^k$ such that no $k$-dimensional corner is monochromatic.
\end{itemize}

\subsection{Reduction of \texorpdfstring{$\ExactlyN$}{ExactlyN} to NIH \texorpdfstring{$\Equality$}{Equality} with AP promise}\label{sec:nih_to_ap} 

Let $x_1, \ldots, x_k$ be the inputs for $\ExactlyN$ with $k$ players. Each player, based on the other players' inputs, can calculate the value that their input must take in order for $x_1 + \ldots + x_k = N$ to be true; namely
\[x'_i = N - \sum_{\substack{j \in [k]\\j \not= i}} x_j.\]
Each of these guesses differs from the actual input by the same amount: $ \Delta: = x'_i - x_i = T - \sum_{j \in [k]} x_j$, for all $i$.
Next each player attempts to compute the value $X = \sum_{j \in [k]} j x_j$ by replacing their input value (which they do not know) with the guess input calculated above. Thus Player $i$ guesses the following value for $X$: 
\[ X_i = i x'_i + \sum_{\substack{j \in [k]\\j \not= i}} j x_j. \]

Observe that for all $i$ we have $X_i = X - i\Delta$ and therefore the values $X_i$ form a $\kAP$.

The $\kAP$ $(X_1, \ldots, X_k)$ is trivial (i.e. all of the elements of the sequence are equal) if and only if $\Delta = 0$, which occurs if and only if $\sum_{i \in [k]} x_i = N$. In this case, $\ExactlyN(x_1, \ldots, x_k) = 1$. 

\subsection{Equivalence between NIH \texorpdfstring{$\Equality$}{Equality} with AP promise and AP-free coloring number}\label{sec:nih_to_coloring}
 We include the equivalence for only $k=3$ to avoid tedious notation in the proof, but the proof can easily be generalized for more than 3 players.
\begin{lemma}
 The number-in-hand communication complexity of $\Equality(x,y,z)$ with the promise that $x,y,z$ is a $\threeAP$ is $\Theta (\log c_3(N))$.
\end{lemma}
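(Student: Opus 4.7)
The plan is to prove both directions by using the coloring as an interface between the combinatorial and communication formulations. For the upper bound, fix an optimal $\threeAP$-free coloring $\chi \colon [N] \to [c_3(N)]$. On input $(x,y,z)$ satisfying the $\threeAP$ promise, Player~$1$ broadcasts $\chi(x)$ using $\lceil \log c_3(N) \rceil$ bits; then Players $2$ and $3$ each send one bit indicating whether $\chi(y) = \chi(x)$ and $\chi(z) = \chi(x)$. The players output $1$ iff all three colors match. Correctness: if $x=y=z$ the three colors are trivially equal, while if $(x,y,z)$ is a nontrivial $\threeAP$ then by definition of $\chi$ the three cannot share a color. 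This gives a protocol of cost $\log c_3(N) + O(1)$.

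For the lower bound, suppose $\Pi$ is a deterministic NIH protocol of cost $c$ that solves the promise problem. Define a coloring $\chi \colon [N] \to \{0,1\}^{\le c}$ by letting $\chi(x)$ be the transcript of $\Pi$ on the (trivially promise-satisfying) input $(x,x,x)$. Then $\chi$ uses at most $2^c$ colors. I claim $\chi$ is $\threeAP$-free. Suppose for contradiction that $(x,y,z)$ is a nontrivial $\threeAP$ with $\chi(x)=\chi(y)=\chi(z)=\tau$. The standard NIH rectangle property says that the set of inputs producing a fixed transcript is a combinatorial rectangle $R_1 \times R_2 \times R_3$. Since $(x,x,x),(y,y,y),(z,z,z)$ all produce $\tau$, we have $x \in R_1$, $y \in R_2$, $z \in R_3$, and hence $(x,y,z)$ also produces the transcript $\tau$. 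Because the output is a function of the transcript alone, $\Pi$ outputs $1$ on $(x,y,z)$, contradicting the correctness of $\Pi$ on this nontrivial $\threeAP$ input. Therefore $c_3(N) \le 2^c$, i.e.\ $c \ge \log c_3(N)$.

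The two bounds together give $\Theta(\log c_3(N))$. The only subtlety worth flagging is the rectangle property used in the lower bound: it is the direct analogue of the classical two-party rectangle lemma, proved by induction on the transcript, using that each message of Player~$i$ is a function of $x_i$ and the transcript so far. No genuine obstacle arises, and the same argument generalizes to $k$ players by coloring according to the transcript on $(x,x,\dots,x)$ and invoking the $k$-dimensional NIH rectangle property, together with the $\kAP$ promise to rule out monochromatic nontrivial $\kAP$s.
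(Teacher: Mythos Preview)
Your proof is correct and follows essentially the same approach as the paper: the upper bound is identical, and for the lower bound both you and the paper color each $w$ by the transcript on $(w,w,w)$ and argue that a monochromatic nontrivial $\threeAP$ would force the protocol to err on that input. The only cosmetic difference is that you package the key step as the ``NIH rectangle property,'' whereas the paper argues the same fact directly by noting that each player's message depends only on her own input and the transcript so far.
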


\begin{proof}
Given a coloring of $[N]$ with $c_3(N)$ colors such that no $\threeAP$ is monochromatic, here is a communication protocol
using $2 + \log c_3(N)$ bits: the first player writes the color of her number on the board, and the other two write one bit determining whether 
the color of their numbers is equal to it or not.

Given an NIH communication protocol $\Pi$ for $\Equality(x_1, x_2,x_3)$, we define the following coloring: given $w \in  [N]$ its color is the transcript
(i.e., what is written on the board) of $\Pi$ on the input $(w, w, w)$. We claim that this coloring avoids monochromatic $\threeAP$s. Assume (seeking a contradiction) that this is not true, and let $x+y=2z$ be three distinct numbers that
share the same color. Since the color of a number $w$ is the transcript of $\Pi$ on $(w,w,w)$ it follows that the transcripts
of $\Pi$ on $(x, x, x)$, $(y, y, y)$ and $(z, z, z)$ are all equal. But since each player decides what to communicate based on the prior communication and their own input, this same transcript would be generated on the input $(x, y, z)$.
But since the transcript is the same the protocol would output the same answer on the inputs $(x,x,x)$ and $(x,y,z)$ contradicting the protocol's correctness.
\end{proof}

\subsection{Equivalence between NOF \texorpdfstring{$\ExactlyN$}{ExactlyN} and corner-free  coloring number}\label{sec:nof_equiv_corner_coloring}
\begin{theorem}[\cite{MultipartyPchandrarotocols1983}, \cite{rao2020communication}]\label{thm:nof_equiv_corner_coloring} 

Let $c$ be the NOF communication complexity of the $\ExactlyN$ problem with $k$ players. The following holds:
$$\log{c^{\angle}_{k-1}\left(\frac{N}{k-1}\right)} \leq c \leq \log{c^{\angle}_{k-1}(N)} + k-1.$$ 

\end{theorem}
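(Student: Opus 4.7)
The plan is to prove the two inequalities separately by pairing the NOF protocol for $\ExactlyN$ with a $(k-1)$-dimensional corner-free coloring. The key geometric object in both directions is the family of $k$ ``guess points'' $q_1,\ldots,q_k \in \mathbb{Z}^{k-1}$ associated to an input $(x_1,\ldots,x_k)$: for $i \in [k-1]$ define $q_i$ by setting its $j$-th coordinate to $x_j$ for $j \neq i$ and to $N - \sum_{l \neq i} x_l = x_i + \Delta$ for $j = i$, and set $q_k = (x_1,\ldots,x_{k-1})$, where $\Delta := N - \sum_l x_l$. Two facts are crucial: (a) each $q_i$ is computable by player $i$ alone, since its recipe uses exactly the inputs $\{x_l : l \neq i\}$ that player $i$ sees; and (b) $q_1,\ldots,q_k$ form a corner in $\mathbb{Z}^{k-1}$ with base $q_k$ and shift $\Delta$, which is trivial iff $\Delta = 0$.

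For the upper bound, I would fix a coloring $\chi \colon [N]^{k-1} \to [C]$ with $C = c^{\angle}_{k-1}(N)$ and no monochromatic non-trivial corner. Player~1 broadcasts $\chi(q_1)$ using $\lceil \log C \rceil$ bits, and each remaining player $i \in \{2,\ldots,k\}$ sends a single bit indicating whether $\chi(q_i) = \chi(q_1)$. The players accept iff all bits are ``yes''. If $\sum_l x_l = N$ then $\Delta = 0$, all $q_i$ coincide and every bit is ``yes''; otherwise $q_1,\ldots,q_k$ is a non-trivial corner, so the no-monochromatic-corner property guarantees that at least one bit differs. The total cost is $\log c^{\angle}_{k-1}(N) + (k-1)$, as required.

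For the lower bound, given a deterministic NOF protocol $\Pi$ of cost $c$, I would color a point $v = (y_1,\ldots,y_{k-1}) \in [N/(k-1)]^{k-1}$ by the transcript of $\Pi$ on the input $x^{(v)} := (y_1,\ldots,y_{k-1},\, N - \sum_l y_l)$; the restriction $y_l \le N/(k-1)$ ensures that the last coordinate is nonnegative, so $x^{(v)}$ is a legal YES instance on which $\Pi$ accepts. The number of colors is at most $2^c$. To verify that no non-trivial corner is monochromatic, suppose for contradiction that $v_0,\, v_0 + \delta e_1,\ldots, v_0 + \delta e_{k-1}$ share a transcript $T$ with $\delta \neq 0$, and introduce the ``phantom'' input $x^* := (y_1,\ldots,y_{k-1},\, N - \sum_l y_l - \delta)$, which has sum $N - \delta$ and is therefore a NO instance. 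The contradiction will follow by showing that $\Pi$ also produces transcript $T$ on $x^*$, forcing $\Pi$ to accept $x^*$.

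The main obstacle is exactly this last transcript-coincidence claim. The idea is to argue by induction on the rounds of $\Pi$: at any round, the speaking player's view on $x^*$ agrees with her view on one of the $k$ corner inputs (player $k$'s view on $x^*$ matches her view on $x^{(v_0)}$, and for $i \in [k-1]$ player $i$'s view on $x^*$ matches her view on $x^{(v_0 + \delta e_i)}$ because player $i$ is blind to the coordinate that has been perturbed). Combined with the inductive hypothesis that the partial transcript on $x^*$ has so far matched the prefix of $T$, the player's next message is forced to agree with $T$. Carefully formalizing this view-matching -- and checking that the range considerations make $x^*$ and all $k$ corner inputs live in $[N]$ -- is the main bookkeeping of the proof and explains the $N/(k-1)$ factor on the lower-bound side.
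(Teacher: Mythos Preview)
Your proposal is correct and follows essentially the same approach as the paper: the same corner of guess points for the upper bound (the paper has player $k$ announce the color of $q_k=(x_1,\dots,x_{k-1})$ rather than player $1$ announcing $\chi(q_1)$, a cosmetic relabeling), and the same transcript-coloring of $[N/(k-1)]^{k-1}$ together with the same phantom input $x^*$ and view-matching argument for the lower bound. If anything, you are slightly more careful than the paper in flagging the range bookkeeping for $x^*$.
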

\begin{proof}
For the upper bound, consider the set of points
\begin{align*} 
S =\{&(x_1, \ldots, x_{k-1}), \\ & (N - \sum_{j\neq 1} x_j, \ldots, x_{k-1}) \\&\quad\quad\quad \ldots, \\& (x_1, \ldots x_{i-1}, N - \sum_{j\neq i} x_j, x_{i+1}, \ldots, x_{k-1}), \\&\quad\quad\quad \ldots, \\& (x_1, \ldots , N-\sum_{j\neq k-1} x_j) \}. 
\end{align*}
$S$ is a $(k-1)$-dimensional corner in $[N]^{k-1}$; namely, \[S =  \{(x_1, \ldots, x_{k-1}),
 (x_1 + d, \ldots, x_{k-1}), 
 \ldots, (x_1, \ldots, x_{k-1}+d) \}\] for  $d = N - (x_1 + \ldots + x_k)$. Now assume $[N]^{k-1}$ is colored by $c^{\angle}_{k-1}(N)$ colors avoiding monochromatic corners. Thus, the points in $S$ receive the same color if and only if the corner is trivial, i.e. $d=0$, which in this case implies $x_1 \ldots + x_k = N$ -- exactly what the protocol needs to check. So the protocol checks whether all the points received the same color: the player that has $x_k$ on its forehead announces the color of $(x_1, \ldots, x_{k-1})$ with $\log{c^{\angle}_{k-1}(N)}$ bits. The other $k-1$ players send a bit each indicating whether the unique point that they can compute has the same color. 

For the lower bound, assume there is a protocol solving $\ExactlyN$ with $c$ bits.
 We show a coloring of $$C:= \underbrace{\left[ \frac{N}{k-1}\right] \times \ldots \times \left[ \frac{N}{k-1}\right]}_{k-1}$$ that avoids monochromatic corners. Color $(x_1, \ldots, x_{k-1}) \in C$ by the transcript of the protocol on input $(x_1, \ldots, x_{k-1}, N-(x_1 + \ldots + x_{k-1}))$. The number of colors is at most $2^c$. Seeking a contradiction, assume the set 
 \begin{align*}
 &(x_1, \ldots, x_{k-1}), \\
 &(x_1 + d, \ldots, x_{k-1}), \\
 &\quad\quad\quad\quad\vdots \\
 &(x_1, \ldots, x_{k-1}+d)
 \end{align*}
 forms a monochromatic corner for some $d > 0$. This means all of the following inputs result in the same transcript.
\begin{align*}
 P = \{&(x_1, \ldots, x_{k-1}, N-(x_1 + \ldots + x_{k-1})), \\
 &(x_1 + d, \ldots, N-(x_1 + d + \ldots + x_{k-1})), \\
 &\quad\quad\quad\quad\quad\quad\quad\vdots \\
 &(x_1, \ldots, x_{k-1}+d, N-(x_1 + d + \ldots + x_{k-1}))\}.
 \end{align*}
 However, this implies the input $p=(x_1, \ldots, x_{k-1}, N-(x_1 + d + \ldots + x_{k-1}))$ also results in the same transcript (every player has an input from $P$ that it cannot distinguish from $p$, and hence at no point in the protocol does the transcript for $p$ deviate from the transcript for the points in $P$). This is a contradiction to the correctness of the protocol as for all of the points in $P$, their coordinates sum up to $N$, but the coordinates of $p$ sum up to $N-d$.
\end{proof}

\subsection{AP-free set induces a corner-free set}\label{sec:ap_set_to_corner_set}

\begin{claim}[\cite{ajtaiSetsLatticePoints1974},\cite{yufei2023graph}]\label{clm:ap_set_to_corner_set}
$r^{\angle}_{k-1}(k^2 N) \geq N^{k-2} \cdot r_{k}(N)$.
\end{claim}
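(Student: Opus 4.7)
My strategy is the standard Ajtai--Szemer\'edi-style reduction: construct an injection $\phi : A \times [N]^{k-2} \to [k^2 N]^{k-1}$ whose image $S$ is corner-free, where $A \subseteq [N]$ is a fixed $\kAP$-free set of size $r_k(N)$. The shape of $\phi$ is dictated by two demands. First, $\phi$ must be injective, which is cleanest if the preimage can be read off linearly from the image. Second, corners in $S$ must force $\kAP$s in $A$: shifting the $j$-th coordinate of $y \in S$ by $d$ should translate into shifting the associated $a$-value by $c_j d$ for some coefficient $c_j$, so I need $\{0, c_1, \ldots, c_{k-1}\}$ to form a $\kAP$. The simplest choice meeting both demands is $c_j := k - j$, giving the set $\{0, 1, 2, \ldots, k-1\}$ (a $\kAP$) with $c_{k-1} = 1$ (so the last coordinate can be solved for over $\mathbb{Z}$). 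Concretely I would define
\[
\phi(a, t_1, \ldots, t_{k-2}) := \Bigl(t_1, \ldots, t_{k-2}, \; a - \sum_{j=1}^{k-2}(k-j)\,t_j + \bigl(\tbinom{k}{2} - 1\bigr) N\Bigr).
\]

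The verification then proceeds in three short steps. \emph{(i) Range and size.} The first $k-2$ coordinates lie in $[N]$, and the additive shift $(\binom{k}{2}-1) N$ places the last coordinate in $[1, \binom{k}{2} N] \subseteq [k^2 N]$; also $\phi$ is manifestly injective, so $|S| = |A| \cdot N^{k-2} = N^{k-2}\, r_k(N)$. \emph{(ii) Affine recovery.} For every $y = \phi(a, t) \in S$ one has $a = \sum_{j=1}^{k-1}(k-j) y_j - (\binom{k}{2} - 1) N$, i.e., the preimage $a$ is an affine function of $y$. \emph{(iii) Corner-freeness.} Given a corner $\{y^{(0)},\, y^{(0)} + d\, e_j : j \in [k-1]\} \subseteq S$ with shift $d$, the affine formula from (ii) gives $a_j = a_0 + (k-j)\,d$ for each $j \in [k-1]$, so
\[
\{a_0, a_1, \ldots, a_{k-1}\} = \{a_0 + i\, d : i = 0, 1, \ldots, k-1\}
\]
is a $\kAP$ in $A$. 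Since $A$ is $\kAP$-free, this forces $d = 0$, i.e., the corner is trivial.

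The only delicate point is selecting the coefficients $c_j$: they must simultaneously (a) form a $\kAP$ together with $0$, (b) include $\pm 1$ in the slot used for back-solving, and (c) produce an image that fits in a box of side $O(k^2 N)$ after an additive shift. The choice $c_j = k - j$ reconciles all three constraints, after which the rest of the argument is routine substitution.
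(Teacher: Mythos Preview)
Your proof is correct and is essentially the same construction as the paper's: the paper defines the corner-free set as $\{x \in [k^2N]^{k-1} : \sum_{j} j\,x_j \in A + (k^2-1)N\}$ and lower-bounds its size by freely choosing $x_2,\ldots,x_{k-1}\in[N]$ and solving for $x_1$, which is exactly your injection $\phi$ after reversing the coordinate order (so the coefficient vector $(1,2,\ldots,k-1)$ becomes your $(k-1,\ldots,1)$) and using the looser shift $(k^2-1)N$ in place of your $(\binom{k}{2}-1)N$. The corner-to-$\kAP$ deduction is identical in both.
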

\begin{proof}
 Let $A \subset [N]$ be a $\kAP$-free set that has size $r_k(N)$. Define the set $$Q:= \big\{(x_1, \ldots, x_{k-1}) \in [k^2N]^{k-1} \colon x_1 + 2x_2 + \ldots + (k-1)x_{k-1} \in A + (k^2-1)N \big\}.$$
$Q$ is corner-free. Indeed, assume $Q$ contains the corner \[(x_1, \ldots, x_{k-1}),  (x_1+d, \ldots, x_{k-1}), \ldots, (x_1, \ldots, x_{k-1}+d)\] for some $d>0$. Then, the sequence $s, s+d, \ldots, s+(k-1)d$ with \[s = x_1 + 2x_2 + \ldots + (k-1)x_{k-1} - (k^2-1)N\] is a $\kAP$ in $[N]$.

 For each $a \in A$ there are at least $N^{k-2}$ elements in $[k^2N]^{k-1}$ that satisfy $x_1 + 2x_2 + \ldots + (k-1)x_{k-1} = a + (k^2-1)N$: choose any $x_2,\dots,x_{k-1} \in [N]$ and there exists $x_1 \in [k^2N]$ that makes the equation true. 
 Thus, $|Q| \geq N^{k-2} |A|$. Combining this with  $r^{\angle}_{k-1}(k^2N) \geq |Q|$ concludes the proof.
\end{proof}

\subsection{AP-free coloring implies a corner-free coloring}\label{sec:ap_coloring_corner_coloring}
\begin{claim}[\cite{MultipartyPchandrarotocols1983}]
    $c_{k-1}^{\angle}\left(\frac{N}{k^2}\right) \leq c_{k}(N)$.
\end{claim}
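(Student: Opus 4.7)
The plan is to mirror the set-to-set reduction in \cref{clm:ap_set_to_corner_set}, but now in the coloring regime. Let $M = \lfloor N/k^2 \rfloor$ and let $\chi \colon [N] \to [c_k(N)]$ be a $\kAP$-free coloring of $[N]$ using $c_k(N)$ colors. I will define an induced coloring $\chi' \colon [M]^{k-1} \to [c_k(N)]$ by
\[
\chi'(x_1,\dots,x_{k-1}) \;:=\; \chi\bigl(x_1 + 2x_2 + \cdots + (k-1)x_{k-1}\bigr),
\]
and show it is corner-free. Since both colorings use the same palette, this gives $c_{k-1}^{\angle}(M) \leq c_k(N)$, which is the claim.

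The first thing to verify is that $\chi'$ is well-defined, i.e.\ that the argument fed to $\chi$ lies in $[N]$. For $x_i \in [M]$ we have
\[
1 \;\leq\; x_1 + 2x_2 + \cdots + (k-1)x_{k-1} \;\leq\; M \cdot \tfrac{k(k-1)}{2} \;\leq\; \tfrac{N}{2},
\]
so the image is a subset of $[N]$. Next, suppose for contradiction that $\chi'$ admits a monochromatic nontrivial $(k-1)$-dimensional corner, i.e.\ a common color is assigned to
\[
(x_1,\dots,x_{k-1}),\;(x_1+d,x_2,\dots,x_{k-1}),\;\dots,\;(x_1,\dots,x_{k-1}+d)
\]
for some $d \neq 0$ with all points still in $[M]^{k-1}$. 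Write $s = x_1 + 2x_2 + \cdots + (k-1)x_{k-1}$. The key observation is that the linear functional used in the definition of $\chi'$ sends the $i$th vertex of the corner (the one where $x_i$ is shifted to $x_i + d$, and the $0$th vertex being the original point) to $s + i \cdot d$. Hence the corner maps under the defining functional to the sequence
\[
s,\; s+d,\; s+2d,\; \dots,\; s+(k-1)d,
\]
which is a nontrivial $\kAP$. Moreover, using $|d| \leq M$ together with the bound on $s$ above, one checks $s + (k-1)d \in [N]$, so the whole progression lies in $[N]$. By construction all $k$ of these integers receive the same color under $\chi$, contradicting the $\kAP$-freeness of $\chi$.

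The only subtle point is the range check, since the induced $\kAP$ must actually lie inside $[N]$ for the $\kAP$-free property of $\chi$ to be invoked; the factor $k^2$ in the claim is chosen precisely so that both $s$ and $s+(k-1)d$ fit into $[N]$ even in the worst case. I do not expect any real obstacle beyond this bookkeeping, since the structural heart of the argument (a linear map sending corners to arithmetic progressions) is exactly the one already used in \cref{clm:ap_set_to_corner_set}; here it is applied at the level of color classes rather than subsets.
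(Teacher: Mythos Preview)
Your proposal is correct and is essentially identical to the paper's proof: both pull back a $\kAP$-free coloring of $[N]$ via the linear map $(x_1,\dots,x_{k-1}) \mapsto x_1 + 2x_2 + \cdots + (k-1)x_{k-1}$ and observe that a monochromatic corner would map to a monochromatic $\kAP$. One small remark: your separate verification that $s+(k-1)d \in [N]$ is unnecessary, since $s+(k-1)d$ is precisely the image of the corner vertex $(x_1,\dots,x_{k-1}+d) \in [M]^{k-1}$, and you already checked that the linear functional sends all of $[M]^{k-1}$ into $[N]$.
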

\begin{proof}
Color $[N]$ with $c_k(N)$ colors avoiding $k$-term APs.
Define a map $q : \left[\frac{N}{k^2}\right]^{k-1} \to [N]$ as follows:
$$q(x_1, x_2, \ldots, x_{k-1}) = x_1 + 2 x_2 + \ldots + (k-1)x_{k-1}.$$
Then color each $(x_1, x_2, \ldots, x_{k-1})$ by the color of $q(x_1, x_2, \ldots, x_{k-1})$. This coloring avoids monochromatic corners in $\left[\frac{N}{k^2}\right]^{k-1}$.
Indeed, if $S$ is a monochromatic corner in $\left[\frac{N}{k^2}\right]^{k-1}$, then the set $\{q(s) : s \in S\}$  is a monochromatic $k$-term arithmetic progression in $[N]$ (similar to the proof of \cref{clm:ap_set_to_corner_set}).
\end{proof}

\subsection{AP-free coloring number is equivalent to largest AP-free set size}\label{sec:ap_coloring_to_ap_set}
\begin{theorem}[\cite{MultipartyPchandrarotocols1983}]\label{thm:ap_set_to_coloring}

$ \frac{N}{r_{k}(N)} \leq c_{k}(N) \leq O\left(\frac{N\lg N}{r_{k}(N)}\right).$
    
\end{theorem}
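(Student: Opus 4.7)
My proof plan handles the two inequalities separately. The lower bound $N/r_k(N) \leq c_k(N)$ follows from a one-line pigeonhole argument: in any proper $\kAP$-free coloring of $[N]$, each of the $c_k(N)$ color classes is itself a $\kAP$-free subset of $[N]$ and therefore has size at most $r_k(N)$. Summing the class sizes gives $c_k(N) \cdot r_k(N) \geq N$, which rearranges to the claimed bound.

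For the upper bound, the strategy is the standard probabilistic tiling argument. Let $A \subseteq [N]$ be a $\kAP$-free set with $|A| = r_k(N)$. The key observation is that $\kAP$-freeness is invariant under translation in $\mathbb{Z}$, so $(A + t) \cap [N]$ is $\kAP$-free for every integer $t$. I plan to exhibit integer translates $t_1, \dots, t_T$ such that $[N] \subseteq \bigcup_{i \in [T]} (A + t_i)$, and then color each $x \in [N]$ by the smallest index $i$ with $x \in A + t_i$. Each color class is then a subset of a translate of $A$ and hence $\kAP$-free, yielding $c_k(N) \leq T$.

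To bound $T$, I would sample $t_1, \dots, t_T$ independently and uniformly from $\{-N+1, \dots, N-1\}$. For a fixed $x \in [N]$, the translates that cover $x$ are exactly $\{x - a : a \in A\}$, all of which lie inside the sampling range, so a single trial covers $x$ with probability at least $r_k(N)/(2N)$. Hence
\[
\Pr[x \text{ uncovered after } T \text{ trials}] \leq \left(1 - \frac{r_k(N)}{2N}\right)^T \leq \exp\!\left(-\frac{T \cdot r_k(N)}{2N}\right).
\]
Setting $T = \lceil 4N \ln N / r_k(N) \rceil$ makes the right-hand side at most $N^{-2}$, and a union bound over the $N$ points shows that a full cover exists with positive probability, giving $c_k(N) = O(N \log N / r_k(N))$.

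The only subtlety worth flagging is that $\kAP$-freeness is preserved by translation in $\mathbb{Z}$ but not by cyclic translation in $\mathbb{Z}/N\mathbb{Z}$; choosing the sampling range $\{-N+1,\dots,N-1\}$ sidesteps any wrap-around boundary effects and guarantees that each $x \in [N]$ has precisely $|A|$ good translates, which keeps the probabilistic estimate tight and preserves the single $\log N$ factor in the upper bound.
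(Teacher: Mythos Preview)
Your proposal is correct and follows essentially the same approach as the paper: the pigeonhole lower bound is identical, and the upper bound is the same probabilistic covering-by-translates argument (the paper samples from $[-N,N]$ rather than $\{-N+1,\dots,N-1\}$, but this is immaterial). If anything, your version is slightly more careful in flagging that the translates are in $\mathbb{Z}$ rather than $\mathbb{Z}/N\mathbb{Z}$.
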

\begin{proof}
    The lower bound is by pigeonhole principle: if $[N]$ is colored with $c_{k}(N)$ colors avoiding monochromatic $k$-term arithmetic progressions, then there must be a color class that has size at least $\frac{N}{c_k(N)}$.

    For the upper bound, let $A \subset [N]$ be the set forming a $k$-term AP with size $r_{k}(N)$. The claim below shows that we can find at most $O\left(\frac{N \lg N}{|A|}\right)$ translates of $A$ that cover $[N]$, thus also avoid $k$-term APs.
\end{proof}
\begin{claim}\label{clm:translate_and_cover}
        For a set $A \subset [N]$, there are $t_1, \ldots, t_\ell \in [-N,N]$ such that $\cup_{i=1}^\ell (t_i + A) = [N]$ and $\ell \leq O\left(\frac{N\lg N}{|A|}\right).$
\end{claim}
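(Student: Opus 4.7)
The plan is to use a standard probabilistic covering argument. Sample translates $t_1,\dots,t_\ell$ independently and uniformly at random from the integer interval $[-N,N]$, where $\ell$ will be chosen at the end. The goal is to show that for a suitable choice of $\ell = O(N \log N / |A|)$, with positive probability every element of $[N]$ is contained in some $t_i + A$; any outcome witnessing this yields the desired deterministic family of translates.

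For a fixed $x \in [N]$, note that $x \in t_i + A$ if and only if $t_i \in x - A$. Since $A \subseteq [N]$ and $x \in [N]$, the set $x - A$ consists of $|A|$ distinct integers, all lying in $[-N,N]$. Hence a uniformly random $t_i \in [-N,N]$ covers $x$ with probability exactly $|A|/(2N+1)$, and by independence the probability that none of the $\ell$ translates cover $x$ is
\[
\left(1 - \frac{|A|}{2N+1}\right)^{\ell} \le \exp\!\left(-\frac{\ell\, |A|}{2N+1}\right).
\]

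A union bound over the $N$ possible values of $x \in [N]$ gives that the probability some point of $[N]$ is left uncovered is at most $N \exp(-\ell |A|/(2N+1))$. Choosing $\ell = \lceil (2N+1)(\ln N + 1)/|A|\rceil$ makes this bound strictly less than $1$, so there exists a choice of $t_1,\dots,t_\ell \in [-N,N]$ whose translates cover $[N]$, and this $\ell$ is clearly $O(N \log N / |A|)$.

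The argument has no real obstacle; the only small care needed is to verify that the range $[-N,N]$ is large enough to realize every required covering translate (which is immediate since any $t$ with $x \in t + A$ satisfies $|t| < N$) while still giving a denominator of order $N$ in the single-point probability, so that the logarithmic union-bound overhead yields exactly the claimed $O(N \log N / |A|)$ bound.
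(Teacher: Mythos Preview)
Your proof is correct and follows essentially the same probabilistic covering argument as the paper: sample translates uniformly from $[-N,N]$, bound the non-coverage probability for a fixed point, and apply a union bound. If anything, your version is slightly more careful in justifying that $x-A\subseteq[-N,N]$ and in tracking the denominator $2N+1$ rather than the paper's rougher $2N$.
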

\begin{proof}
        Choose $t_1, \ldots, t_\ell$ uniformly randomly from the range $[-N,N]$. Fix $x \in [N]$. The probability that $x$ is not covered by some $t_i + A$ is at most $1-\frac{|t_i + A|}{2N}= 1-\frac{|A|}{2N}$. Hence the probability of some point being not covered by all of the translates is
        $$p \leq N \cdot \left(1 - \frac{|A|}{2N}\right)^\ell.$$
        Then, $p < 1$, if $\ell > \frac{2N\lg N}{|A|}$. Thus, by the probabilistic method, there exists a choice of $t_1, \ldots, t_\ell$ that for some $\ell = O\left(\frac{N\lg N}{|A|}\right)$, the corresponding translates of $A$ cover $[N]$.
\end{proof}

\subsection{Corner-free coloring number is equivalent to the largest corner-free set size}\label{sec:corner_color_to_corner_set}
\begin{theorem}
    $ \frac{N^k}{r^{\angle}_{k}(N)} \leq c^{\angle}_{k}(N) \leq O\left(\frac{N^k \lg N}{r^{\angle}_{k}(N)}\right).$
\end{theorem}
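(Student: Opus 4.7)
The plan is to mirror the proof of \cref{thm:ap_set_to_coloring} in the previous subsection, replacing intervals with boxes and $\kAP$-free sets with corner-free sets. The two directions decouple cleanly, so I would handle them in sequence.

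For the lower bound, I would apply the pigeonhole principle. Fix any coloring of $[N]^k$ with $c^{\angle}_{k}(N)$ colors such that no $k$-dimensional corner is monochromatic. Each color class is itself corner-free (being monochromatic means no corner among its points), and there are $N^k$ points in total, so some color class has size at least $N^k / c^{\angle}_{k}(N)$. This yields $r^{\angle}_{k}(N) \geq N^k / c^{\angle}_{k}(N)$, which rearranges to the desired lower bound.

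For the upper bound, the plan is to take a corner-free set $A \subseteq [N]^k$ of size $r^{\angle}_{k}(N)$ and cover $[N]^k$ by translates of $A$. Observing that corners are translation-invariant, every translate $t + A$ (where $t \in \mathbb{Z}^k$) is also corner-free. Hence if we can find $\ell$ translation vectors $t_1, \ldots, t_\ell \in [-N,N]^k$ whose translates cover $[N]^k$, we obtain a valid corner-free coloring by assigning to each $x \in [N]^k$ the index $i$ of any translate with $x \in t_i + A$ (breaking ties arbitrarily); each color class is a subset of a corner-free set and so remains corner-free.

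To bound $\ell$, I would use the probabilistic method exactly as in \cref{clm:translate_and_cover}. Choose $t_1, \ldots, t_\ell$ independently and uniformly from $[-N,N]^k$. For any fixed $x \in [N]^k$, the probability that $x \in t_i + A$ equals $|A| / (2N)^k$, so the probability that $x$ is not covered by any of the $\ell$ translates is at most $(1 - |A|/(2N)^k)^\ell$. A union bound over the $N^k$ points of $[N]^k$ gives
\[
\Pr[\text{some point is uncovered}] \leq N^k \bigl(1 - |A|/(2N)^k\bigr)^\ell \leq N^k \exp\!\bigl(-\ell |A|/(2N)^k\bigr).
\]
Choosing $\ell = \lceil (2N)^k (k+1) \ln N / |A| \rceil = O(N^k \log N / r^{\angle}_{k}(N))$ makes this bound less than $1$, so some valid set of translates exists, establishing the upper bound.

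There is no substantial obstacle here; the argument is a direct generalization of the $\kAP$ version, with the only things to check being that translates preserve corner-freeness (immediate) and that the uniform distribution over $[-N,N]^k$ assigns probability at least $|A|/(2N)^k$ to landing at any prescribed translate. Both are straightforward.
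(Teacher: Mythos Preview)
Your proposal is correct and follows exactly the approach the paper intends: the paper's own proof simply states that the argument is analogous to that of \cref{thm:ap_set_to_coloring} via the natural extension of \cref{clm:translate_and_cover} to $[N]^k$, and you have spelled out precisely that extension. The only (harmless) imprecision is the ``$|A|/(2N)^k$'' versus ``$|A|/(2N+1)^k$'' discrepancy, which the paper commits as well and which disappears inside the $O(\cdot)$.
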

\begin{proof}
    The proof is analogous to the proof of~\cref{thm:ap_set_to_coloring} as \cref{clm:translate_and_cover} can be extended to work for $[N]^k$ and $k$-corners.
\end{proof} \DeclareRobustCommand{\eulerian}{\genfrac<>{0pt}{}}
\newcommand{\oddevent}{\mathcal{E}}

\section{Proof of \texorpdfstring{\cref{claim:realprobability}}{the Euler zigzag number claim}}\label{sec:euler}
In this section $i$ is used to represent the imaginary unit. The Euler zigzag numbers count the number of alternating permutations of a given length. The $k$th Euler zigzag number is denoted $E_k$.

Let $r_1, \ldots, r_k$ be real numbers uniformly sampled from $[-1/2, 1/2)$. Let $\oddevent_k$ be the event \[\sum_{j \in [k]} r_j \modulo{2} \in [-1/2,1/2).\] Recall that our goal is to prove the following:
\begingroup
\renewcommand{\customthmname}{\Cref{claim:realprobability}}
\begin{customthm*}[Restated]
    $\Pr[\oddevent_k] = \frac{1}{2} + \frac{E_k}{2k!}.$
\end{customthm*}
\endgroup

The proof follows in a mostly straightforward way from known results about the Euler zigzag numbers and related quantities. The methods used in the case where $k$ is odd are standard in the literature. When $k$ is even some of the steps in the proof may be novel, albeit not too difficult to extrapolate from the odd case. For completeness we give the details of the proof for both cases.

\paragraph{Tangent numbers, secant numbers, and Euler zigzag numbers.} The \emph{tangent numbers} are the coefficients in the Maclaurin series of the tangent function. Only odd tangent numbers have nonzero value. Similarly, the \emph{secant numbers} are the coefficients in the Maclaurin series of the secant function, and are only nonzero for even indices.

\[ \tan \theta = \sum_{k \geq 1} \frac{\theta^{2k-1}}{(2k-1)!} T_{2k-1} \qquad \mbox{and} \qquad \sec \theta = \sum_{k \geq 0} \frac{\theta^{2k}}{(2k)!} S_{2k}. \]

The Euler zigzag numbers are the coefficients in the Maclaurin series of $\tan \theta + \sec \theta$:
\[ E_k = \begin{cases} T_k & k \mbox{ is odd}\\S_k & k \mbox{ is even}\end{cases}. \]

\paragraph{Trigonometric identities.} We remind the reader of the following equations:
\[ \tan \theta = -i \frac{e^{i \theta} - e^{-i \theta}}{e^{i \theta} + e^{-i \theta}} \qquad \mbox{and} \qquad \sec \theta = \frac{2}{e^{i \theta} + e^{-i \theta}}. \]

\paragraph{Eulerian numbers.} The Eulerian number $\eulerian{k}{\ell}$ is the number of permutations of length $k$ with $\ell$ descents: that is, the number of permutations $\sigma : [k] \rightarrow [k]$ with exactly $\ell$ indices $z$ where $\sigma(z) > \sigma(z+1)$. The Eulerian polynomial is defined as $A_k(t) = \sum_{\ell=0}^{k-1} \eulerian{k}{\ell} t^\ell$. The corresponding exponential generating function is $A(t, u) = \sum_{k \geq 0} A_k(t) \frac{u^k}{k!}$. The identity 
\begin{equation} \label{equ:euler-gen-func} A(t, u) = \frac{t-1}{t-e^{u(t-1)}} \end{equation} 
was proved by Euler~\cite{eulerInstitutionesCalculiDifferentialis1787} (a translated version is available at~\cite{eulerInstitutionesCalculiDifferentialis2019}). See the textbook of Peterson~\cite{petersenEulerianNumbers2015} for a trove of information about the Eulerian numbers and the variant defined below, including historical notes about Euler's derivation.

It is well-known that the alternating sum of Eulerian numbers $\sum_{\ell=0}^{k} (-1)^\ell \eulerian{k}{\ell} = A_k(-1)$ is the $k$th tangent number (perhaps negated). The following proof of this fact follows the structure of a survey of Foata~\cite{foataEulerianPolynomialsEuler2010}.

\[ \sum_{k \geq 1} i^{k-1} A_k(-1) \frac{\theta^k}{k!} = \frac{1}{i}\left( \sum_{k \geq 0} A_k(-1) \frac{(i \theta)^k}{k!} - A_0(-1) \right) = -i \left(A(-1, i \theta) - 1 \right) \]

From here we apply \cref{equ:euler-gen-func}.

\begin{align*} 
    \sum_{k \geq 1} i^{k-1} A_k(-1) \frac{\theta^k}{k!} &= -i \left( \frac{(-1) - 1}{-1-e^{-2i\theta}} - 1 \right) = -i \left( \frac{2}{e^{i \theta} e^{-i \theta} + e^{-i \theta} e^{-i \theta}} - 1 \right) \\&= -i \left( \frac{2 e^{i \theta}}{e^{i \theta} + e^{-i \theta}} - \frac{e^{i \theta} + e^{-i \theta}}{e^{i \theta} + e^{-i \theta}} \right) = -i \left( \frac{e^{i \theta} - e^{-i \theta}}{e^{i \theta} + e^{-i \theta}} \right) \\&= \tan \theta = \sum_{n \geq 1} \frac{\theta^{2k-1}}{(2k-1)!} T_{2k-1}
\end{align*}

Solving for $A_k(-1)$ in terms of the tangent numbers gives us
\[ A_{2k}(-1) = 0 \qquad \mbox{and} \qquad A_{2k-1}(-1) = i^{-(2k-2)}T_{2k-1} = (-1)^{k-1} T_{2k-1} \]
and therefore if $k$ is odd we have
\begin{equation} \label{equ:alternating-sum} A_k(-1) = (-1)^{\lfloor k/2 \rfloor} E_k. \end{equation}

\paragraph{Eulerian numbers of type $B_k$.} A signed permutation $\omega(\sigma, s)$ is a permutation $\sigma : [k] \rightarrow [k]$ and a sign function $s : [k] \rightarrow \{-1, 1\}$ such that $w(z) = \sigma(z) s(z)$. The Eulerian number of type $B_k$ $\eulerian{B_k}{\ell}$ is the number of signed permutations of length $k$ with $\ell$ descents. Here we consider the signed permutation to have a leading zero, so if $\omega(1) < 0$ we consider the index $0$ to have a descent.

Similarly to the standard Eulerian numbers, we associate a polynomial and an exponential generating function:
\[ B_k(t) = \sum_{\ell=0}^{k} \eulerian{B_k}{\ell} t^\ell \qquad \mbox{and} \qquad B(t, u) = \sum_{k \geq 0} B_k(t) \frac{u^k}{k!}. \]
Because the definition of descents in a signed permutation allows for a descent at index 0, the maximum number of descents is $k$. This is in contrast with the definition for permutations, where the maximum number of descents is $k-1$. This is why the range of the summation in $B_k(t)$ is different from the range in $A_k(t)$.

Brenti~\cite{brentiQEulerianPolynomialsArising1994} shows the following analogue of \cref{equ:euler-gen-func}:
\[ B(t, u) = \frac{(t-1)e^{u(t-1)}}{t-e^{2u(t-1)}}. \]

The alternating sum of Eulerian numbers of type $B_k$ are related to the secant numbers, which can be shown in a similar manner to above.

\begin{align*} \sum_{k \geq 0} B_k(-1) \frac{(i \theta)^k}{k!} &= B(-1, i \theta) = \frac{((-1)-1)e^{i \theta ((-1) -1)}}{(-1)-e^{2i\theta((-1)-1)}} = \frac{2e^{-2 i \theta}}{e^{-2 i \theta}e^{2 i \theta} + e^{-2 i \theta}e^{-2 i \theta}} = \frac{2}{e^{2 i \theta} + e^{-2 i \theta}} \\&= \sec (2\theta) = \sum_{k \geq 0} \frac{(2 \theta)^{2k}}{(2k)!} S_{2k}
\end{align*}

This yields
\[ B_{2k}(-1) = (-1)^k 2^{2k} S_{2k} \qquad \mbox{and} \qquad B_{2k+1}(-1) = 0 \]
and therefore if $k$ is even we have
\begin{equation} \label{equ:alternating-sum-b} B_{k}(-1) = (-1)^{\lfloor k/2 \rfloor} E_k. \end{equation}

\paragraph{Formulas for Eulerian numbers.} A simple formula involving a summation is well-known for the standard Eulerian numbers. We could not find a similar formula for the Eulerian numbers of type $B_k$. In the following we derive the former and show how to modify the proof to generate the latter.

The following identity is attributed to Euler~\cite{eulerRemarquesBeauRapport1768}; see~\cite{foataEulerianPolynomialsEuler2010,petersenEulerianNumbers2015} for more details.

\begin{lemma}
    $\frac{A_k(t)}{(1-t)^{k+1}} = \sum_{m \geq 0} (m+1)^k t^m.$
\end{lemma}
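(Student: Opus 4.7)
The plan is to prove the lemma by showing that both sides, after multiplication by $u^k/k!$ and summation over $k \ge 0$, yield the same bivariate generating function $e^u/(1-te^u)$. The identity then follows by extracting the coefficient of $u^k/k!$.

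First, I would handle the right-hand side. Writing $G_k(t) = \sum_{m \ge 0}(m+1)^k t^m$, interchanging the order of summation and expanding each inner exponential gives
$$\sum_{k \ge 0} G_k(t)\,\frac{u^k}{k!} \;=\; \sum_{m \ge 0} t^m \sum_{k \ge 0}\frac{((m+1)u)^k}{k!} \;=\; e^u \sum_{m \ge 0} (te^u)^m \;=\; \frac{e^u}{1-te^u}.$$
This manipulation is valid in $\mathbb{Q}[[t,u]]$ since $te^u$ has zero constant term in $t$, so the geometric series converges formally.

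Next, I would transform the left-hand side via Euler's formula \cref{equ:euler-gen-func}. Factoring out $1/(1-t)$ rescales the second argument of $A$:
$$\sum_{k \ge 0} \frac{A_k(t)}{(1-t)^{k+1}} \cdot \frac{u^k}{k!} \;=\; \frac{1}{1-t}\sum_{k \ge 0} A_k(t)\, \frac{(u/(1-t))^k}{k!} \;=\; \frac{1}{1-t}\, A\!\left(t,\tfrac{u}{1-t}\right).$$
Applying \cref{equ:euler-gen-func} and using the simplification $\tfrac{u(t-1)}{1-t} = -u$, this becomes
$$\frac{1}{1-t}\cdot\frac{t-1}{t-e^{-u}} \;=\; \frac{-1}{t-e^{-u}} \;=\; \frac{1}{e^{-u}-t} \;=\; \frac{e^u}{1-te^u}.$$
Comparing coefficients of $u^k/k!$ in the two equal generating functions yields the claimed identity.

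The only slightly delicate point is justifying the substitution $u \mapsto u/(1-t)$ as a formal-power-series operation and tracking the signs in the exponent; but this is routine, since $A(t,\cdot)$ has polynomial coefficients in $t$ viewed as a power series in its second argument, so the substitution is well-defined. All of the real combinatorial content is carried by \cref{equ:euler-gen-func}; the lemma is essentially a change of variables in that identity.
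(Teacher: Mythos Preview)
The paper does not give its own proof of this lemma; it simply attributes the identity to Euler and points to the survey of Foata and the textbook of Petersen for details. So there is nothing to compare against in the paper itself.

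Your derivation is correct. You show that both sides of the lemma, summed against $u^k/k!$, yield the common bivariate generating function $e^u/(1-te^u)$: the right-hand side by summing the exponential series in $u$ and then a geometric series in $te^u$, and the left-hand side by substituting $u \mapsto u/(1-t)$ into \cref{equ:euler-gen-func} and simplifying (the exponent $u(t-1)/(1-t)$ collapses to $-u$, and clearing $e^{-u}$ gives the stated form). The formal-power-series concerns you raise---the geometric series in $te^u$ and the validity of the substitution---are handled correctly; since $A_k(t)$ is a polynomial and $1/(1-t) \in \mathbb{Q}[[t]]$, the substitution is well-defined in $\mathbb{Q}[[t]][[u]]$. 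As you note, this makes the lemma essentially a rescaling of \cref{equ:euler-gen-func}, which is a perfectly good way to obtain it.
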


Using the binomial expansion for $(1-t)^{k+1}$, we get:
\begin{align*} A_k(t) 
    &= \sum_{m \geq 0} \left( (1-t)^{k+1} (m+1)^k \right) t^m 
    = \sum_{m \geq 0} (m+1)^k t^m \sum_{j=0}^{k+1} (-t)^j \binom{k+1}{j} 
    \\&= \sum_{m \geq 0} (m+1)^k \sum_{j=0}^{k+1} (-1)^j \binom{k+1}{j} t^{j + m}.
\end{align*}

Let $\ell = j + m$. Now the summation over $m \geq 0$ is a summation over $\ell - j \geq 0$ and we can rearrange to obtain:
\[ A_k(t) = \sum_{\ell \geq 0} \sum_{j = 0}^{\ell} (-1)^j \binom{k+1}{j} (\ell-j+1)^k t^\ell. \]

Using the definition $A_k(t) = \sum_{\ell=0}^{k-1} \eulerian{k}{\ell} t^\ell$ we get
\begin{equation} \label{equ:eulerian-sum} \eulerian{k}{\ell} = \sum_{j = 0}^{\ell} (-1)^j \binom{k+1}{j} (\ell-j+1)^k. \end{equation} 

A similar identity exists for the Eulerian numbers of type $B_k$ (see~\cite{petersenEulerianNumbers2015}).

\begin{lemma}
    $\frac{B_k(t)}{(1-t)^{k+1}} = \sum_{m \geq 0} (2 m+1)^k t^m.$
\end{lemma}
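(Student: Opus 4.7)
\medskip

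\noindent\textbf{Proposal.} The stated identity is the type-$B$ analogue of Euler's identity $A_k(t)/(1-t)^{k+1} = \sum_{m\geq 0}(m+1)^k t^m$ proved just above, with the base $m+1$ replaced by the odd integer $2m+1$. Brenti's exponential generating function $B(t,v) = (t-1)e^{v(t-1)}/(t-e^{2v(t-1)})$ is already in play in the excerpt, so the cleanest plan is to repackage the claim as an equality of bivariate exponential generating functions in an auxiliary variable $u$, specialize Brenti's formula, and read off the coefficient of $u^k/k!$ on each side.

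First I would form the EGF of the left-hand side. Multiplying the $k$-th term by $u^k/k!$ and extracting a factor of $1/(1-t)$ gives
$$\sum_{k\geq 0}\frac{B_k(t)}{(1-t)^{k+1}}\,\frac{u^k}{k!} \;=\; \frac{1}{1-t}\sum_{k\geq 0} B_k(t)\,\frac{(u/(1-t))^k}{k!} \;=\; \frac{1}{1-t}\,B\!\left(t,\tfrac{u}{1-t}\right).$$
Plugging into Brenti's formula, the shift $v(t-1) = u(t-1)/(1-t) = -u$ collapses both exponentials to $e^{-u}$ and $e^{-2u}$, and the prefactor $(t-1)/(1-t) = -1$ combines with the denominator to give
$$\frac{1}{1-t}\,B\!\left(t,\tfrac{u}{1-t}\right) \;=\; \frac{-\,e^{-u}}{t - e^{-2u}} \;=\; \frac{e^{-u}}{e^{-2u}-t} \;=\; \frac{e^{u}}{1 - t e^{2u}},$$
where the last step multiplies top and bottom by $e^{2u}$.

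Next I would expand the right-hand side of the desired identity as the same bivariate EGF. Treating $te^{2u}$ as a formal power series with no constant term in $t$, the geometric series gives
$$\frac{e^{u}}{1 - te^{2u}} \;=\; \sum_{m\geq 0} e^{(2m+1)u}\,t^m \;=\; \sum_{k\geq 0}\!\left(\sum_{m\geq 0}(2m+1)^k\,t^m\right)\!\frac{u^k}{k!}.$$
Comparing the coefficient of $u^k/k!$ in the two EGF expressions yields exactly the lemma.

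The argument is essentially bookkeeping, and the main thing to be careful about is tracking the sign flip from $(t-1)/(1-t)=-1$ and justifying the manipulations as formal identities in $\mathbb{Q}[[t,u]]$ (the $u$-adic topology makes $1/(1-te^{2u})$ a legitimate formal series since $te^{2u}$ has no constant term in $t$). As an alternative route I would mention a combinatorial proof via a type-$B$ Worpitzky identity $(2m+1)^k = \sum_{\ell} \eulerian{B_k}{\ell}\binom{m+k-\ell}{k}$, which one can verify by hand for small $k$ and prove by encoding functions $[k]\to\{-m,\dots,-1,1,\dots,m+1\}$ as a signed permutation paired with a weakly increasing insertion sequence; then $\sum_{m\geq 0}\binom{m+k-\ell}{k}t^m = t^\ell/(1-t)^{k+1}$ assembles the desired OGF. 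I would lead with the EGF proof because Brenti's formula is already available in the excerpt and the computation is short.
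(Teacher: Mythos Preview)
Your proof is correct. The paper, however, does not actually prove this lemma: it simply states the identity with a citation to Petersen's textbook and moves on to derive the explicit summation formula for $\eulerian{B_k}{\ell}$ from it. So there is no ``paper's own proof'' to compare against.

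That said, your approach is natural and clean given the surrounding material: Brenti's closed form for $B(t,u)$ is already recalled in the appendix, and your substitution $v=u/(1-t)$ collapses it to $e^u/(1-te^{2u})$ in one line, after which the geometric-series expansion and coefficient extraction are routine. The sign bookkeeping and the formal-series justification are handled correctly. Your alternative via a type-$B$ Worpitzky identity is indeed closer in spirit to how such identities are often proved combinatorially (and is essentially what Petersen does), but the EGF argument you lead with is shorter and fits the paper better since Brenti's formula is already on the page.
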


As the steps are exactly the same as above, we omit the details and skip to the conclusion:

\begin{equation} \label{equ:eulerian-b-sum} \eulerian{B_k}{\ell} = \sum_{j = 0}^{\ell} (-1)^j \binom{k+1}{j} (2\ell-2j+1)^k = 2^k \sum_{j = 0}^{\ell} (-1)^j \binom{k+1}{j} \left(\ell-j+\frac{1}{2}\right)^n. \end{equation} 

\paragraph{The Irwin-Hall distribution.}

The Irwin-Hall distribution (with parameter $k$) is the distribution of sums of $k$ real numbers uniformly sampled from $[0, 1]$. The cumulative distribution function of the Irwin-Hall distribution is
\[ F_k(x) = \frac{1}{k!} \sum_{j=0}^{\lfloor x \rfloor}(-1)^j\binom{k}{j}(x - j)^k \]
for $x \in [0, k]$. Tanny~\cite{tannyProbabilisticInterpretationEulerian1973} was the first to notice a connection between the Eulerian numbers and the Irwin-Hall distribution: the Eulerian numbers capture the density of the Irwin-Hall distribution on a \emph{unit interval} where the endpoints of the interval are whole numbers. It is simple to give a generalized form of this connection -- without the restriction on the endpoints of the interval -- that will also allow us to characterize the Eulerian numbers of type $B_n$.

\begin{lemma} \label{lem:irwin-hall}
    For any $x \in [1, k]$, 
    \[ k! \left( F_k(x) - F_k(x-1) \right) = \sum_{j=0}^{\lfloor x \rfloor} (-1)^j \binom{k+1}{j} (x-j)^k. \]
\end{lemma}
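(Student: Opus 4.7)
The plan is to expand both sides using the definition of $F_k$, re-index the sum corresponding to $F_k(x-1)$ so that its $k$th-power factor matches that of the sum for $F_k(x)$, and then combine the binomial coefficients via Pascal's identity.

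First I would let $n = \lfloor x \rfloor$ and observe that, since $x \in [1,k]$, we have $\lfloor x-1 \rfloor = n-1$ regardless of whether $x$ is an integer. Writing out the definition gives
\[
k!\bigl(F_k(x)-F_k(x-1)\bigr) = \sum_{j=0}^{n}(-1)^j\binom{k}{j}(x-j)^k - \sum_{j=0}^{n-1}(-1)^j\binom{k}{j}(x-1-j)^k.
\]
In the second sum, I would substitute $j' = j+1$ so that $(x-1-j)^k = (x-j')^k$, turning it into
\[
\sum_{j'=1}^{n}(-1)^{j'-1}\binom{k}{j'-1}(x-j')^k = -\sum_{j=1}^{n}(-1)^{j}\binom{k}{j-1}(x-j)^k.
\]

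Substituting this back and using the convention $\binom{k}{-1}=0$ to let the second sum start at $j=0$, the two sums can be amalgamated into
\[
k!\bigl(F_k(x)-F_k(x-1)\bigr) = \sum_{j=0}^{n}(-1)^j\left[\binom{k}{j}+\binom{k}{j-1}\right](x-j)^k.
\]
Applying Pascal's identity $\binom{k}{j}+\binom{k}{j-1}=\binom{k+1}{j}$ to the bracketed factor yields exactly the right-hand side of the claim, completing the proof.

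There is no serious obstacle here: the only point that requires a little care is confirming that $\lfloor x - 1 \rfloor = \lfloor x\rfloor - 1$ for all $x \geq 1$ (so that the ranges of summation align after the shift), and that the $j=0$ boundary term we introduce when extending the shifted sum is legitimately zero because $\binom{k}{-1}=0$. Everything else is a mechanical index shift and one application of Pascal's rule.
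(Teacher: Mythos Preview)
Your proof is correct and follows essentially the same approach as the paper: expand both $F_k(x)$ and $F_k(x-1)$, shift the index in the second sum by one, and combine the resulting binomial coefficients via Pascal's identity. The paper separates out the $j=0$ term explicitly as $x^k$ rather than invoking the convention $\binom{k}{-1}=0$, but this is a cosmetic difference only.
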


\begin{proof} We use the identity $\binom{k}{j} + \binom{k}{j-1} = \binom{k+1}{j}$:
    \begin{align*}
        k! \left( F_k(x) - F_k(x-1) \right) 
        &= \sum_{j=0}^{\lfloor x \rfloor} (-1)^j \binom{k}{j} (x-j)^k - \sum_{j=0}^{\lfloor x - 1 \rfloor} (-1)^j \binom{k}{j} (x-j-1)^k 
        \\&= \sum_{j=0}^{\lfloor x \rfloor} (-1)^j \binom{k}{j} (x-j)^k - \sum_{j=1}^{\lfloor x \rfloor} (-1)^{j-1} \binom{k}{j-1} (x-j)^k 
        \\&= x^k + \left[\sum_{j=1}^{\lfloor x \rfloor} (x-j)^k \left( (-1)^{j} \binom{k}{j} + (-1)^j \binom{k}{j-1} \right) \right]
        \\&= x^k + \left[\sum_{j=1}^{\lfloor x \rfloor} (x-j)^k (-1)^{j} \binom{k+1}{j} \right]
        = \sum_{j=0}^{\lfloor x \rfloor} (-1)^j \binom{k+1}{j} (x-j)^k. \qedhere
    \end{align*}
\end{proof}

The following characterizations of Eulerian numbers follow directly from \cref{equ:eulerian-sum}, \cref{equ:eulerian-b-sum}, and \cref{lem:irwin-hall}. \cref{cor:irwin-hall-eulerian} and its consequences were studied by Tanny~\cite{tannyProbabilisticInterpretationEulerian1973}.

\begin{corollary} \label{cor:irwin-hall-eulerian}
    $\eulerian{k}{\ell} = k! \left( F_n(\ell+1) - F_n(\ell) \right)$.
\end{corollary}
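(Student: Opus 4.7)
The plan is to apply \cref{lem:irwin-hall} at the specific point $x = \ell + 1$ (where $\ell$ is a nonnegative integer with $\ell + 1 \in [1, k]$) and then match the resulting expression against \cref{equ:eulerian-sum}. Since $\ell + 1$ is an integer, $\lfloor x \rfloor = \ell + 1$, so the lemma gives
\[ k!\bigl(F_k(\ell+1) - F_k(\ell)\bigr) = \sum_{j=0}^{\ell+1} (-1)^j \binom{k+1}{j} (\ell+1-j)^k. \]

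Next I would observe that the top term of this summation, corresponding to $j = \ell + 1$, contributes a factor of $(\ell + 1 - (\ell+1))^k = 0^k = 0$ (assuming $k \geq 1$; the $k = 0$ case can be handled separately or noted as trivial). Truncating the sum at $j = \ell$ therefore leaves the expression unchanged, yielding
\[ k!\bigl(F_k(\ell+1) - F_k(\ell)\bigr) = \sum_{j=0}^{\ell} (-1)^j \binom{k+1}{j} (\ell+1-j)^k. \]
By \cref{equ:eulerian-sum} this right-hand side is exactly $\eulerian{k}{\ell}$, which finishes the proof.

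There is essentially no obstacle here: the work has already been done in the lemma and in the derivation of \cref{equ:eulerian-sum}, and the corollary is just the act of evaluating at the correct integer point. The only minor bookkeeping is confirming that the $j = \ell+1$ term really is zero and that $\ell+1$ lies in the valid range $[1,k]$ for the lemma to apply, which holds for all $\ell \in \{0, 1, \ldots, k-1\}$ — precisely the range where $\eulerian{k}{\ell}$ is defined to be nonzero in the sum defining $A_k(t)$.
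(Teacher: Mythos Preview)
Your proof is correct and matches the paper's approach exactly: the paper simply states that the corollary ``follows directly from \cref{equ:eulerian-sum} and \cref{lem:irwin-hall},'' and you have spelled out precisely that deduction, including the bookkeeping that the $j=\ell+1$ term vanishes.
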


\begin{corollary} \label{cor:irwin-hall-eulerian-b}
    $\eulerian{B_k}{\ell} = 2^k k! \left( F_k(\ell+\frac{1}{2}) - F_k(\ell-\frac{1}{2}) \right)$.
\end{corollary}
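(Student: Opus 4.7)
\textbf{Proof proposal for Corollary \ref{cor:irwin-hall-eulerian-b}.} The plan is a direct substitution: set $x = \ell + \tfrac{1}{2}$ in \cref{lem:irwin-hall} and multiply by $2^k$. Concretely, for $\ell \in \{1,2,\dots,k-1\}$ the value $x = \ell + \tfrac{1}{2}$ lies in the interval $[1,k]$ where the lemma applies, and $\lfloor x \rfloor = \ell$. Substituting yields
\[
k!\bigl(F_k(\ell+\tfrac{1}{2}) - F_k(\ell-\tfrac{1}{2})\bigr) = \sum_{j=0}^{\ell} (-1)^j \binom{k+1}{j} \bigl(\ell - j + \tfrac{1}{2}\bigr)^k.
\]
Multiplying both sides by $2^k$ and comparing with \cref{equ:eulerian-b-sum} gives exactly
\[
2^k k!\bigl(F_k(\ell+\tfrac{1}{2}) - F_k(\ell-\tfrac{1}{2})\bigr) = 2^k \sum_{j=0}^{\ell} (-1)^j \binom{k+1}{j} \bigl(\ell - j + \tfrac{1}{2}\bigr)^k = \eulerian{B_k}{\ell},
\]
which is the desired identity.

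The only subtlety is the range of $\ell$. The Eulerian number of type $B_k$ is defined for $\ell \in \{0,1,\dots,k\}$, but \cref{lem:irwin-hall} is stated only for $x \in [1,k]$, so I would handle the endpoints $\ell = 0$ and $\ell = k$ separately. For $\ell = 0$, the sum on the right side of \cref{equ:eulerian-b-sum} collapses to the single term $2^k \cdot (1/2)^k = 1$; on the other hand $F_k$ vanishes on $[-1/2, 0]$ and equals $x^k/k!$ on $[0,1]$, so $2^k k! (F_k(1/2) - F_k(-1/2)) = 2^k k! \cdot (1/2)^k/k! = 1$ as well. The case $\ell = k$ is handled symmetrically using $F_k(x) = 1$ for $x \geq k$, or can be read off the symmetry $\eulerian{B_k}{\ell} = \eulerian{B_k}{k-\ell}$ of type-$B$ Eulerian numbers against the corresponding symmetry of the Irwin--Hall density around $k/2$.

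There is no real obstacle here — the corollary is a one-line rewriting of \cref{equ:eulerian-b-sum} via \cref{lem:irwin-hall}, parallel to the derivation of \cref{cor:irwin-hall-eulerian} where one uses $x = \ell + 1$ and obtains $\eulerian{k}{\ell} = k!(F_k(\ell+1) - F_k(\ell))$ without the factor $2^k$. The factor $2^k$ here is the same factor that was pulled out in front of the sum in \cref{equ:eulerian-b-sum}, reflecting the rescaling $(2\ell - 2j + 1)^k = 2^k(\ell - j + \tfrac{1}{2})^k$ that distinguishes the type-$B$ formula from the ordinary one. The mildest care point is simply checking that the boundary cases $\ell \in \{0,k\}$ are consistent with the convention used for $F_k$ outside $[0,k]$.
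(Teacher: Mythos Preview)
Your proposal is correct and follows exactly the approach the paper intends: the paper states that \cref{cor:irwin-hall-eulerian-b} ``follows directly from \cref{equ:eulerian-b-sum} and \cref{lem:irwin-hall},'' and you have spelled out precisely that substitution $x=\ell+\tfrac12$ together with the $2^k$ factor. Your explicit treatment of the boundary cases $\ell\in\{0,k\}$ is a welcome addition, since the paper glosses over the fact that \cref{lem:irwin-hall} is only stated for $x\in[1,k]$.
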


\paragraph{Completing the proof.} Recall that $\oddevent_k$ is the event \[\sum_{j \in [k]} r_j \modulo{2} \in [-1/2,1/2).\] It is natural to write its probability in terms of the Irwin-Hall distribution by adding $1/2$ to the random variables $r_j$. We are now interested in random variables $r'_1,\dots,r'_k$ that are uniformly sampled from $[0,1)$ and the event we are interested in is \[ \sum_{j \in [k]} r'_j \modulo{2} \in \left[\frac{k-1}{2},\frac{k+1}{2}\right) \modulo{2}. \] There are four cases:
\begin{enumerate}[label=(\roman*)]
    \item \label{item:odd-odd} If $k \modulo{4} = 3$, we want the probability that the sum modulo $2$  lies in $[1,2)$:
    \[ \Pr[\oddevent_k] = F_k(k-1) - F_k(k-2) + \ldots + F_k(2) - F_k(1). \]
    Using \cref{equ:alternating-sum} and \cref{cor:irwin-hall-eulerian} this alternating sum can be rewritten as
    \begin{align}
        \Pr[\oddevent_k] 
        &= \frac{1}{2} F_k(k) - \frac{1}{2} \sum_{j=0}^{k-1} \left[ (-1)^j \left( F_k(j+1) - F_k(j) \right) \right] - \frac{1}{2} F_k(0)
        = \frac{1}{2} - \frac{1}{2} \sum_{j=0}^{k-1}\left[ (-1)^j \frac{1}{k!} \eulerian{k}{j} \right] \nonumber
        \\&= \frac{1}{2} - \frac{A_k(-1)}{2k!} 
        = \frac{1}{2} - (-1)^{\lfloor k/2 \rfloor}\frac{E_k}{2k!},\label{equ:odd_derivation}
    \end{align}
    which is equal to $\frac{1}{2} + \frac{E_k}{2k!}$.
    
    \item \label{item:odd-even} If $k \modulo{4} = 1$, we want the probability that the sum modulo $2$  lies in $[0,1)$:
    \[ \Pr[\oddevent_k] = F_k(k) - F_k(k-1) + F_k(k-2) - \ldots - F_k(2) + F_k(1). \]
    Note that $F_k(k) = 1$ and the value subtracted from $F_k(k)$ is equal to the alternating sum considered in case~\ref{item:odd-odd}. Using \cref{equ:odd_derivation} we find that this is equal to $1 - \left(\frac{1}{2} - (-1)^{\lfloor k/2 \rfloor}\frac{E_k}{2k!}\right) = \frac{1}{2} + \frac{E_k}{2k!}$.
    
    \item If $k \modulo{4} = 2$, we want the probability that the sum modulo $2$  lies in $[1/2,3/2)$:
    \[ \Pr[\oddevent_k] = F_k(k-1/2) - F_k(k-3/2) + \ldots + F_k(3/2) - F_k(1/2). \]
    We perform a similar calculation to case~\ref{item:odd-odd} with \cref{equ:alternating-sum-b} and \cref{cor:irwin-hall-eulerian-b}. Here we make use of the fact that the Irwin-Hall distribution is symmetrical: $F_k(k-j) = 1 - F_k(j)$. Additionally, we use the fact that $\eulerian{B_k}{0} = \eulerian{B_k}{k} = 1$.
    \begin{align}
        \Pr[\oddevent_k] 
        &= \frac{1}{2} F_k(k - 1/2) - \frac{1}{2} \sum_{j=1}^{k-1} \left[ (-1)^j \left( F_k(j+1/2) - F_k(j - 1/2) \right) \right] - \frac{1}{2} F_k(1/2) \nonumber
        \\&= \frac{1}{2} F_k(k) - \frac{1}{2} \sum_{j=1}^{k-1} \left[ (-1)^j \frac{1}{2^k k!} \eulerian{B_k}{j} \right] - \frac{1}{2} F_k(1/2) - \frac{1}{2} F_k(1/2) \nonumber
        \\&= \frac{1}{2} - \frac{1}{2} \sum_{j=1}^{k-1} \left[ (-1)^j \frac{1}{2^k k!} \eulerian{B_k}{j} \right] - \frac{1}{2} \frac{1}{2^k k!} \eulerian{B_k}{0} - \frac{1}{2} \frac{1}{2^k k!} \eulerian{B_k}{k} 
         \nonumber 
        \\&= \frac{1}{2} - \frac{1}{2} \sum_{j=0}^{k} \left[ (-1)^j \frac{1}{2^k k!} \eulerian{B_k}{j} \right] = \frac{1}{2} - \frac{B_k(-1)}{2^{k+1} k!}
        = \frac{1}{2} - (-1)^{k/2} \frac{E_k}{2k!},\label{equ:even-derivation}
    \end{align}
    which is equal to $\frac{1}{2} + \frac{E_k}{2k!}$.
    
    \item If $k \modulo{4} = 0$, we want the probability that the sum modulo $2$  lies in $[-1/2,1/2)$:
    \[ \Pr[\oddevent_k] = F_k(k) - F_k(k-1/2) + F_k(k-3/2) + \ldots - F_k(3/2) + F_k(1/2). \]
    The same argument from case~\ref{item:odd-even} applies: subtracting the value found in \cref{equ:even-derivation} from $F_k(k) = 1$ gives us the probability $\frac{1}{2} + \frac{E_k}{2k!}$.
\end{enumerate}
 
\printbibliography

\end{document}